\renewcommand{\paragraph}{\subsubsection*}
\providecommand{\urlstyle}[1]{}
\providecommand{\doi}[1]{\href{http://dx.doi.org/#1}{\nolinkurl{doi:#1}}}
\setlist[enumerate]{noitemsep, topsep=0pt}
\newcommand{\mysqrt}[2]{\sqrt[\leftroot{2}\uproot{5}#1]{ #2}}
\setlist[itemize]{noitemsep, topsep=0pt}
\newcommand{\NN}{\mathbb{N}}
\newcommand{\ZZ}{\mathbb{Z}}
\newcommand{\QQ}{\mathbb{Q}}
\newcommand{\CC}{\mathbb{C}}
\newcommand{\FF}{\mathbb{F}}
\newcommand{\OO}{\mathcal{O}}
\newcommand{\mf}[1]{\mathfrak{#1}}
\DeclareMathOperator{\lcm}{lcm}
\DeclareMathOperator{\Tr}{Tr}
\DeclareMathOperator{\Disc}{Disc}
\DeclareMathOperator{\Gal}{Gal}
\newtheorem{theorem}{Theorem}
\newtheorem{example}{Example}
\newtheorem{lemma}{Lemma}
\newtheorem{proposition}[lemma]{Proposition}
\newcommand{\coNP}{\mbox{{\sf coNP}}}
\newcommand{\coRP}{\mbox{{\sf coRP}}}
\newcommand{\PSPACE}{\mbox{{\sf PSPACE}}}
\definecolor{Gray}{gray}{0.9}
\definecolor{LightCyan}{rgb}{0.88,1,1}
\newcommand{\bpp}{\mbox{{\sf BPP}}}
\begin{document}

\title{Identity Testing for Radical Expressions}
\author{Nikhil Balaji,$^1$ Klara Nosan,$^2$, \mbox{Mahsa
  Shirmohammadi,$^2$} \and \mbox{James Worrell$^3$}}
\address{\begin{minipage}{\textwidth}
$^1$~IIT Delhi, New Delhi, India\\
$^2$~Universit\'e Paris Cité, CNRS, IRIF, F-75013 Paris, France\\
$^3$~Department of Computer Science, University of Oxford, UK\end{minipage}}
\date{}

%\author{Nikhil Balaji}
%\email{nbalaji@cse.iitd.ac.in}
%\affiliation{%
%  \institution{IIT Delhi}
%  \city{Delhi}
%  \country{India}}
%
%\author{Klara Nosan}
%\email{nosan@irif.fr}
%\affiliation{%
%  \institution{Universit\'e Paris Cité, CNRS, IRIF}
%  \city{Paris}
%  \country{France}}
%
%\author{Mahsa Shirmohammadi}
%\email{mahsa@irif.fr}
%\affiliation{%
%  \institution{Universit\'e Paris Cité, CNRS, IRIF}
%  \city{Paris}
%  \country{France}}
%  
%\author{\mbox{James Worrell}}
%\email{jbw@cs.ox.ac.uk}
%\affiliation{%
%  \institution{Department of Computer Science, University of Oxford}
%  \city{Oxford}
%  \country{UK}}

\maketitle

\DeclareRobustCommand{\gobblefive}[5]{}
\DeclareRobustCommand{\gobblenine}[9]{}
\newcommand*{\SkipTocEntry}{\addtocontents{toc}{\gobblenine}}

\renewcommand{\shortauthors}{Balaji, Nosan, Shirmohammadi and Worrell}

\begin{abstract}
  We study the \emph{Radical Identity Testing} problem (RIT): Given an algebraic circuit representing a polynomial $f\in \mathbb{Z}[x_1, \dots, x_k]$ and nonnegative integers $a_1, \dots, a_k$ and $d_1, \dots,$ $d_k$, written in binary, test whether  the polynomial vanishes at the \emph{real radicals} $\mysqrt{d_1}{a_1}, \dots,\mysqrt{d_k}{a_k}$, i.e., test whether $f(\mysqrt{d_1}{a_1}, \dots, \mysqrt{d_k}{a_k}) = 0$. We place the problem in  {\coNP} assuming the Generalised Riemann Hypothesis (GRH), improving on the straightforward {\PSPACE} upper bound obtained by reduction to the existential theory of reals. Next we consider a restricted version, called $2$-RIT, where the radicals are square roots of prime numbers, written in binary. It was known since the work of Chen and Kao~\cite{chen-kao} that $2$-RIT is at least as hard as the polynomial identity testing problem, however no better upper bound than {\PSPACE} was known prior to our work. We show that $2$-RIT is in {\coRP} assuming GRH and in {\coNP} unconditionally. 
Our proof relies on theorems from algebraic and analytic number theory, such as the Chebotarev density theorem and quadratic reciprocity.

\end{abstract}

%\keywords{Algebraic Circuits, Computational Complexity, Number Fields, Polynomial Identity Testing, Randomised Algorithms}

\maketitle

\section{Introduction}
Identity testing is a fundamental algorithmic question with numerous applications.  In \emph{algebraic} identity testing, the task is to determine the zeroness of an expression evaluated in a given ring.  This problem has many different versions, depending on the syntax for giving the expression and the ring in which the evaluation is to be performed.

A basic instance of algebraic identity testing is the Arithmetic Circuit Identity Testing (ACIT) problem, which involves deciding the zeroness of an integer represented as an arithmetic circuit.  The difficulty in this problem is that the integer may have bit-length exponential in the size of the circuit.  However, the problem admits a randomised polynomial-time algorithm: one evaluates the circuit modulo a prime that is randomly chosen in a certain range.  The ACIT problem turns out to be polynomial-time interreducible with the problem of determining zeroness of an arithmetic circuit evaluated in the  ring of multivariate polynomials: the so-called Polynomial Identity Testing (PIT) problem~\cite{Allender06onthe}.  
Over the years, {PIT} has found diverse applications in algorithm design; a few well-known examples are program testing~\cite{demillo-lipton}, detecting perfect matchings~\cite{lovasz}, factoring polynomials~\cite{kaltofen}, pattern matching in compressed texts~\cite{konig-lohrey, berman-pattern}, primality testing~\cite{agrawal-biswas, aks}, equivalence and minimization of weighted automata~\cite{kiefer-minimize-wa, kiefer-marusic-worrell} and linear recurrence sequences~\cite{COW,abmvv}.
Whether PIT admits a deterministic polynomial-time algorithm is one of the central open questions in complexity theory.

The topic of this paper is \emph{radical identity testing}, that is, testing zeroness of an expression in radicals, represented by an algebraic circuit.
This generalizes the ACIT problem: the evaluation of the circuit occurs in the ring of integers of a number field, rather than the ring of integers of the rational numbers.  Historically, expressions in radicals played a central role in solving polynomial equations.  Such expressions also naturally arise in optimization problems on graphs embedded in Euclidean space, such as the Euclidean Minimal Spanning Tree and Traveling Salesperson problems \cite{ggj}.  Another source of radical expressions arises from an approach by Chen and Kao~\cite{chen-kao} to derandomizing PIT.  Their idea was to test the zeroness of a multivariate polynomial by evaluating it on a certain randomly chosen radical expression.  In their method, by construction, the radical expressions are such that the result of the evaluation is zero if and only if the polynomial is identically zero.  The challenge then becomes to determine the zeroness of the resulting expression in radicals, for which Chen and Kao use numerical approximation.  This approach allowed to reduce the number of random bits required for the problem compared with previously known methods, such as those based on the Schwarz-Zippel lemma.

In this paper, we introduce a symbolic approach to the problem of identity testing algebraic integers in the number field generated over $\mathbb{Q}$ by \emph{real radicals}; given an algebraic circuit  representing a multivariate polynomial $f(x_1,\ldots,x_k)\in\ZZ[x_1,\ldots,x_k]$, and  radical inputs $\mysqrt{d_1}{a_1},\ldots,\mysqrt{d_k}{a_k}$ where the radicands $a_i$, and exponents $d_i$ are nonnegative integers, the \emph{Radical Identity Testing} (RIT) problem is to decide whether $f(\mysqrt{d_1}{a_1},\ldots,\mysqrt{d_k}{a_k}) = 0$. The problem is easily seen to admit a {\PSPACE} upper bound by a reduction to the existential theory of reals~\cite{canny}: introduce a new formal variable for every gate of the circuit, add the equations $x_i^{d_i} - a_i = 0$ and $x_i > 0$ for every radical; RIT is now decided by checking if the resulting system of polynomial equations has a solution over the real numbers.  

Our symbolic algorithm places RIT in {\coNP}, assuming the generalised Riemann hypothesis (GRH).  The main idea behind our algorithm is that if 
$f(\mysqrt{d_1}{a_1},\ldots,\mysqrt{d_k}{a_k}) \neq 0$
then there is a polynomial-length polynomial-time checkable witness of this fact --- namely a prime $p$ and $\overline{\alpha}_1,\ldots,\overline{\alpha}_k \in \mathbb{F}_p$, satisfying $\overline{\alpha}_i ^ {d_i} \equiv a_i \pmod{p}$, such that $\overline{f}(\overline{\alpha}_1,\ldots,\overline{\alpha}_k )$ is non-zero, where $\overline{f}$ is the reduction of $f$ modulo~$p$. Crucial to our approach is the fact of \emph{joint transitivity}, which is the observation that the Galois group of the underlying real field acts jointly transitively on the roots of the various equations $x^{d_i} - a_i = 0$. This allows us to use any of the $d_i$ conjugates $\alpha_i$ of 
$\mysqrt{d_i}{a_i}$
over $\mathbb{F}_p$ in our symbolic algorithm to test whether $\overline{f}(\overline{\alpha}_1,\ldots,\overline{\alpha}_k ) = 0$. In fact, joint transitivity alone can be used in conjunction with a result of Koiran~\cite{KOIRAN1996273} to place RIT in the polynomial hierarchy assuming GRH.  Using Chebotarev's density theorem, and choosing a suitable prime, we improve this upper bound to show that RIT can be solved in {\coNP} assuming GRH.  

We next tackle a special case of RIT namely 2-RIT where the inputs to the circuit are square roots of distinct odd\footnote{
The formulation here corrects the published version~\cite{lics-version} by adding the condition that the prime be odd.} primes. The {PIT} algorithm in~\cite{chen-kao} gives a randomised numerical algorithm for the specific case of 2-RIT involving \emph{bounded} algebraic circuits\footnote{An algebraic circuit of size $s$ computing a multivariate polynomial is called bounded if the degree and bit-length of the coefficients of the polynomials computed at every gate of the circuit is bounded by a polynomial
function in~$s$.} at square roots of distinct prime numbers. However their method completely breaks down when the assumption of boundedness is dropped and the best known upper bound prior to our work was the same as that of the general RIT problem using Koiran's algorithm. We show that 2-RIT is in {\coRP} assuming GRH and in {\coNP} unconditionally. Our techniques are fundamentally different from those of ~\cite{chen-kao};  we use quadratic reciprocity and Dirichlet's theorem on the density of primes in arithmetic progressions.

\paragraph{\bf Related Work}
Balaji et al.~\cite{balaji2021cyclotomic} studied the \emph{cyclotomic identity testing}  problem, where the goal is to determine if an algebraic integer in the cyclotomic number field $\mathbb{Q}(\zeta_n)$ computed by a given algebraic circuit is zero. They show that the problem lies in the complexity class {\bpp} assuming the Generalised Riemann Hypothesis (GRH), and unconditionally in {\coNP}.
We further discuss the approach of \cite{balaji2021cyclotomic} in relation to our work in \Cref{sec:conp}.

Bl\"{o}mer~\cite{blomer98} gave a randomised algorithm to test if a bounded algebraic circuit evaluated at low degree radicals evaluates to zero. Central to such identity questions on algebraic numbers is the knowledge of the Galois group of the extension where the numbers live in. Algorithmic complexity of computing with Galois groups is investigated in ~\cite{landau-miller, landau, arvind-kurur}.

As in~\cite{blomer98}, our formulation of RIT does not permit nesting of radicals, that is, expressions such as $\sqrt{6+4\sqrt{2}}$.  Computational problems associated with nested radicals are treated in~\cite{Blomer92,Blomer97}.

\paragraph{\bf The Sum of Square Roots  problem.} Closely related to $2$-RIT is the square root sum problem where the goal is to infer the sign of a given linear combination of square roots. This is a notorious open problem in numerical analysis and computational geometry. It is known to be decidable in the \emph{Counting Hierarchy} \cite{Allender06onthe}, and the question of determining its precise computational complexity remains open since it was explicitly posed by Garey, Graham and Johnson~\cite{ggj} in 1976. Along with the related problem  of determining the sign of an integer computed by a variable-free algebraic circuit, the square root sum problem is frequently used as a tool for proving hardness and obtaining upper bounds in quantitative verification~\cite{esparza.solving, HKbudget, KWparallel, BLW}, algorithmic game theory~\cite{etessami.yannakakis, UW, CIJ}, formal language theory and logic~\cite{HKL, LOW}. We refer the interested reader to~\cite{problem33, kayal-saha} and the references therein for a discussion on the complexity status of the square root sum problem and related geometric questions.

\paragraph{\bf The Elementary Constant problem.} A related but harder identity testing problem that is a fundamental question at the intersection of transcendental number theory and model theory is the \emph{Elementary Constant problem}~\cite{richardson92}: given a complex number built from rationals using addition, multiplication and exponentiation determine if it is zero. Such numbers are called Elementary numbers, and they form an algebraically closed subfield of the complex numbers. While there is a decision procedure assuming Schanuel's conjecture~\cite{richardson97} for the elementary constant problem, the problem is not known to be decidable unconditionally, and no significant complexity lower/upper bounds are known.

\paragraph{\bf The Compressed Word problem.} 
Arguably the earliest and most fundamental result on identity testing questions are word problems on finitely generated groups and semigroups \cite{dehn1912transformation}. There is a large body of work on the \emph{Compressed Word problem} \cite{lohrey2014compressed} which studies the computational complexity of word problems when the word is represented succinctly via a \emph{straight line program}; see \cite{konig-lohrey, balaji2021cyclotomic} for relations between such word problems and arithmetic identity testing.

\section{Background and Overview}
\label{sec:overview}

In this section, we give a high-level overview of our main results and the main techniques used in our  algorithms.  
 We also introduce some of the  definitions and notations along the
way; we refer the reader to~\Cref{appendix:preliminaries} and~\cite{stewart2001algebraic, StewartBook} for more details. 

\paragraph{Algebraic Circuits.}
Let $X=\{x_1,\ldots,x_k\}$ be a set of commutative variables.
An \emph{algebraic circuit}  over~$X$ is a directed acyclic graph  with labelled vertices and edges. Vertices of in-degree zero (leaves) are labelled with variables in $X$ and $-1$; and the remaining vertices have labels in $\{+,\times\}$. Moreover, the incoming edges to $+$-vertices have labels in $\mathbb{Z}$, that is, the $+$-gates compute integer-weighted sums. There is a unique vertex of out-degree zero which determines the output of the circuit, a $k$-variate polynomial, computed in an obvious bottom-up manner.
The \emph{size} of a circuit is the number of its gates; see Figure~\ref{fig:circuit}.
The \emph{degree} of a circuit $C$ is defined inductively as follows: input gates have degree $1$, the degree of an addition gate is the maximum of the degrees of its inputs, the degree of a multiplication gate is the sum of the degrees of its inputs, and the degree of $C$ is the degree of the output gate. Note that the degree of an algebraic circuit is an upper bound on the degree of its underlying polynomial.
Thus the total degree and the bit-length of the coefficients of a polynomial represented by a circuit is at most exponential in the size of the circuit.

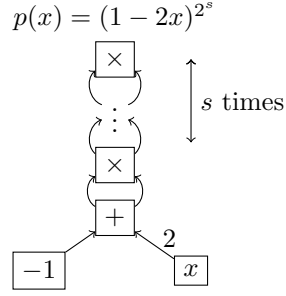
\begin{figure}[t]
\begin{center}	
\begin{tikzpicture} 
\node[draw,label={above:$p(x)=(1-2x)^{2^s}$}] at (2,2.1) (A4) {$\times$};
 \node[draw=none] at (2,1.4) (A3) {$\vdots$};
\draw[<-,  bend left=60]  (A4) edge (A3) ;
\draw[<-,  bend right=60]  (A4) edge  (A3) ; 
  \node[draw] at (2,.7) (A2) {$\times$};
\draw[<-,  bend left=60]  (A3) edge (A2) ;
\draw[<-,  bend right=60]  (A3) edge (A2) ;   
 \node[draw] at (2,0) (A1) {$+$};
\draw[<-,  bend left=60]  (A2) edge (A1) ;
\draw[<-,  bend right=60]  (A2) edge  (A1) ; 

\node[draw] at (1,-.7) (c1) {$-1$};
\node[draw] at (3,-.7) (x) {$x$};

\draw[<-]  (A1) edge (c1) ;
\draw[<-]  (A1) edge node[near start,right]{$~2$} (x) ;
\draw[<->]  (3,2.1) edge node[midway,right]{$s$ times} (3,1) ;
 
\end{tikzpicture}
	\end{center}
	\caption{An  algebraic circuit 
	 computing   the polynomial $p(x)=(1-2x)^{2^s}$, with the highest degree monomial $x^{2^s}$ and the coefficients double exponential $2^{2^s}$ in its size~$s+3$.}\label{fig:circuit}
\end{figure}

\subsection{Radical Identity testing}

Let  $f(x_1,\ldots,x_k)$ be  a multivariate polynomial computed by an algebraic circuit, and  $\mysqrt
{d_1}{a_1},\ldots,\mysqrt{d_k}{a_k}$ be  $k$ radicals, where the radicands $a_i\in \NN$, and the exponents~$d_i\in \NN$ are nonnegative integers,  written in binary.
The \emph{Radical Identity Testing} (RIT) problem asks whether 
\[f( \mysqrt{d_1}{a_1},\ldots,\mysqrt{d_k}{a_k} )=0.\]  
We define the size of an RIT instance as the maximum of
the size of the circuit  and the bit-length of the radicands $a_i$ and exponents $d_i$.

\paragraph{The $2$-RIT problem.} This is a special case of RIT 
where all input radicals $\sqrt{a_1},\ldots,\sqrt{a_k}$ are square roots and all radicands $a_i$ are rational primes, written in binary.

\paragraph{The Bounded-RIT problem.}
This variant is defined exactly as the RIT problem,
except that the input also includes an upper bound on the  degree of the circuit that is given in unary. Thus in Bounded-RIT the degree of the circuit is at most the size of the instance.

\subsection{Algebraic Number fields}

Recall that  $\alpha\in \CC$ is \emph{algebraic} if it is a root of a non-zero polynomial in $\QQ[x]$. 
The minimal polynomial of $\alpha$ (over~$\QQ$) is
the unique monic polynomial in $\QQ[x]$ (that is a polynomial with the leading coefficient~$1$) having $\alpha$ as a root.
The degree of $\alpha$ is defined to be the degree of its minimal polynomial, and denoted by $\deg \alpha$. 
 If the minimal polynomial has integer coefficients then we say that $\alpha$ is an algebraic integer.
 Given $a, d \in \NN$, the radical $\mysqrt{d}{a}$ is an algebraic integer.

An \emph{algebraic number field} (or simply number field) $K$ is a finite degree field extension of~$\QQ$, that is, $K$ is a field that contains $\QQ$ and has finite dimension when considered as a vector space over $\QQ$. 
The dimension of this vector space is called the degree of the extension and is denoted by $[K:\QQ]$. We further denote by $\OO_K$  the subring of $K$ comprised by the
algebraic integers
in $K$.
The ring $\OO_K$ is a finitely generated free abelian group. 

Given $n\in \NN$, we write $\zeta_n$ for the primitive complex $n$-th root of unity $\zeta_n = e^{\frac{2\pi i}{n}}$.
The $n$-th cyclotomic polynomial, denoted by~$\Phi_n$, is the minimal polynomial of $\zeta_n$. 
The number field $\QQ(\zeta_n)$ is an extension of $\QQ$ obtained by adjoining $\zeta_n$; the degree $[\QQ(\zeta_n):\QQ]$ of the extension is the degree of $\Phi_n$. It is well known that the ring of  integers of $\QQ(\zeta_n)$ is the ring $\ZZ[\zeta_n]$
that is generated over $\ZZ$ by $\zeta_n$.

Given a polynomial $f\in \mathbb{Q}[x]$, the splitting field 
$K$ of $f$ is the subfield of $\mathbb{C}$ generated by 
the roots of $f$.  We say that a field extension $K/\mathbb{Q}$ is a Galois extension if $K$ is the splitting field of some 
polynomial.  
The Galois group of $K$ over~$\QQ$, denoted by $\Gal(K/\QQ)$, is comprised of all automorphisms of $K$ that fix $\QQ$ pointwise. The image of $\alpha \in K$ under an automorphism  $\sigma\in \Gal(K/\QQ)$ is
called a Galois conjugate of $\alpha$, in other words, the Galois conjugates of $\alpha$ are precisely all the roots of its minimal polynomial. 
The Galois conjugates of $\zeta_n$ are all its powers~$\zeta^k_n$ with $\gcd(k,n)=1$. 
The norm of $\alpha$ over a Galois extension~$K$ is defined as the product of all the Galois conjugates of $\alpha$:
\[N_{K/\QQ}(\alpha)=\prod_{\sigma\in \Gal(K/\QQ)}\sigma(\alpha).\]
Note that the norms of all Galois conjugates are equal,
and the norm of an algebraic integer is always a rational integer itself. 

The minimal polynomial of the real radical $\mysqrt{d}{a}$ over $\QQ$ has the form
$x^{t}-c$ where $t$ is the smallest positive integer such that there exists an integer $c$ with $\mysqrt{d}{a}=\mysqrt{t}{c}$.
The conjugates of $\mysqrt{d}{a}$ are then $\zeta^j_t\mysqrt{t}{c}$ with $1\leq j \leq t$.
The splitting field $K$ of
$\prod_{i=1}^k (x^{d_i}-a_i)$ is $\QQ(\mysqrt{d_1}{a_1}, \ldots, \mysqrt{d_k}{a_k},\zeta_d)$,  obtained by adjoining the radicals $\mysqrt{d_i}{a_i}$ and  a primitive root of unity $\zeta_d$, with order $d=\lcm(d_1,\ldots,d_k)$, to~$\QQ$.

Given a number field~$K$,
the ring of integers $\OO_K$ may not be a unique factorisation domain. However we do have unique factorisation of ideals 
into products of prime ideals in $\OO_K$.  Recall here that an
ideal $\mathfrak{p} \subset \OO_k$ is called a prime ideal if for all algebraic integers $\alpha$ and $\beta$, if $\alpha\beta \in \mathfrak{p}$, then at least one of $\alpha$ and $\beta$ is in $\mathfrak{p}$. 
We say that a prime $p\in \mathbb{Z}$ splits completely in
$\OO_K$ 
if we can write $p\OO_K$ as:
\[ p\mathcal{O}_K = \mathfrak{p}_1 \cdots \mathfrak{p}_n \]
where the $\mathfrak{p}_i$ are distinct prime ideals of $\OO_K$ and $n = [K:\mathbb{Q}]$.
In the case when $K$ is the splitting field of a polynomial $f \in \mathbb{Z}[X]$
over $\mathbb{Q}$, a sufficient condition for $p$ to split completely 
over $K$ is that $f$ split into distinct linear factors over the field 
$\mathbb{Q}_p$ of $p$-adic numbers.

\subsection{Symbolic Algorithm}
Our approach to identity testing 
of algebraic numbers generalises the well-known
fingerprinting procedure for solving ACIT, which involves evaluating an arithmetic circuit modulo a randomly chosen prime. The soundness of the latter approach relies on the fact that if the integer~$z\in \ZZ$ computed by the circuit is non-zero, 
then one can with high probability randomly sample a prime $p\in \mathbb{Z}$ of size polynomial in the bit-length of the input such that $Z$ is non-zero modulo $p$.

\bigskip
\noindent {\bf Placing RIT in \coNP:}
Given input radicals~$\mysqrt{d_1}{a_1},\ldots,$ $\mysqrt{d_k}{a_k}$ to RIT, we first reduce RIT to the case that the radicands~$a_i$ are pairwise coprime numbers and that the minimal polynomials of the input radicals $\mysqrt{d_i}{a_i}$ are $x^{d_i}-a_i$ for all $i= 1,\ldots,k$.  To do this we generalise the reduction in~\cite{blomer98} and use the \emph{factor refinement} algorithm~\cite{BACH1993199}; see \Cref{appendix:reduction}.
Denote by $K$ the splitting field  of $\prod_{i=1}^k (x^{d_i}-a_i)$.

In~\Cref{fig:conp_RIT}, we present  a conceptually simple algorithm for deciding RIT and  place this problem in {\coNP}.
Our {\coNP} procedure 
 is similar in spirit to the {\coRP}
procedure for the ACIT problem, alluded to above, inasmuch as it involves
evaluating the circuit modulo a prime ideal.
 The proof of the correctness though is not as straightforward and relies on characteristics of the Galois group of~$K$ and
 concepts from number theory. 

In the case of RIT we think of the evaluation as occurring in
the ring of integers $\mathcal{O}_K$. Specifically, the idea is to work modulo a prime ideal~$\mathfrak{p}$ of
$\mathcal{O}_K$ such that the quotient $\mathcal{O}_K / \mathfrak{p}$ is a
finite field~$\mathbb{F}_p$ for some rational prime $p$.
Note that the algorithm works directly with the finite field $\mathbb{F}_p$---the prime ideal 
$\mathfrak{p}$ is implicit in the choice of radicals in 
Line~1, and ideals in $K$ only feature in the 
proof of correctness of the algorithm.   Overall, the key idea is that 
if 
$f(\mysqrt{d_1}{a_1},\ldots,\mysqrt{d_k}{a_k}) \neq 0$
then there is a polynomial-length polynomial-time checkable witness 
of this fact --- namely
a prime~$p$ and 
$\overline{\alpha}_1,\ldots,\overline{\alpha}_k \in \mathbb{F}_p$, satisfying $\overline{\alpha}_i ^ {d_i} \equiv a_i \pmod{p}$, such that $\overline{f}(\overline{\alpha}_1,\ldots,\overline{\alpha}_k )$ is non-zero, where $\overline{f}$ is the reduction of $f$ modulo~$p$.

\begin{figure*}[t]
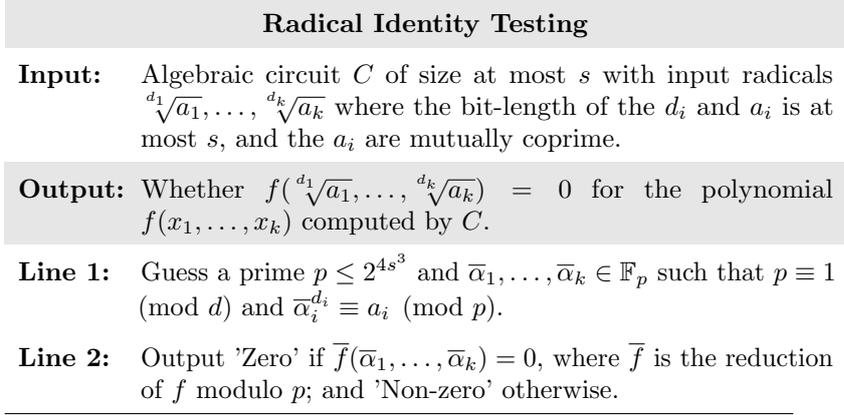

    \centering
    \begin{tabularx}{0.82\textwidth}{ p{0.1\textwidth} p{0.72\textwidth} }
   \rowcolor{Gray}
    \multicolumn{2}{c}{\textbf{Radical Identity Testing}} \\
    \textbf{Input:} & Algebraic circuit $C$
    of size at most $s$ with input radicals $\mysqrt{d_1}{a_1},\ldots,\mysqrt{d_k}{a_k}$ where
    the bit-length of the $d_i$ and $a_i$ is at most $s$,
    and the $a_i$ are mutually coprime. 
    \\  \rowcolor{Gray}
    \textbf{Output:} & Whether $f(\mysqrt{d_1}{a_1},\ldots,\mysqrt{d_k}{a_k})=0$ for the polynomial $f(x_1,\ldots,x_k)$ computed by $C$.
    \\  
        \textbf{Line 1:} & 
            Guess a prime $p \leq 2^{4s^3}$ and $\overline{\alpha}_1,\ldots,\overline{\alpha}_k \in \FF_p$ such that $p \equiv 1 \pmod{d}$ and $\overline{\alpha}_i^{d_i} \equiv a_i \pmod{p}$. 
    \\ 
    \textbf{Line 2:} &
            Output 'Zero' if $\overline{f}(\overline{\alpha}_1,\ldots,\overline{\alpha}_k) = 0$, where $\overline{f}$ is the reduction of $f$ modulo $p$; and 'Non-zero' otherwise.
    \\ \hline
    \end{tabularx}
    \caption{Our nondeterministic algorithm for the complement of RIT.}
    \label{fig:conp_RIT}
\end{figure*}

We first summarise the algorithm, see \Cref{fig:conp_RIT}, and then outline the correctness argument.

\begin{enumerate}
      \item We find a rational prime $p$ such that
    each polynomial among $x^{d_1}-a_1, \ldots, x^{d_k} - a_k$ splits into 
    distinct linear factors over $\mathbb{F}_p$.  We can find such a prime $p$ in non-deterministic polynomial time
    in the length of the problem instance by (i)~choosing a prime~$p$ such that $p\equiv 1 \bmod d$, which 
    ensures that $\mathbb{F}_p$ contains a primitive $d$-th root of unity,    
and (ii)~guessing and checking 
    $\overline{\alpha}_1,\ldots,\overline{\alpha}_k \in \mathbb{F}_p$ such that 
    $\overline{\alpha}_i$ is any root in $\mathbb{F}_p$ of the polynomial $x^{d_i}-a_i$.

    \item We evaluate the polynomial $\overline{f}(\overline{\alpha}_1,\ldots,\overline{\alpha}_k)$ in $\mathbb{F}_p$, where $\overline{f}(x_1,$ $\ldots,x_k) \in \mathbb{F}_p[x_1,\ldots,x_k]$ is the reduction of $f$ modulo $p$.
    If the result  of this computation is zero then we report 'Zero'; otherwise we report 'Non-zero'. 
    \end{enumerate}

    Having summarised the algorithm we can now state our main result: 
 
\begin{theorem}\label{th:rit-conp}
	The RIT problem is in {\coNP} under GRH.
\end{theorem}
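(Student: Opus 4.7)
The plan is to verify soundness and completeness of the nondeterministic algorithm in Figure~\ref{fig:conp_RIT}, and to confirm that the prime bound $p \leq 2^{4s^3}$ suffices whenever $f(\mysqrt{d_1}{a_1},\ldots,\mysqrt{d_k}{a_k}) \neq 0$. After applying the preprocessing from \Cref{appendix:reduction}, I may assume throughout that the radicands $a_i$ are pairwise coprime and that each $x^{d_i}-a_i$ is the minimal polynomial of $\mysqrt{d_i}{a_i}$, so that $K := \QQ(\mysqrt{d_1}{a_1},\ldots,\mysqrt{d_k}{a_k},\zeta_d)$ with $d = \lcm(d_1,\ldots,d_k)$ is the splitting field of $\prod_{i}(x^{d_i}-a_i)$.

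For \emph{soundness}, I would first show that any prime $p$ satisfying Line~1 must split completely in $K$: the congruence $p \equiv 1 \pmod d$ puts all $d$-th (hence all $d_i$-th) roots of unity in $\FF_p$, and the existence of the $\overline{\alpha}_i$ ensures that every $x^{d_i}-a_i$ splits into distinct linear factors over $\FF_p$; by the standard splitting criterion for compositums this forces complete splitting of $p$ in $K$. Fix any prime ideal $\mathfrak{p} \subset \OO_K$ above $p$, yielding a reduction map $\phi : \OO_K \twoheadrightarrow \OO_K/\mathfrak{p} \cong \FF_p$. Joint transitivity of $\Gal(K/\QQ)$ then supplies, for the guessed tuple $(\overline{\alpha}_1,\ldots,\overline{\alpha}_k)$, an automorphism $\sigma \in \Gal(K/\QQ)$ with $\phi(\sigma(\mysqrt{d_i}{a_i})) = \overline{\alpha}_i$ for every $i$. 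Applying $\sigma$ to the hypothesis $f(\mysqrt{d_1}{a_1},\ldots,\mysqrt{d_k}{a_k}) = 0$ and reducing modulo $\mathfrak{p}$ forces $\overline{f}(\overline{\alpha}_1,\ldots,\overline{\alpha}_k) = 0$.

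For \emph{completeness}, let $\beta := f(\mysqrt{d_1}{a_1},\ldots,\mysqrt{d_k}{a_k}) \neq 0$, a nonzero element of $\OO_K$. A routine circuit induction, using $|\sigma(\mysqrt{d_i}{a_i})| = a_i^{1/d_i} \leq 2^s$, gives $|\sigma(\beta)| \leq 2^{s \cdot 2^s}$ for every $\sigma \in \Gal(K/\QQ)$; combined with $[K:\QQ] \leq 2^{O(s^2)}$, the nonzero rational integer $N_{K/\QQ}(\beta)$ satisfies $\log|N_{K/\QQ}(\beta)| \leq 2^{O(s^2)}$, and hence has at most $2^{O(s^2)}$ distinct prime divisors. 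Whenever $p$ splits completely in $K$ and $p \nmid N_{K/\QQ}(\beta)$, $\beta$ lies outside every prime ideal above $p$, so the reduction of $\overline{f}$ evaluated at the images of the $\mysqrt{d_i}{a_i}$ is nonzero---a valid certificate for the algorithm.

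To produce such a prime with $p \leq 2^{4s^3}$ is where GRH enters. I would invoke the effective Chebotarev theorem of Lagarias and Odlyzko: under GRH, for any conjugacy class $C \subseteq \Gal(K/\QQ)$, the count of unramified $p \leq X$ with Frobenius in $C$ equals $\tfrac{|C|}{[K:\QQ]} \mathrm{li}(X) + O\!\left( \tfrac{|C|}{[K:\QQ]} X^{1/2}(\log|\Disc(K)| + [K:\QQ]\log X)\right)$. Specialising to $C = \{\id\}$ and substituting the bounds $[K:\QQ] \leq 2^{O(s^2)}$ and $\log|\Disc(K)| \leq 2^{O(s^2)}$ coming from the explicit generators of $K$, one checks that $X = 2^{4s^3}$ makes the main term dominate both the error term and the $2^{O(s^2)}$ forbidden prime divisors of $N_{K/\QQ}(\beta)$. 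The main obstacle lies precisely in this last calibration: one must simultaneously bound $[K:\QQ]$, $\log|\Disc(K)|$, and $\log|N_{K/\QQ}(\beta)|$ in terms of $s$, and then verify that $2^{4s^3}$ genuinely leaves room once the effective Chebotarev error term is absorbed.
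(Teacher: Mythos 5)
Your proposal is correct and follows essentially the same route as the paper: reduce to coprime radicands, use the splitting criterion to show the guessed prime splits completely in $K$, invoke joint transitivity to make the finite-field evaluation sound for any choice of $\overline{\alpha}_i$, bound $N_{K/\QQ}(\beta)$, $[K:\QQ]$ and $\Disc(K)$, and then apply effective Chebotarev under GRH to guarantee a suitable prime $p\leq 2^{4s^3}$. The paper carries out the calibration you flag as the main remaining work via its Lemmas~\ref{lem:norm}, \ref{lem:primitive_el}, \ref{lem:discriminant} and \ref{prop:bound_p_conp} (bounding the discriminant through the order $\ZZ[\theta]$ for a primitive element $\theta$), but the strategy and key ingredients are the same.
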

    
 Below, we give a high-level idea of the correctness of~\Cref{th:rit-conp}.
   The first element of the correctness proof of the {\coNP} algorithm  is to argue that the prime $p$ chosen in Item~1 completely splits
    in the ring of integers~$\mathcal{O}_K$.
    In this situation, for any prime-ideal factor $\mathfrak{p}$ of $p$, each quotient field $\mathcal{O}_K / \mathfrak{p}$
    is isomorphic to the finite field $\mathbb{F}_p$.  
    By standard results in algebraic number theory we know that $p$ completely splits in $\mathcal{O}_K$
    if each polynomial $x^{d_1}-a_1,\ldots,x^{d_k}-a_k$ splits into linear factors over the field $\mathbb{Q}_p$ of $p$-adic numbers.
    But the latter requirement is guaranteed by Hensel's Lemma in tandem with Conditions 1(i) and 1(ii) that determine the choice of $p$ in the algorithm.
    In more detail, 
    Condition 1(i), that $p \equiv 1 \bmod d$, entails that $\mathbb{F}_p$ contains 
    a primitive $d$-th root of unity. 
Indeed, since the powers of a root of unity are also all  roots of unity themselves, and since the multiplicative group $\FF_p^{*}$ is cyclic, it is clear that $\FF_p^{*}$ contains
a primitive $d$-th root of unity just in case $d \mid p-1$. 
      In combination with Condition 1(ii), that each of the polynomials 
    $x^{d_1}-a_1, \ldots, x^{d_k} - a_k$
    has a root in $\mathbb{F}_p$, we can conclude that
    each of the above polynomials in fact splits into distinct linear factors over $\mathbb{F}_p$.
    Then Hensel's Lemma allows us to lift this factorisation over $\mathbb{F}_p$ into a factorisation 
    into distinct linear factors over $\mathbb{Q}_p$.

    The second element of the correctness proof concerns the choice of 
    $\overline{\alpha}_1,\ldots,\overline{\alpha}_k$ in $\mathbb{F}_p$.   In particular,
    we argue that the correctness of the algorithm does not rely, for $i=1,\ldots,k$, on a specific choice of $\overline{\alpha}_i$
    among the $d_i$ roots of $x^{d_i} - a_i$ in $\mathbb{F}_p$.  This argument is based on the fact
    that the Galois group
    $\mathrm{Gal}(K / \mathbb{Q})$ acts transitively on the set 
    \[ \{ (\alpha_1,\ldots,\alpha_k) \in K^k : \alpha_1^{d_1}=a_1 \wedge \cdots \wedge \alpha_k^{d_k} = a_k \} \]
   	   	 That is, for any two $k$-tuples $\tau_1$ and $\tau_2$ in the set above, there exists an automorphism $g$ in $\Gal(K/\QQ)$ such that $g(\tau_1) = \tau_2$.
    We will call this property joint transitivity; see \Cref{lem:jtrans}.

     Now for every prime ideal factor $\mathfrak{p}$ of $p\mathcal{O}_K$ there is a
surjective homomorphism $\varphi: \mathcal{O}_K \rightarrow \mathbb{F}_p$ with kernel
$\mathfrak{p}$.   For each choice of $\mathfrak{p}$ and for all $i=1,\ldots,k$, the corresponding homomorphism 
$\varphi$ maps $\alpha_i$ to some root $\overline{\alpha}_i$ of $x^{d_i} - a_i$ in $\mathbb{F}_p$.
Conversely, using joint transitivity, we are able to show that every
mapping $\alpha_1 \mapsto \overline{\alpha}_1, \ldots, \alpha_k \mapsto \overline{\alpha}_k$
where, for $i=1,\ldots,k$, $\overline{\alpha}_i$ is an arbitrary root of $x^{d_i} - a_i$ in $\mathbb{F}_p$,
arises from the quotient map by some prime ideal factor of $p$.
We conclude that the value $\overline{f}(\overline{\alpha}_1,\ldots,\overline{\alpha}_k)$ in Item 2 is 
the image of $f(\alpha_1,\ldots,\alpha_k)$ under the quotient map $\OO_K \to \OO_K / \mathfrak{p}$ for some prime ideal $\mathfrak{p}$.
    
    It follows from the line immediately above that
    the algorithm has no false positives: if $f(\alpha_1,\ldots,\alpha_k)=0$ in $K$ then certainly 
    $\overline{f}(\overline{\alpha}_1,\ldots,\overline{\alpha}_k)= 0$.
    We moreover show that for a suitable choice of prime $p$, namely such that $p$ does not
    divide the norm of~$f(\alpha_1,\ldots,\alpha_k)$ over $K/\mathbb{Q}$,
    the converse holds:
    if $f(\alpha_1,\ldots,\alpha_k)\neq 0$ in $K$
    then
    $\overline{f}(\overline{\alpha}_1,\ldots,\overline{\alpha}_k)\neq 0$
    in $\mathbb{F}_p$.
For more details, see \Cref{th:RIT-soundess}.

A quantitative version of the Chebotarev density theorem
guarantees that we can find such a prime of polynomial bit-length that splits.
Informally speaking, Chebotarev's density theorem states that the set of rational primes $p$ that split completely in $\OO_K$ has density $\frac{1}{|\Gal(K/\QQ)|}$.
The original statement of the theorem~\cite{Stevenhagen95chebotarevand}
is asymptotic, whereas for our algorithm we use its quantitative version in order to obtain a bound on the number of primes $p$ of size polynomial in the bit-length of the input such that $p\OO_K$ completely splits, which requires GRH~\cite{lagariasO, serre1981quelques}.
We use the bound in combination with a bound on the norm of $f(\alpha_1,\ldots,\alpha_k)$ to ensure we can find at least one small prime that will not divide the norm and ensure our computation is sound.

\bigskip
\noindent {\bf Placing $2$-RIT in {\coRP}:}
We improve the obtained {\coNP} bound for  RIT  in the case of $2$-RIT, wherein all input radicals are square roots and all radicands $a_i$ are 
rational odd primes
(written in binary).
We show that $2$-RIT is in {\coRP} under GRH, and that it is in {\coNP} unconditionally.
Our {\coRP} algorithm  chooses a suitable random prime to be used in the symbolic computation. 
As discussed above,  the natural density of such suitable primes is $\frac{1}{|\Gal(K/\QQ)|}$, which is not  sufficient even for $2$-RIT.  However, we show that there is an arithmetic progression with a good density of primes, and that all primes in this progression are suitable.
To obtain this result, we rely on the law of quadratic reciprocity, as well as Dirichlet’s theorem on the density of primes in arithmetic progressions.

We recall that a suitable  prime $p$ for our symbolic algorithm  is  such that 
the minimal polynomials of all input radicals split into linear factors over~$\FF_p$. In the setting of $2$-RIT, the minimal polynomials of the inputs are in the form $x^2-q_i$ where $q_i$ is a rational prime. The above requires that the equations $x^2 \equiv q_i$ have  solutions in $\FF_p$, that is, $q_i$ is a quadratic residue modulo~$p$. 
But then by the law of quadratic reciprocity, $p$ is a quadratic residue modulo prime $q_i$ if and only if $q_i$ is a quadratic residue modulo~$p$, condition to  $p \equiv 1 \pmod{4}$. 
Roughly speaking, the latter holds if $p \equiv 1 \pmod{4q_i}$ (as $1$ is a perfect square in $\FF_p$). 

By the Chinese remainder theorem and a more detailed argument similar to the above, 
we show 
that there is an arithmetic progression $A\NN+b$ such that 
for all primes $p$ in the progression, all  polynomials 
$x^2-q_i$, with $i\in \{1,\cdots,k\}$, split into linear factors over~$\FF_p$. 
We further impose another
condition on $A$
and $b$, based on Pocklington's algorithm, such that 
a root of each $x^2-q_i$ can be computed in deterministic polynomial time in the length of the problem instance.   

Finally, 
we use  estimates on the density of primes in an arithmetic
progression, see~\Cref{th:prime_density}, to obtain the following result:

\begin{theorem}\label{th:squareRIT-coRP}
The $2$-RIT problem is in {\coRP} assuming GRH and in {\coNP} unconditionally.
\end{theorem}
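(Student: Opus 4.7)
The plan is to adapt the coNP algorithm of \Cref{th:rit-conp} to the 2-RIT setting by restricting the search for a suitable prime $p$ to a single arithmetic progression whose elements can be both certified (for coNP) and sampled efficiently (for coRP). Recall that a prime $p$ is \emph{suitable} when every minimal polynomial $x^2 - q_i$ splits into distinct linear factors over $\FF_p$, equivalently when every $q_i$ is a quadratic residue modulo $p$; together with the norm non-vanishing condition from \Cref{th:rit-conp}, suitability ensures that the symbolic evaluation in $\FF_p$ correctly decides whether $f(\sqrt{q_1},\ldots,\sqrt{q_k})=0$.

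The first step is to use the law of quadratic reciprocity to convert each condition ``$q_i$ is a QR mod $p$'' into a congruence on $p$. For an odd prime $q_i$, whenever $p \equiv 1 \pmod{4}$ we have $\bigl(\tfrac{q_i}{p}\bigr) = \bigl(\tfrac{p}{q_i}\bigr)$, and if furthermore $p \equiv 1 \pmod{q_i}$ then the latter symbol equals $1$ since $1$ is trivially a square; the case $q_i=2$ is covered by the supplementary law $\bigl(\tfrac{2}{p}\bigr)=1$ iff $p \equiv \pm 1 \pmod{8}$. Setting $A = 8 \cdot \prod_{q_i \text{ odd}} q_i$, the Chinese Remainder Theorem then yields an explicit residue $b$ such that every prime $p \equiv b \pmod{A}$ is suitable. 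A mild refinement of $b$ (for instance, forcing $p \equiv 3 \pmod{4}$ on a subset of the moduli, or an analogous Pocklington-style condition) additionally allows the square roots $\overline{\alpha}_i$ of each $q_i$ in $\FF_p$ to be computed deterministically in polynomial time, so no guessing of the $\overline{\alpha}_i$ is needed.

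For the coRP bound under GRH, I would invoke the GRH-conditional effective form of Dirichlet's theorem (\Cref{th:prime_density}) to locate, inside an explicit polynomial-size window, a positive-density set of primes congruent to $b$ modulo $A$. Since $[K:\QQ] \le 2^k$ and the height of $f$ evaluated at the input radicals is at most exponential in $s$, the norm $N_{K/\QQ}(f(\sqrt{q_1},\ldots,\sqrt{q_k}))$ is a rational integer whose logarithm is polynomial in $s$, and therefore has only polynomially many distinct prime divisors. Drawing a random prime uniformly from the window thus hits a suitable non-norm-dividing prime with constant probability, yielding a one-sided error algorithm. For the unconditional coNP bound, we instead nondeterministically guess a prime $p$ of polynomial bit-length with $p \equiv b \pmod{A}$, whose existence at polynomial size is guaranteed unconditionally by Linnik's theorem on the least prime in an arithmetic progression; primality is verifiable in P, and the rest of the verification is verbatim that of \Cref{th:rit-conp}, with correctness relying on joint transitivity (\Cref{lem:jtrans}) applied to the $2$-RIT splitting field.

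The main obstacle is the quantitative analysis: the modulus $A$ can be exponential in the instance size, so the effective prime-counting estimate for the progression $A\NN + b$ must be strong enough under GRH to locate a suitable prime inside a polynomial-size window with noticeable probability, while the norm bound must be tight enough that only a negligible fraction of candidates are disqualified by dividing $N_{K/\QQ}(f(\sqrt{q_1},\ldots,\sqrt{q_k}))$. Aligning these constants simultaneously, and rigorously verifying that the Pocklington-style deterministic square-root extraction succeeds for every prime in the chosen residue class, are the delicate parts of the proof.
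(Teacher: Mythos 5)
Your overall plan coincides with the paper's: use quadratic reciprocity to convert each condition ``$q_i$ is a quadratic residue mod $p$'' into a congruence on $p$, glue these congruences via the Chinese Remainder Theorem into a single arithmetic progression, add a Pocklington-style congruence so that the square roots $\overline{\alpha}_i$ can be extracted deterministically, and then appeal to density-of-primes estimates to locate a suitable prime $p$. The coRP argument under GRH then proceeds exactly as you sketch, with the quantitative Dirichlet bound (the GRH part of \Cref{th:prime_density}) ensuring a sample from the progression is prime with inverse-polynomial probability, after which a union bound over the at most $2^{s^3}$ prime divisors of $N_{K/\QQ}(f(\sqrt{q_1},\ldots,\sqrt{q_k}))$ shows it is also eligible with high probability.

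However, there is a genuine gap in your unconditional coNP argument. You invoke Linnik's theorem to certify the \emph{existence} of a polynomial-bit-length prime in the progression, but existence of a single prime is not enough: the nondeterministic algorithm must guess a prime that, besides lying in the progression, does not divide the norm $N_{K/\QQ}(f(\sqrt{q_1},\ldots,\sqrt{q_k}))$, and that norm can have as many as $2^{s^3}$ distinct prime divisors. If Linnik's one guaranteed prime happens to be among these divisors, the certificate fails. What you actually need is an unconditional \emph{lower bound on the count} $\pi_{8A,b+1}(x)$ strong enough to guarantee more than $2^{s^3}$ primes of polynomial bit-length in the progression, so that at least one must be eligible; the paper uses the unconditional estimate $\pi_{n,a}(x) \geq c_2 x / \bigl(\varphi(n)\, x^{1/2} \log x\bigr)$ (valid for $n < c_1 x^{c_1}$) from \Cref{th:prime_density} and shows $\pi_{8A,b+1}(2^{3s^3}) > 2^{s^3}$, which suffices. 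Replacing Linnik by this counting bound fixes the gap. A smaller issue: your parenthetical suggestion ``forcing $p \equiv 3 \pmod{4}$'' is incompatible with the earlier requirement $p \equiv 1 \pmod 4$ that makes the reciprocity symbols agree; the paper instead takes $p \equiv 5 \pmod 8$, which is $\equiv 1 \pmod 4$ and is precisely the case that Pocklington's formula in \Cref{eq:pocklington} handles deterministically.
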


\section{The RIT algorithm}
\label{sec:conp}

In this section, we present a nondeterministic polynomial time algorithm for the complement of RIT. As discussed in \Cref{sec:overview}, the idea is to work in a finite field obtained by quotienting the ring of integers of the splitting field of the input radicals by a suitable prime ideal.

Below, we fix an instance of  RIT  given by an algebraic circuit $C$, and input radicals~$\mysqrt{d_1}{a_1},\ldots,\mysqrt{d_k}{a_k}$ with respective minimal polynomials $x^{d_i}-a_i$, where the radicands~$a_i$ are pairwise coprime;
this assumption is without loss of generality as discussed in \cref{appendix:reduction}.
We denote by $s$ the size of our fixed RIT instance; that is, the size of the circuit is bounded by $s$,  and the magnitude of the $a_i$ and $d_i$ is at most $2^s$. Note that  $k\leq s$, by the definition of size of an algebraic circuit. 
 We further  denote by $K$ the splitting field  of $\prod_{i=1}^k (x^{d_i}-a_i)$, which can be generated by adjoining to~$\QQ$ the radicals $\mysqrt{d_i}{a_i}$ and a primitive $d$-th root of unity~$\zeta_d$, with $d = \lcm(d_1,\ldots,d_k)$. We denote by $\OO_K$ the ring of integers of $K$.

In our construction, we evaluate the polynomial given by an algebraic circuit in a finite field $\FF_p$ for some rational prime $p$ that splits completely in $\OO_K$, that is, such that $p\OO_K = \mathfrak{p}_1\cdots \mathfrak{p}_n$ where $\mf{p}_i$ are distinct prime ideals of $\OO_K$, and $n$ is the degree of the number field $K$.

In general, given a number field $L$ and a rational prime $q$ with prime ideal $\mathfrak{q}$ dividing $q\OO_L$, we say that $\mathfrak{q}$ lies above $q$ in $\OO_L$. The residue field $\OO_L/\mathfrak{q}$ is isomorphic to an extension of the finite field~$\FF_q$, 
and we have that $\mathfrak{q} \cap \ZZ=q\ZZ$.
 However, in our special case of a completely split prime $p$ in $\OO_K$, all residue fields $\OO_K/\mathfrak{p}_i$ are isomorphic to $\FF_p$. This is crucial, as it ensures that the values in our finite field computation really will all be in $\FF_p$ and not in one of its finite extensions (see the discussion in \Cref{sec:discussion}).

The following proposition asserts that a prime $p$ completely splits in $K$ if the minimal polynomials $x^{d_i}-a_i$ of our input radicals and the $d$-th cyclotomic polynomial (the minimal polynomial of $\zeta_d$) split into distinct linear factors in $\FF_p$.

\begin{proposition}
	\label{prop:split_primes_qp}
	Given a monic polynomial $g \in\ZZ[x]$ and its splitting field~$L$, 
	a prime $q\in\ZZ$ splits completely in $L$ if $g$ splits into distinct linear factors in $\FF_q$. 
\end{proposition}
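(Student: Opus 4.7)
The plan is to prove the proposition via Hensel's lemma, lifting the factorisation of $g$ from $\FF_q[x]$ to $\QQ_q[x]$, and then invoking the standard correspondence between primes of $\OO_L$ above $q$ and embeddings of $L$ into a fixed algebraic closure of $\QQ_q$.

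First, I would apply Hensel's lemma to the reduction $\overline{g} \in \FF_q[x]$.  Because $g$ factors into $\deg g$ distinct linear factors over $\FF_q$, every root of $\overline{g}$ in $\FF_q$ is simple, and so lifts uniquely to a root of $g$ in $\ZZ_q$.  Consequently $g$ splits completely in $\QQ_q[x]$ as $\prod_{i=1}^{\deg g} (x - \beta_i)$ for distinct $\beta_i \in \ZZ_q$, and in particular all roots of $g$ in any fixed algebraic closure $\overline{\QQ_q}$ of $\QQ_q$ lie already inside $\QQ_q$.

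Next, fix such an algebraic closure $\overline{\QQ_q}$.  For every prime ideal $\mf{q}$ of $\OO_L$ above $q$, the completion $L_{\mf{q}}$ is the compositum in $\overline{\QQ_q}$ of $\QQ_q$ with the image of $L$ under an embedding $\iota: L \hookrightarrow \overline{\QQ_q}$ corresponding to $\mf{q}$, and conversely every such embedding arises from some $\mf{q}$ above $q$.  Since $L$ is generated over $\QQ$ by the roots of $g$, any embedding $\iota$ sends each root of $g$ to one of the $\beta_i \in \QQ_q$ produced in the previous step, so $\iota(L) \subseteq \QQ_q$.  Thus $L_{\mf{q}} = \QQ_q$ for every $\mf{q}$ above $q$, and hence $e(\mf{q}\mid q) \, f(\mf{q}\mid q) = [L_{\mf{q}} : \QQ_q] = 1$.

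Finally, the fundamental identity $\sum_{\mf{q} \mid q} e(\mf{q}\mid q) \, f(\mf{q}\mid q) = [L:\QQ]$ forces the number of primes of $\OO_L$ above $q$ to equal $[L:\QQ]$, which is exactly the definition of $q$ splitting completely in $\OO_L$.  The main obstacle is the local-global dictionary invoked in the third step: one needs the decomposition $L \otimes_{\QQ} \QQ_q \cong \prod_{\mf{q} \mid q} L_{\mf{q}}$ together with the bijection between primes above $q$ and Galois orbits of embeddings $L \hookrightarrow \overline{\QQ_q}$.  These are standard but nontrivial facts from algebraic number theory which I would cite from \cite{stewart2001algebraic} rather than re-derive.
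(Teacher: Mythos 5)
Your proof is correct and follows essentially the same strategy as the paper: use Hensel's lemma to lift the factorisation of $\bar g$ over $\FF_q$ to a complete splitting of $g$ over $\QQ_q$, deduce that the local field $L_{\mf{q}}$ equals $\QQ_q$ for every $\mf{q}$ above $q$, and conclude that $q$ splits completely. The only difference is in the last step: the paper invokes the isomorphism $D_{\mf{q}} \cong \Gal(L_{\mf{q}}/\QQ_q)$ and concludes that the decomposition group is trivial, whereas you go via the correspondence between primes above $q$ and embeddings $L\hookrightarrow\overline{\QQ_q}$ together with the fundamental identity $\sum_{\mf{q}\mid q} e_{\mf{q}} f_{\mf{q}} = [L:\QQ]$. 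These are interchangeable packagings of the same local-global fact, and neither buys anything over the other here; if anything, your route is marginally more self-contained, since it spells out why $e_{\mf{q}} f_{\mf{q}} = 1$ for \emph{every} $\mf{q}$ rather than appealing (implicitly, as the paper does) to the transitivity of the Galois action on primes above $q$.
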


\begin{proof}[Proof sketch]
Denote by $L_{\mathfrak{q}}$ the finite field extension of $\QQ_q$ obtained by adjoining the roots of $g$ to $\QQ_q$. Let $\mathfrak{q}$ be a prime ideal lying above $q$ in $L$.
	Recall that the decomposition group of a prime ideal $\mathfrak{q} \subset L$ is defined as the set of all automorphisms of $\Gal(L/\QQ)$ that fix $\mathfrak{q}$, i.e. $D_{\mf{q}} = \{ \sigma \in \Gal(L/\QQ) \mid \sigma(\mathfrak{q}) = \mathfrak{q}\}$. Furthermore, since the field $L$ is Galois, the following isomorphism holds:
\begin{equation}
\label{eq:decomposition_group_iso}
	D_{\mathfrak{q}} \cong \Gal(L_{\mathfrak{q}}/\QQ_q)
\end{equation}

Now given that $g$ splits into distinct linear factors in $\FF_q$, by virtue of $L$ being Galois, it follows that all roots of $g$ in $\FF_q$ are non-zero. Hence we can use Hensel's lemma to construct the solutions of $g$ in $\QQ_q$, that is, $g$ splits completely in $\QQ_q$. This, in turn, implies that $L_{\mathfrak{q}} = \QQ_q$, that is, $\Gal(L_{\mathfrak{q}}/\QQ_q)$ is trivial and \eqref{eq:decomposition_group_iso} asserts that the same holds for $D_{\mathfrak{q}}$. This entails that the only automorphism that fixes the prime ideal $\mathfrak{q}$ is the identity, whereas for all non-trivial $\sigma\in\Gal(L/\QQ)$, which permute the conjugates of our input radicals, they will map elements of $\mathfrak{q}$ to elements of $\OO_L/\mathfrak{q}\cong \FF_q$, i.e., $\FF_q$ will indeed contain all roots of $g$.
\end{proof}

\begin{example}
Consider an RIT instance asking whether the polynomial $f(x) = x^2 - 10$ vanishes at the radical input $\sqrt{5}$. The computation occurs in the field $L=\QQ(\sqrt{5})$, with ring of integers $\OO_L = \ZZ[\frac{1+\sqrt{5}}{2}]$.
	We can observe that $11$ is a completely split prime and note that the principal ideal of $\OO_L$ generated by~$11$ factors as $11\OO_L = (4+\sqrt{5}) (4-\sqrt{5})$.
Since  $11$ totally splits in~$\OO_L$, we have $\OO_L/\mathfrak{q} \cong \FF_{11}$ for the prime ideal $\mathfrak{q}=(4+\sqrt{5})$ lying above~$11$.
The polynomial $x^2-5$  completely splits to  $(x-4)(x+4) $ in $\FF_{11}$, and    consequently we have that $\QQ_{11}(\sqrt{5}) = \QQ_{11}$. 
	The rational prime~$5$, however, is an example of the primes that we want to avoid as $5\OO_L = (\sqrt{5})^2$.
	 In particular, the polynomial $x^{2}-5$ is irreducible in~$\FF_{5}$, implying that $L_{\mathfrak{q}} = \QQ_q(\sqrt{5})$ and $[L_{\mathfrak{q}} : \QQ_q] = 2$. If we evaluate $f$ on the input $\sqrt{5}$ in $\FF_{11}$, we get a true negative, as the computed value will be $6\in\FF_{11}$, whereas evaluating the polynomial in $\FF_5$ would give us a false positive.
\end{example}

\subsection{Proof of correctness}
\label{sec:correctness}

Given a polynomial~$g\in \ZZ[x]$ that is irreducible over~$\QQ$, the Galois group $\Gal(L/\QQ)$ of the splitting field~$L$ of $g$ acts transitively on the roots of $g$~\cite[Proposition~22.3]{StewartBook} . We show that a stronger notion of transitivity holds for our real radicals~$\mysqrt{d_i}{a_i}$:  

\begin{lemma}\label{lem:jtrans}
The group
$\mathrm{Gal}(K/\mathbb{Q})$ acts transitively
on the set of $k$-tuples
\[ \mathcal{S}:=\left\{(\alpha_1,\ldots,\alpha_k) \in K^k \mid 
\alpha_1^{d_1}=a_1 \wedge \cdots \wedge \alpha_k^{d_k}=a_k 
\right\} \]
 \end{lemma}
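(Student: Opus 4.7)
The plan is to prove joint transitivity via an orbit-stabiliser argument: I will exhibit a canonical tuple $\alpha^{(0)}\in\mathcal{S}$ and show that its Galois orbit already exhausts $\mathcal{S}$. First, $|\mathcal{S}|=\prod_{i=1}^k d_i$: since $K\supseteq\QQ(\zeta_d)$ contains $\zeta_{d_i}$ for each $i$ (because $d_i\mid d$), each polynomial $x^{d_i}-a_i$ has exactly $d_i$ distinct roots in $K$, giving $d_i$ independent choices for each coordinate.

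Fix the canonical tuple $\alpha^{(0)}:=(\mysqrt{d_1}{a_1},\ldots,\mysqrt{d_k}{a_k})\in\mathcal{S}$ with each $\mysqrt{d_i}{a_i}$ denoting the positive real $d_i$-th root, and set $L:=\QQ(\alpha^{(0)})\subseteq\RR$. For every $\sigma\in\Gal(K/\QQ)$ one has $\sigma(\alpha_i)^{d_i}=\sigma(a_i)=a_i$, so the Galois action on $K^k$ restricts to an action on $\mathcal{S}$. The stabiliser of $\alpha^{(0)}$ is precisely $\Gal(K/L)$, so by the orbit-stabiliser theorem combined with the Galois correspondence,
\[
  \bigl|\mathrm{Orbit}(\alpha^{(0)})\bigr|=[\Gal(K/\QQ):\Gal(K/L)]=[L:\QQ].
\]
Transitivity therefore reduces to the degree formula $[L:\QQ]=\prod_{i=1}^k d_i$: an orbit of size $|\mathcal{S}|$ inside $\mathcal{S}$ must equal $\mathcal{S}$, and for any $\alpha,\beta\in\mathcal{S}$ one then picks $\sigma_\alpha,\sigma_\beta\in\Gal(K/\QQ)$ with $\sigma_\alpha(\alpha^{(0)})=\alpha$ and $\sigma_\beta(\alpha^{(0)})=\beta$, so that $\sigma_\beta\sigma_\alpha^{-1}$ witnesses transitivity.

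The main obstacle is thus the degree formula. I would prove it by induction on $k$: the inductive step reduces to showing that $x^{d_k}-a_k$ remains the minimal polynomial of $\mysqrt{d_k}{a_k}$ over $L_{k-1}:=\QQ(\mysqrt{d_1}{a_1},\ldots,\mysqrt{d_{k-1}}{a_{k-1}})$. By the Vahlen--Capelli irreducibility criterion for binomials $x^n-a$, this amounts to verifying (i)~$a_k$ is not a proper $p$-th power in $L_{k-1}$ for any prime $p\mid d_k$, and (ii)~if $4\mid d_k$ then $a_k\neq -4c^4$ for any $c\in L_{k-1}$. Condition (ii) is automatic because $L_{k-1}\subseteq\RR$ and $a_k>0$ (the case $a_k=0$ is degenerate). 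Condition (i) is where the pairwise coprimality enters crucially: for any prime $q\mid a_k$, coprimality ensures $q\nmid a_i$ for all $i<k$, and a $q$-adic valuation argument in $\OO_{L_{k-1}}$---comparing the $q$-adic valuation of $a_k$ to that of a hypothetical $p$-th root in $L_{k-1}$---rules out such a power representation, in the spirit of Besicovitch's classical theorem on $\QQ$-linear independence of radicals.
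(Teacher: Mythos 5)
Your orbit-counting strategy is a valid alternative route to the paper's. The paper establishes, exactly as you propose, that $f_i := x^{d_i}-a_i$ remains irreducible over $L_{i-1}$ (citing~\cite[Lemma~4.6]{blomer98}), but then argues constructively rather than by counting: given an embedding $\psi_{i-1}\colon L_{i-1}\hookrightarrow K$ fixing $\QQ$ and an arbitrary root $\alpha_i'$ of $f_i$ in $K$, the isomorphism extension theorem extends $\psi_{i-1}$ to $\psi_i\colon L_i\to K$ with $\psi_i(\mysqrt{d_i}{a_i})=\alpha_i'$; iterating and then extending $\psi_k$ to an automorphism of the Galois closure $K$ produces the required element of $\Gal(K/\QQ)$ directly. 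Your route deduces transitivity indirectly by comparing $|\mathcal{S}|=\prod_i d_i$ with $|\mathrm{Orbit}(\alpha^{(0)})|=[L:\QQ]$; both arguments hinge on the same iterated irreducibility, so they are of essentially equal strength, though the paper's construction is a touch cleaner because it never needs the counting step.

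Where your write-up has a genuine gap is in the sketch of that irreducibility, which you attempt to reprove rather than cite. Condition~(ii) of Vahlen--Capelli is handled correctly via reality, but the valuation argument for condition~(i) does not close on its own. If $a_k=\beta^p$ with $\beta\in L_{k-1}$ and one picks a prime $q\mid a_k$ with $p\nmid v_q(a_k)$, then for a prime ideal $\mathfrak{q}$ of $\OO_{L_{k-1}}$ above $q$ one obtains $p\mid v_{\mathfrak{q}}(a_k)=e(\mathfrak{q}\mid q)\,v_q(a_k)$ and hence $p\mid e(\mathfrak{q}\mid q)$. Coprimality of the radicands only gives $q\nmid a_i$ for $i<k$; the prime $q$ may still divide one of the exponents $d_i$, so the ramification index $e(\mathfrak{q}\mid q)$ of $q$ in $L_{k-1}$ is not automatically coprime to $p$, and the informal appeal to Besicovitch's theorem (which concerns $\QQ$-\emph{linear} independence of radicals, not the multiplicative-power statement you need) does not supply the missing control. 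The paper sidesteps this difficulty entirely by invoking Bl\"omer's lemma as a black box; you should either do the same or supply the ramification estimate that makes the valuation step rigorous.
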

\begin{proof}
Recall that $\mysqrt{d_1}{a_1},\ldots,\mysqrt{d_k}{a_k}$ are real radicals with respective minimal polynomials $x^{d_i}-a_i$ and~$a_i$ mutually coprime.

Let $L_i := \QQ(\mysqrt{d_1}{a_1}, \ldots, \mysqrt{d_i}{a_{i}})$ for $i\in\{1,\ldots,k\}$.
By virtue of the~$a_i$ being coprime and~\cite[Lemma 4.6]{blomer98}, 
the polynomial~$f_i := x^{d_i}-a_i$ stays irreducible over~$L_{i-1}$, and thus is the minimal polynomial of $\mysqrt{d_i}{a_i}$
 over this field.
 
The proof follows by repeated use of the Isomorphism extension theorem, cf. \cite[Theorem~5.12]{StewartBook}.
Note that $L_{i}$ is a simple extension of $L_{i-1}$ with $L_{i} = L_{i-1}(\alpha_i)$ where $\alpha_i$ is a solution of $f_i$. Denote by $\psi_{i-1}$ an embedding of $L_{i-1}$ into $K$. If $\alpha_i'$ is a root of $\psi_{i-1}(f_i)$ then there is a unique extension of $\psi_{i-1}$ to a homomorphism $\psi_i : L_i \to K$ such that $\psi_i(\alpha_i) = \alpha_i'$.
Applying the above inductively, we obtain a homomorphism $\psi_{k} : \QQ(\mysqrt{d_1}{a_1}, \ldots, \mysqrt{d_k}{a_{k}}) \to K$, which by the Isomorphism extension theorem can again be extended to an automorphism of $K$ acting jointly transitively on $\mathcal{S}$.
\end{proof}

We now show that for an instance of  RIT, that is, an algebraic circuit with underlying polynomial $f$, and radical input with pairwise coprime radicands, the finite field computation is sound.
Given a rational prime~$p$ and a polynomial $g(x)\in \ZZ[x]$ we denote by $\bar{g}\in \FF_p[x]$  the reduction of $g$ modulo $p$.

\begin{lemma}\label{th:RIT-soundess}
Let $p$ be a prime that completely splits in $\OO_K$, and
let $\overline{\alpha}_1,\ldots,\overline{\alpha}_k \in \mathbb{F}_p$
be roots of the polynomials $x^{d_1}-a_1,\ldots,x^{d_k}-a_k$, respectively.
Then for all 
 $f(x_1,\ldots,x_k)\in\ZZ[x_1,\ldots,x_k]$, we have
\begin{enumerate}
	\item if $f(\mysqrt{d_1}{a_1},\ldots,\mysqrt{d_k}{a_k}) = 0$ then $\overline{f}(\overline{\alpha}_1,\ldots,\overline{\alpha}_k) = 0$, and
	\item if $\overline{f}(\overline{\alpha}_1,\ldots,\overline{\alpha}_k) = 0$ then $p\mid N_{K/\QQ}(f(\mysqrt{d_1}{a_1},\ldots,\mysqrt{d_k}{a_k}))$.
\end{enumerate}
\end{lemma}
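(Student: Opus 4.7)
The plan is to prove both parts by locating a prime ideal $\mathfrak{p}$ of $\OO_K$ lying above $p$ whose residue map $\varphi_{\mathfrak{p}}\colon \OO_K \to \OO_K/\mathfrak{p} \cong \FF_p$ sends each generator $\mysqrt{d_i}{a_i}$ to the prescribed root $\overline{\alpha}_i$. Once such a $\mathfrak{p}$ is in hand, part~(1) is immediate: apply $\varphi_{\mathfrak{p}}$ to the identity $f(\mysqrt{d_1}{a_1},\ldots,\mysqrt{d_k}{a_k})=0$ to obtain $\overline{f}(\overline{\alpha}_1,\ldots,\overline{\alpha}_k)=0$. For part~(2), the hypothesis gives $f(\mysqrt{d_1}{a_1},\ldots,\mysqrt{d_k}{a_k}) \in \mathfrak{p}$; then the norm $N_{K/\QQ}(f(\mysqrt{d_1}{a_1},\ldots,\mysqrt{d_k}{a_k}))$, which has this element as one of its Galois-conjugate factors, also lies in $\mathfrak{p}$, and being a rational integer it belongs to $\mathfrak{p}\cap\ZZ=p\ZZ$.

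The main work is therefore the construction of $\mathfrak{p}$. First I would fix an arbitrary prime ideal $\mathfrak{p}_0$ above $p$ with residue map $\varphi_0\colon \OO_K \to \FF_p$. Because $p$ splits completely in $\OO_K$ it is unramified, and one deduces that $p\nmid d_i a_i$ for each $i$; together with $K\supseteq \QQ(\zeta_d)$, this implies that $\varphi_0$ restricts to a bijection between the $d_i$ roots of $x^{d_i}-a_i$ in $K$ and its $d_i$ distinct roots in $\FF_p$. In particular, each $\overline{\alpha}_i$ lifts to some $\alpha_i' \in K$ with $\varphi_0(\alpha_i')=\overline{\alpha}_i$ and $(\alpha_i')^{d_i}=a_i$.

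Joint transitivity (\Cref{lem:jtrans}) now supplies $\sigma\in\Gal(K/\QQ)$ with $\sigma(\mysqrt{d_i}{a_i})=\alpha_i'$ simultaneously for all $i$. Taking $\mathfrak{p}:=\sigma^{-1}(\mathfrak{p}_0)$ produces a further prime ideal above $p$, and the associated residue map factors as $\varphi_{\mathfrak{p}} = \varphi_0\circ\sigma$, up to the canonical isomorphism $\OO_K/\mathfrak{p}\cong\OO_K/\mathfrak{p}_0$ induced by $\sigma$; this yields $\varphi_{\mathfrak{p}}(\mysqrt{d_i}{a_i})=\overline{\alpha}_i$ as desired. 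The subtle point I expect is verifying the injectivity of $\varphi_0$ on the roots of each $x^{d_i}-a_i$ in $K$: if two such roots agreed modulo $\mathfrak{p}_0$, their difference would force two distinct $d_i$-th roots of unity to coincide modulo $\mathfrak{p}_0$, contradicting the unramification of $p$. Once this is dispatched, everything else is standard commutative algebra.
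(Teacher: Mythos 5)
Your proof is correct and follows essentially the same route as the paper's: start from an arbitrary prime ideal $\mathfrak{p}_0$ above $p$, observe that the reduction $\varphi_0$ is a bijection from the $d_i$ roots of $x^{d_i}-a_i$ in $\OO_K$ to those in $\FF_p$, use joint transitivity to find $\sigma\in\Gal(K/\QQ)$ matching up the prescribed $\overline{\alpha}_i$, compose to get $\varphi$ with $\varphi(\mysqrt{d_i}{a_i})=\overline{\alpha}_i$, and read off (1) and (2) from $\ker\varphi$ being a prime above $p$. You spell out a step the paper leaves implicit (why the reduction map is injective on the roots, via $p\nmid d_ia_i$), which is a sound addition, but the argument is otherwise the same.
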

\begin{proof}
Recall that the radicals $\mysqrt{d_1}{a_1},\ldots, \mysqrt{d_k}{a_k}$ are such that their respective minimal polynomials are $x^{d_1}-a_1,\ldots,x^{d_k}-a_k$.

Consider a prime-ideal factor $\mathfrak{p}$ of $p\OO_K$.
The quotient homomorphism $\OO_K \rightarrow \mathbb{F}_p$
with kernel $\mathfrak{p}$ maps the $d_i$ distinct roots of 
each polynomial $x^{d_i}-a_i$ in $\OO_K$ bijectively onto the 
roots of the same polynomial in $\mathbb{F}_p$.
Then, by joint transitivity of the Galois group $\mathrm{Gal}(K/\mathbb{Q})$, established in~\Cref{lem:jtrans}, there is a homomorphism
$\varphi:\OO_K\rightarrow \mathbb{F}_p$ such that 
$\varphi(\mysqrt{d_i}{a_i})=\overline{\alpha}_i$ for all $i \in \{1,\ldots,k\}$.

Item 1 follows from the fact that if $f(\mysqrt{d_1}{a_1},\ldots,\mysqrt{d_k}{a_k}) =0$
then $\overline{f}(\overline{\alpha}_1,\ldots,\overline{\alpha}_k)=\varphi(f(\sqrt[d]{a_1},\ldots,\sqrt[d]{a_k}))=0$.
For Item 2, note that
the kernel of $\varphi$ is a prime ideal of $\OO_K$ lying above $p$.
Thus if $\overline{f}(\overline{\alpha}_1,\ldots,\overline{\alpha}_k)=0$
then $p\mid N_{L/\QQ}(f(\alpha_1,\ldots,\alpha_k))$.
\end{proof}

We have just shown that given an instance of  RIT, the computation can be taken to a finite field $\FF_p$ for some rational prime $p$. In the following section, we discuss how to choose an appropriate prime~$p$ that satisfies the two conditions given in \Cref{th:RIT-soundess}.

\Cref{lem:jtrans} plays an important role in the construction of our algorithm, and furthermore is one of the properties of the input to  RIT  that makes our technique difficult to generalise to more general identity testing problems. In particular, joint transitivity ensures that in \Cref{th:RIT-soundess}(1), no matter which representative of the $\mysqrt{d_i}{a_i}$ we choose in $\FF_p$, that is, no matter which solution of the equation $x^{d_i}-a_i$ we guess in $\FF_p$, the computation remains sound. If we were, for instance, to generalise our identity testing problem to allow radical and cyclotomic inputs, joint transitivity may not hold anymore.

\begin{example}
\label{ex:generalisation}
Consider the polynomial $f(x_1,x_2) = x_2^2 - x_1 x_2 + 1$ with input $\sqrt{2}$ and a primitive $8$-th root of unity $\zeta_8$. The polynomial $f$ vanishes at $(\sqrt{2},\zeta_8)$. The number field of the computation is $\QQ(\sqrt{2},\zeta_8)$, and we can choose the completely split prime $17$ for our finite field computation. The minimal polynomials of our input split as $x^2 - 2 = (x-6)(x+6)$ and $x^4 + 1 = (x+2)(x-2)(x+8)(x-8)$ in $\FF_{17}$. However, since the Galois group of the field $\QQ(\sqrt{2},\zeta_8)$ does not act jointly-transitively on the input, we cannot choose the representatives of our two input numbers in~$\FF_{17}$ arbitrarily. In particular, by choosing $6$ for $\sqrt{2}$ and $2$ for $\zeta_8$, and evaluating~$f$ in~$\FF_{17}$, the result would be $10$, a clear false negative. This is due to the fact that the minimal polynomial $\Phi_8(x) = x^4 + 1$ of $\zeta_8$ reduces in $\QQ(\sqrt{2})$, hence, as soon as we choose $6$ as a representative for $\sqrt{2}$, we cannot choose the representative for $\zeta_8$ freely. In fact, if we replace $x_1$ by $\sqrt{2}$ and $x_2$ by $x$ in the polynomial~$f$, we obtain $(x^2 - \sqrt{2}x + 1)$,
 which is a factor of $\Phi_8(x)$ in $\QQ(\sqrt{2})$. Indeed, $\Phi_8(x)$ factors as $(x^2 - \sqrt{2}x + 1)(x^2 + \sqrt{2}x + 1)$ in $\QQ(\sqrt{2})$, hence, as soon as we choose $6$ as a representative for $\sqrt{2}$, we can only choose the roots of
 $(x^2 - 6x + 1)$
 as representatives for $\zeta_8$, whereas $2$
 is a root of the polynomial $(x^2 + 6x + 1)$ in $\FF_{17}$.
\end{example}

We note here that the underlying approach (working modulo prime ideals) is the same as in \cite{balaji2021cyclotomic} and preceding works on versions of ACIT.
However, as shown in \Cref{ex:generalisation}, the transitivity condition given in 
\Cref{lem:jtrans} is crucial in our approach, whereas it has no equivalent in \cite{balaji2021cyclotomic}.
Furthermore, we have to allow for the fact that the ring of integers of the number field is no longer monogenic in the present case (e.g., \Cref{th:RIT-soundess} is analogous to Theorem~8 in \cite{balaji2021cyclotomic}, but the proof requires working in a certain order in the ring of integers). 
In relation to this, \Cref{ex:generalisation} explains some of the difficulties arising from attempting a common generalisation of \cite{balaji2021cyclotomic} and the present paper.

\subsection{Choice of the prime $p$}

Let us call the primes $p$ that completely split in $\OO_K$
\emph{good} primes, and the primes that do not divide the norm of the algebraic integer computed by the circuit \emph{eligible} primes. We will first discuss how to choose eligible primes.
All missing proofs of Lemmas we establish in this subsection can be found in \Cref{appendix:conp}.

Given a rational prime $p$, we would like to ensure that $p \nmid N(\alpha)$, where $\alpha$ is the algebraic integer computed by a circuit.
The norm $N(\alpha)$ has at most $\log |N(\alpha)|$ prime divisors. Intuitively, this means that if we have a set of $\log |N(\alpha)| + 1$ primes, at least one of them will be eligible. Now let us see how we can bound this value.

Recall that the norm of an algebraic number $\alpha$ in a Galois field can be computed as the product of its Galois conjugates. Thus, in order to bound the magnitude of the norm of our computed number, we need a bound on the magnitude of the conjugates of $\alpha$, as well as the size of the Galois group, that is, the number of conjugates of $\alpha$. The latter can be obtained using a bound on the degree of the number field $K$ itself.
Recall that  the splitting field~$K$ of $\prod_{i=1}^k (x^{d_i}-a_i)$ is $\QQ(\mysqrt{d_1}{a_1}, \ldots, \mysqrt{d_k}{a_k},\zeta_d)$. Since the $d_i$ are of magnitude at most 
$2^s$, it follows that $d$ is of magnitude at most $2^{s^2}$,
and the degree of $K$ will be at most $2^{2s^2}$.
Using this bound, we show:

\begin{restatable}[Bound on the norm]{lemma}{boundnorm}\label{lem:norm}
Denote by $\alpha\in \OO_{K}$ the algebraic integer computed by~$C$ evaluated on the~$\mysqrt{d_i}{a_i}$. We have
\[ |N(\alpha)| \leq 2^{2^{s^3}} \]
for $s\geq 4$.

\end{restatable}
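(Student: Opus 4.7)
The plan is to bound each Galois conjugate of $\alpha$ in absolute value and multiply the bound by the number of conjugates. Since $K$ is Galois over $\QQ$, the norm $N_{K/\QQ}(\alpha)$ equals the product over $\sigma \in \Gal(K/\QQ)$ of $\sigma(\alpha)$, so
\[ |N(\alpha)| \le \prod_{\sigma \in \Gal(K/\QQ)} |\sigma(\alpha)| \le M^{[K:\QQ]}, \]
where $M$ is any uniform upper bound on $|\sigma(\alpha)|$. The degree $[K:\QQ]$ has already been discussed in the preceding paragraph: since $d = \lcm(d_1,\ldots,d_k) \le \prod d_i \le 2^{sk} \le 2^{s^2}$, the extension $K = \QQ(\mysqrt{d_1}{a_1},\ldots,\mysqrt{d_k}{a_k},\zeta_d)$ has degree at most $\prod_i d_i \cdot \varphi(d) \le 2^{s^2} \cdot 2^{s^2} = 2^{2s^2}$.

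The main work is in bounding $|\sigma(\alpha)|$ uniformly in $\sigma$. Every conjugate of $\mysqrt{d_i}{a_i}$ in $\CC$ has the form $\zeta_{d_i}^j \mysqrt{d_i}{a_i}$ and hence has absolute value $a_i^{1/d_i} \le 2^{s}$. Applying an automorphism $\sigma$ to $\alpha = f(\mysqrt{d_1}{a_1},\ldots,\mysqrt{d_k}{a_k})$ amounts to evaluating the same circuit $C$ on inputs $\sigma(\mysqrt{d_i}{a_i})$ whose absolute values are at most $2^{s}$. Thus it suffices to bound in absolute value the value computed by $C$ on \emph{any} complex inputs of magnitude at most $2^{s}$. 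I will do this by induction on the gates of the circuit: if $v(g)$ denotes the absolute value of the value at gate $g$, then for an input gate $v(g) \le 2^{s}$, for a multiplication gate $v(g) \le v(g_1)v(g_2)$, and for an addition gate with integer edge labels $c_1,c_2$ (which, being part of the circuit, satisfy $|c_i|\le 2^{s}$) we have $v(g) \le 2 \cdot 2^{s} \cdot \max(v(g_1),v(g_2))$. A straightforward induction over the at most $s$ gates yields $v(g) \le 2^{s \cdot 2^{s+1}}$, so $|\sigma(\alpha)| \le 2^{2^{s+2}}$.

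Combining the two bounds,
\[ |N(\alpha)| \le \left(2^{2^{s+2}}\right)^{2^{2s^2}} = 2^{2^{s+2}\cdot 2^{2s^2}} = 2^{2^{2s^2 + s + 2}}, \]
and for $s \ge 4$ the elementary inequality $2s^2 + s + 2 \le s^3$ gives the claimed bound $|N(\alpha)| \le 2^{2^{s^3}}$.

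The only real obstacle is getting the circuit-evaluation induction cleanly right: one must be careful about what contributes to the size $s$ (number of gates together with the bit-length of the $\pm 1$ leaves and the integer edge labels), so that the magnitude of every edge weight is bounded by $2^{s}$ and the length of any root-to-output path is bounded by $s$. Once these bookkeeping points are fixed, the estimates above are routine and leave ample slack, which is why the statement is comfortable at the level $2^{2^{s^3}}$ rather than the tighter $2^{2^{O(s^2)}}$ that the argument actually produces.
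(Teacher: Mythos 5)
Your proof is correct in its overall structure and reaches the right bound, but it takes a different tactic than the paper at the key technical step, and there is one arithmetic slip worth flagging.

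Both you and the paper start from the factorisation $|N(\alpha)| = \prod_{\sigma\in\Gal(K/\QQ)}|\sigma(\alpha)|$, bound $|\Gal(K/\QQ)| \le 2^{2s^2}$, and then bound each factor $|\sigma(\alpha)|$ by observing that $\sigma$ sends the radical inputs to conjugates of absolute value at most $2^s$. The difference lies in how $|\sigma(\alpha)|$ is controlled: the paper first rewrites $\alpha$ as an explicit sum of monomials $\sum_i b_i x_1^{e_{i_1}}\cdots x_k^{e_{i_k}}$, bounds the number of monomials by $2^{s^2}$, the coefficients by $2^{2^s}$, and the degree by $2^s$, and then multiplies out; you instead bound the magnitude of the circuit's output directly, by induction over the gates, on arbitrary complex inputs of modulus at most $2^s$. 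Your route is arguably cleaner since it never passes through the dense-representation bounds.

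The slip: you write that $v(g) \le 2^{s\cdot 2^{s+1}}$ implies $|\sigma(\alpha)|\le 2^{2^{s+2}}$. This does not hold for $s\ge 4$: one needs $s\cdot 2^{s+1}\le 2^{s+2}$, which forces $s\le 2$. The correct simplification is, for instance, $s\cdot 2^{s+1}\le 2^{s-1}\cdot 2^{s+1}=2^{2s}$ (using $s\le 2^{s-1}$ for $s\ge 4$), giving $|\sigma(\alpha)|\le 2^{2^{2s}}$. The final exponent then becomes $2s+2s^2$ rather than $s+2+2s^2$, and $2s+2s^2\le s^3$ for $s\ge 4$, so the conclusion $|N(\alpha)|\le 2^{2^{s^3}}$ still stands. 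Once that line is fixed, the argument is sound.
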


In the context of our algorithm, this means that if we find 
$2^{s^3} + 1$ good primes, at least one of them will also be eligible, and hence our finite field computation will be sound. Let us now focus on how we can ensure to be able to find enough good primes of polynomial bit-length in the size of the input to complete our reasoning.

To this aim, we use a quantitative version of the Chebotarev density theorem. Intuitively speaking, given a Galois extension $L$ of $\QQ$, the theorem gives a bound on the number of primes splitting in a certain pattern in $\OO_L$. The different classes of splitting patterns correspond to conjugacy classes of the Galois group $\Gal(L/\QQ)$ of $L$. As it turns out, the primes splitting completely in $L$ correspond to the conjugacy class $\{id\}$  
containing solely the identity element~$id$ of $\Gal(L/\QQ)$. The asymptotic version of the theorem then asserts that the set of completely split primes has density $\frac{1}{|\Gal(L/\QQ)|}$.  Denoting by $\pi(x)$ the number of all rational primes less or equal to $x$, and by $\pi_1(x)$ the number of completely split primes, the quantitative version of the theorem is as follows~\cite{lagariasO, serre1981quelques}:

\begin{proposition}[Bound on $\pi_{1}(x)$]\label{prop:number_of_primes}
Assuming GRH,
\[ \pi_{1}(x) \geq \frac{1}{|\Gal(K/\QQ)|} \left[ \pi(x) - \log \Delta_K - c x^{1/2} \log (\Delta_K x^{|\Gal(K/\QQ)|}) \right] \]
where $c$ is an effective constant.
\end{proposition}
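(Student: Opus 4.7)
The plan is to invoke the effective form of the Chebotarev density theorem under GRH, due to Lagarias and Odlyzko~\cite{lagariasO} (with subsequent refinements by Serre~\cite{serre1981quelques}), and to specialise it to the trivial conjugacy class $C = \{\id\}$ in $G := \Gal(K/\QQ)$. For each conjugacy class $C \subseteq G$, define $\pi_C(x)$ to be the number of unramified rational primes $p \leq x$ whose Frobenius conjugacy class in $G$ equals $C$. The effective Chebotarev theorem, conditional on GRH for the Dedekind zeta function of $K$, provides a two-sided estimate of the form
\[ \left| \pi_C(x) - \frac{|C|}{|G|} \pi(x) \right| \;\leq\; \frac{c\,|C|}{|G|}\, x^{1/2} \log\!\bigl(\Delta_K\, x^{|G|}\bigr) \]
for an effective absolute constant $c$.

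My plan is to combine this input with two standard observations. First, by the very definition of Frobenius, an unramified prime $p$ splits completely in $\OO_K$ if and only if its Frobenius conjugacy class equals $\{\id\}$; thus $\pi_{\{\id\}}(x)$ counts precisely the completely split primes $p \leq x$, since every completely split prime is automatically unramified. Second, the ramified primes of $K/\QQ$ are exactly those dividing the discriminant $\Delta_K$, and their number is at most $\omega(\Delta_K) \leq \log \Delta_K$; this is the source of the $\log \Delta_K$ correction in the stated bound, accounting either for ramified primes excluded from $\pi_C(x)$ or for the analogous adjustment appearing in the original analytic statement. Taking $C = \{\id\}$ so that $|C| = 1$, substituting into the bound above, and factoring out $1/|G|$ yields
\[ \pi_1(x) \;\geq\; \frac{1}{|G|}\Bigl[\pi(x) - \log \Delta_K - c\, x^{1/2} \log\!\bigl(\Delta_K\, x^{|G|}\bigr)\Bigr], \]
which is exactly the claim.

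The main obstacle is not in the specialisation itself but in the effective Chebotarev theorem, whose proof rests on a delicate analytic argument using the zero-free regions of Artin $L$-functions under GRH; this I would cite as a black box. The only genuine bookkeeping on our side is to (i) record that specialising the two-sided estimate to $C = \{\id\}$ gives a one-sided lower bound of the desired shape, and (ii) verify that the $\log \Delta_K$ term absorbs the contribution of the finitely many ramified primes. Once these are in place, the constant $c$ in the proposition is inherited directly from the constant in the Lagarias--Odlyzko--Serre estimate and is therefore effective.
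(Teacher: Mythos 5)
The paper offers no proof of this proposition; it is stated as a direct citation to Lagarias--Odlyzko and Serre, and your approach — invoke the effective Chebotarev density theorem under GRH and specialise to the trivial conjugacy class $C=\{\id\}$ — is exactly what is intended, so your write-up matches the paper's treatment and actually makes explicit the bookkeeping that the paper leaves implicit. One small remark: the additive $\log\Delta_K$ term in the bound is not really accounting for the exclusion of ramified primes (as you acknowledge, completely split primes are automatically unramified, so $\pi_{\{\id\}}(x)=\pi_1(x)$ with no correction needed); it is already present as a separate additive error term $O(\log d_K)$ in the Lagarias--Odlyzko estimate itself, arising from the explicit-formula analysis rather than from a prime count, so the cleaner route is to quote that two-term error bound directly rather than to hedge between the two possible origins.
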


To apply the above proposition we will need a bound on the discriminant~$\Delta_K$ of the number field $K$. Recall the definition of the discriminant of a number field; given a $\ZZ$-basis $\{\alpha_1,\ldots,\alpha_n\}$ of the ring of integers $\OO_L$ of a number field $L$, the discriminant $\Delta_L$ is the determinant of the matrix $\mathrm{tr}(\alpha_i\alpha_j)$ for all $i,j=1,\ldots,n$.
However, in the case of a radical field extension $L$, 
 computing a basis for its ring of integers~$\OO_L$ is a non-trivial task. In order to avoid this computation, we try to bound the discriminant using the discriminant of an order of our ring of integers $\OO_K$. Recall that an order~$\OO$ in a number field $L$
is a free $\ZZ$-submodule of $\OO_L$ of rank $[L:\QQ]$. 
Looking again at the number field $L=\QQ(\sqrt{5})$ with ring of integers $\OO_L=\ZZ[\frac{\sqrt{5}+1}{2}]$, note that $\ZZ[\sqrt{5}] \subset \OO_L$; in particular, $\ZZ[\sqrt{5}]$ is an order of index 2 in $\OO_L$. The following (see, e.g., \cite[Proposition 4.4.4]{cohen2013course}) holds:

 \begin{proposition}
 \label{prop:order_disc}
 Suppose $\OO$ is an order in $\OO_L$. Then
 \[ \Disc(\OO) = \Disc(\OO_L) \cdot [\OO_L : \OO]^2 \]
 \end{proposition}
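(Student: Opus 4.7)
\medskip

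\noindent\textbf{Proof plan.}
My plan is a straightforward change-of-basis computation at the level of the trace pairing. First, I would fix a $\ZZ$-basis $\{\omega_1,\ldots,\omega_n\}$ of $\OO_L$ and a $\ZZ$-basis $\{\alpha_1,\ldots,\alpha_n\}$ of $\OO$, where $n = [L:\QQ]$. Since $\OO$ is a full-rank $\ZZ$-submodule of $\OO_L$, there is a unique matrix $M \in M_n(\ZZ)$ with $\alpha_i = \sum_{j} M_{ij}\, \omega_j$ for every $i$.

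Next, I would identify $|\det M|$ with the index $[\OO_L : \OO]$. This is the standard fact that for a full-rank sublattice of a free abelian group of rank $n$, the index equals the absolute value of the determinant of the transition matrix; it follows, for instance, from Smith normal form applied to $M$, which also yields an explicit description of the finite quotient $\OO_L/\OO$.

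Then I would relate the two trace-form matrices. Using $\ZZ$-bilinearity of the trace, I compute
\[ \Tr_{L/\QQ}(\alpha_i \alpha_j) \;=\; \sum_{k,\ell} M_{ik} M_{j\ell}\, \Tr_{L/\QQ}(\omega_k \omega_\ell), \]
so the matrix $T_\alpha := (\Tr(\alpha_i \alpha_j))_{i,j}$ factors as $T_\alpha = M\, T_\omega\, M^{\top}$, where $T_\omega := (\Tr(\omega_k \omega_\ell))_{k,\ell}$. Taking determinants gives $\det T_\alpha = (\det M)^2 \det T_\omega$, and by the definition of the discriminant of a $\ZZ$-module recalled just before the statement, this reads
\[ \Disc(\OO) \;=\; (\det M)^2 \cdot \Disc(\OO_L) \;=\; [\OO_L : \OO]^2 \cdot \Disc(\OO_L), \]
as required.

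I do not expect a genuine obstacle here: every step is standard linear algebra over $\ZZ$ together with the $\QQ$-bilinearity of the trace pairing. The only point that deserves a brief sentence in the write-up is why $|\det M| = [\OO_L : \OO]$, which is the usual lattice-index identity and can be cited from any standard reference on orders in number fields; the rest is a one-line computation with the trace form.
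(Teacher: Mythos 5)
Your proof is correct and is the standard change-of-basis argument for the discriminant of a sublattice: expressing the basis of $\OO$ in terms of a basis of $\OO_L$ via an integer matrix $M$, factoring the trace-form matrix as $M\,T_\omega\,M^{\top}$, taking determinants, and invoking $|\det M| = [\OO_L:\OO]$. The paper does not prove this proposition itself but cites it as Proposition~4.4.4 of Cohen's \emph{A Course in Computational Algebraic Number Theory}, where essentially the same argument appears, so your write-up is a complete and faithful rendering of the intended proof.
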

 
We construct an order $\OO$ of $\OO_K$, the discriminant of which we can bound using a standard result in algebraic number theory. The Primitive element theorem states that any number field $L$ can be generated by adjoining a single element $\theta$, called the primitive element, to $\QQ$, i.e., $L = \QQ(\theta)$. Then the subring $\ZZ[\theta]$ of $\OO_L$ is an order of $\OO_L$. Furthermore, the discriminant $\Disc(\ZZ[\theta])$ is equal to the discriminant of the minimal polynomial of $\theta$.
It is well-known that the discriminant of a polynomial can be computed as the product of the differences of the roots.

We thus start with preliminary result regarding primitive elements of the number field~$K$.

\begin{restatable}[Bound on the primitive element]{lemma}{boundprimitiveelement}\label{lem:primitive_el}
The field $K$ has a primitive element $\theta$, computed as the linear combination
\[ \theta = c_0 \zeta_d + \sum_{i=1}^k c_i \mysqrt{d_i}{a_i}\]
with $c_i \leq 2^{4s^2} \in \ZZ$ and $\deg \theta \leq 2^{2s^2}$.
\end{restatable}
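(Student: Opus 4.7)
The plan is to apply the constructive form of the primitive element theorem to the generators $\alpha_0 := \zeta_d$ and $\alpha_i := \mysqrt{d_i}{a_i}$ for $i=1,\ldots,k$. Recall that $\theta = \sum_{i=0}^k c_i \alpha_i$ with $c_i \in \ZZ$ satisfies $\QQ(\theta) = K$ if and only if its $[K:\QQ]$ Galois conjugates are pairwise distinct, i.e., for every pair of distinct embeddings $\sigma \neq \tau : K \hookrightarrow \CC$ one has $\sum_{i=0}^k c_i(\sigma(\alpha_i) - \tau(\alpha_i)) \neq 0$. Since $\sigma \neq \tau$ forces $\sigma(\alpha_i) \neq \tau(\alpha_i)$ for some $i$, each such pair cuts out a proper affine hyperplane in the coefficient space $\CC^{k+1}$, and I would finish by showing that an integer tuple with small coordinates lies outside every such hyperplane.

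Before the pigeonhole step, I would establish the degree bound $[K:\QQ] \leq 2^{2s^2}$. Since $k \leq s$ and each $d_i \leq 2^s$, we have $d = \lcm(d_1,\ldots,d_k) \leq \prod_i d_i \leq 2^{s^2}$, so $[\QQ(\zeta_d):\QQ] = \varphi(d) \leq d \leq 2^{s^2}$. In the tower $\QQ \subseteq \QQ(\zeta_d) \subseteq \QQ(\zeta_d, \mysqrt{d_1}{a_1}) \subseteq \cdots \subseteq K$, each successive adjunction of $\mysqrt{d_i}{a_i}$ multiplies the degree by at most $d_i$. Multiplying, $[K:\QQ] \leq 2^{s^2} \cdot \prod_i d_i \leq 2^{s^2}\cdot 2^{s^2} = 2^{2s^2}$.

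For the counting step, a proper linear hyperplane in $\CC^{k+1}$ contains at most $(N+1)^k$ integer tuples from the cube $\{0,1,\ldots,N\}^{k+1}$ (fix the $k$ coordinates corresponding to a nonzero coefficient's complement; the last coordinate is then forced, and may or may not be an integer in range). The number of unordered pairs of embeddings is $\binom{[K:\QQ]}{2} \leq 2^{4s^2-1}$, so the number of \emph{bad} tuples in the cube is at most $2^{4s^2-1}(N+1)^k$. Choosing $N = 2^{4s^2}$, the cube contains $(N+1)^{k+1} = (N+1)^k(N+1) > 2^{4s^2-1}(N+1)^k$ tuples in total, strictly more than the number of bad ones. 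Hence a good tuple $(c_0,\ldots,c_k)$ with $c_i \leq 2^{4s^2}$ exists, and the corresponding $\theta$ is a primitive element of $K$ with $\deg\theta = [K:\QQ] \leq 2^{2s^2}$. The only point requiring care is the degree bound on $K$; no deeper ingredient (such as joint transitivity) is needed, since the hyperplane argument uses merely that distinct embeddings disagree on some generator.
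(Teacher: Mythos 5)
Your proof is correct, and it takes a genuinely different route from the paper's. The paper proceeds inductively: it cites an effective two-element primitive-element result (\Cref{prop:effective_primitive_el_th}, stating that for $\alpha,\beta$ of degrees $m,n$ some $c\in\{1,\ldots,m^2n^2+1\}$ makes $\alpha+c\beta$ primitive for $\QQ(\alpha,\beta)$), and applies it repeatedly to build a chain $\theta_2,\theta_3,\ldots,\theta_k$ and finally $\theta=\theta_k+c_0\zeta_d$, tracking $\deg\theta_i$ and the coefficients at each step. You instead count directly: for each unordered pair of distinct embeddings of $K$, the tuples $(c_0,\ldots,c_k)$ giving equal images of $\theta$ form a proper hyperplane in $\CC^{k+1}$, and a proper hyperplane meets the cube $\{0,\ldots,N\}^{k+1}$ in at most $(N+1)^k$ integer points; since there are at most $\binom{[K:\QQ]}{2}\le 2^{4s^2-1}$ such hyperplanes and $N=2^{4s^2}$ gives $N+1>2^{4s^2-1}$, a good tuple survives. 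Your approach is arguably cleaner---it sidesteps the two-element black box and yields a single uniform bound $c_i\le 2^{4s^2}$, whereas the induction produces a ladder of bounds that only at the last step reaches $2^{4s^2}$. Both arguments ultimately hinge on the same degree bound $[K:\QQ]\le 2^{2s^2}$, which you derive correctly via the tower $\QQ\subseteq\QQ(\zeta_d)\subseteq\cdots\subseteq K$ and the estimates $\varphi(d)\le d\le\prod_i d_i\le 2^{s^2}$; and both deliver the stated conclusion that $\deg\theta=[K:\QQ]\le 2^{2s^2}$ once primitivity is established.
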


Henceforth, we fix a primitive element~$\theta$ for our number field~$K$, computed as in Lemma~\ref{lem:primitive_el}. Now \Cref{prop:order_disc}
suggests that $\Delta_K \leq \Delta_{f_\theta}$, where $\Delta_{f_\theta} = \Disc(\ZZ[\theta])$, which we bound as follows:

\begin{restatable}[Bound on the discriminant]{lemma}{bounddiscriminant}\label{lem:discriminant}
 We have
 \[ |\Disc(\ZZ[\theta])| \leq 2^{2^{5s^2}}\]
 for $s\geq 4$.
\end{restatable}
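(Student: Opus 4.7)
The plan is to bound the discriminant via the classical formula
\[ \Disc(\ZZ[\theta]) = \prod_{1\leq i < j \leq n}(\theta_i - \theta_j)^2, \]
where $\theta_1 = \theta,\theta_2,\ldots,\theta_n$ are the Galois conjugates of $\theta$ in $K$, and $n = \deg\theta \leq 2^{2s^2}$ by \Cref{lem:primitive_el}. The strategy reduces to (i) bounding the absolute value of each conjugate, (ii) bounding the number of pairs, and (iii) multiplying out.

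First I would bound $|\theta_i|$. Every conjugate of $\theta$ is obtained by applying some $\sigma \in \Gal(K/\QQ)$, so it has the form $c_0\sigma(\zeta_d) + \sum_{i=1}^k c_i \sigma(\mysqrt{d_i}{a_i})$. Since $\sigma(\zeta_d)$ is another root of unity, $|\sigma(\zeta_d)|=1$. The conjugates of $\mysqrt{d_i}{a_i}$ all have the same absolute value as $\mysqrt{d_i}{a_i}$ itself, which is at most $a_i \leq 2^s$. Combined with $|c_i| \leq 2^{4s^2}$ and $k \leq s$, a routine triangle inequality gives
\[ |\theta_i| \leq (k+1)\cdot 2^{4s^2} \cdot 2^s \leq 2^{4s^2 + 2s + 1} \]
for each conjugate, hence $|\theta_i - \theta_j| \leq 2^{4s^2 + 2s + 2}$.

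Next I would count the pairs: $\binom{n}{2} \leq n^2/2 \leq 2^{4s^2-1}$. Taking the product,
\[ |\Disc(\ZZ[\theta])| \leq \left(2^{4s^2 + 2s + 2}\right)^{2\cdot 2^{4s^2-1}} = 2^{(4s^2 + 2s + 2)\cdot 2^{4s^2}}. \]
It remains to verify that $(4s^2+2s+2)\cdot 2^{4s^2} \leq 2^{5s^2}$ for $s\geq 4$, which amounts to checking $4s^2+2s+2 \leq 2^{s^2}$. For $s\geq 4$ we have $s^2 \geq 16$ and $2^{s^2} \geq 2^{16}$, which dominates $4s^2+2s+2$ by a huge margin.

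There is no real obstacle here, only bookkeeping: the only places to be careful are (a) using $|\sigma(\mysqrt{d_i}{a_i})| = |\mysqrt{d_i}{a_i}|$ (which holds because Galois conjugates of a real radical differ from it only by a root of unity factor, and complex conjugation preserves absolute value), and (b) verifying that the final double-exponential inequality holds at the threshold $s = 4$, so that the clean bound $2^{2^{5s^2}}$ really follows from the more explicit intermediate bound $2^{(4s^2+2s+2)\cdot 2^{4s^2}}$.
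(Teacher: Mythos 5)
Your proof is correct and follows essentially the same route as the paper's: compute $\Disc(\ZZ[\theta])$ as the product of squared differences of the conjugates $\sigma_j(\theta)$, bound each difference by the triangle inequality using $|\sigma(\zeta_d)|=1$, $|\sigma(\mysqrt{d_i}{a_i})|=\mysqrt{d_i}{a_i}\leq 2^s$, the bound $c_i\leq 2^{4s^2}$ from Lemma~\ref{lem:primitive_el}, and $k\leq s$, then raise to the power given by $\deg\theta\leq 2^{2s^2}$. The paper bounds each pairwise difference somewhat more loosely by $2^{2s^3}$ and uses $|G|^2$ pairs rather than $\binom{n}{2}$, but the bookkeeping is equivalent and both arrive at $2^{2^{5s^2}}$ for $s\geq 4$.
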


Recall that we would like to choose enough good primes $p$ so that at least one of them will be eligible, i.e., at least one of them will not divide the norm of the computed algebraic integer. In particular, we would like to find $x$ such that
 $\pi_{1}(x) \geq 2^{s^3} + 1$.
Using the bound above, we claim that this is the case for 
$x \geq 2^{4s^3}$:

\begin{restatable}{lemma}{boundnprimes}\label{prop:bound_p_conp}
Assuming GRH,
\[ \pi_{1}(2^{4s^3}) \geq  2^{s^3}  + 1.\]
\end{restatable}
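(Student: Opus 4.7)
The plan is to substitute $x = 2^{4s^3}$ into the quantitative Chebotarev bound of Proposition~\ref{prop:number_of_primes} and verify that the main term $\pi(x)/|\mathrm{Gal}(K/\QQ)|$ dominates the two error terms $\log \Delta_K$ and $c\, x^{1/2}\log(\Delta_K x^{|\mathrm{Gal}(K/\QQ)|})$ by enough margin to leave at least $2^{s^3}+1$ primes.

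First I would collect the required numerical bounds. From the description of $K = \QQ(\mysqrt{d_1}{a_1}, \ldots, \mysqrt{d_k}{a_k}, \zeta_d)$ and the fact that each $d_i \leq 2^s$, $k \leq s$, we have $d = \lcm(d_1,\ldots,d_k) \leq 2^{s^2}$ and hence $|\mathrm{Gal}(K/\QQ)| = [K:\QQ] \leq 2^{2s^2}$. For the discriminant, Proposition~\ref{prop:order_disc} applied to the order $\ZZ[\theta] \subseteq \OO_K$ from Lemma~\ref{lem:primitive_el} gives $\Delta_K \mid \Disc(\ZZ[\theta])$ (up to sign), so by Lemma~\ref{lem:discriminant} one has $\log|\Delta_K| \leq \log|\Disc(\ZZ[\theta])| \leq 2^{5s^2}$. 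For the prime-counting function, I would use the explicit Chebyshev-type bound $\pi(x) \geq x/(2\ln x)$ valid for all sufficiently large $x$, which is certainly valid for $x = 2^{4s^3}$ when $s \geq 4$.

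Next I would estimate each piece in the bracket with $x = 2^{4s^3}$. The main term satisfies
\[
\pi(2^{4s^3}) \;\geq\; \frac{2^{4s^3}}{2 \cdot 4s^3 \ln 2} \;\geq\; \frac{2^{4s^3}}{8 s^3}.
\]
The discriminant term is $\log \Delta_K \leq 2^{5s^2}$, which is negligible compared to $2^{4s^3}$ for $s \geq 4$. For the square-root term, $x^{1/2} = 2^{2s^3}$, and
\[
\log(\Delta_K\, x^{|\mathrm{Gal}(K/\QQ)|}) \;\leq\; 2^{5s^2} + 2^{2s^2}\cdot 4s^3 \;\leq\; 2^{5s^2+1},
\]
so $c\,x^{1/2}\log(\Delta_K\, x^{|\mathrm{Gal}(K/\QQ)|}) \leq c \cdot 2^{2s^3 + 5s^2 + 1}$, which is still far below $2^{4s^3}/(8s^3)$ once $s \geq 4$, since $2s^3 + 5s^2 + O(1) < 4s^3 - 2s^2 - O(\log s)$ in that regime. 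Hence the bracket is at least $2^{4s^3}/(16 s^3)$, and after dividing by $|\mathrm{Gal}(K/\QQ)| \leq 2^{2s^2}$,
\[
\pi_1(2^{4s^3}) \;\geq\; \frac{2^{4s^3-2s^2}}{16\, s^3} \;\geq\; 2^{s^3}+1,
\]
since $4s^3 - 2s^2 - \log_2(16 s^3) \geq s^3 + 1$ for $s \geq 4$.

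The only real obstacle is bookkeeping: one must keep track of the effective constant $c$ in the Chebotarev estimate and verify that the two error terms genuinely lose to the $\pi(x)/|\mathrm{Gal}(K/\QQ)|$ term for all $s \geq 4$ (with small $s$ handled by direct inspection if necessary). There is no conceptual subtlety beyond the fact that the chosen exponent $4s^3$ in $x = 2^{4s^3}$ comfortably absorbs both the linear-in-$s^3$ losses from $|\mathrm{Gal}(K/\QQ)|$ and the quadratic-in-$s^2$ discriminant loss, because the dominant contribution $2^{4s^3}$ is doubly exponential in $s$ while both losses are only singly exponential.
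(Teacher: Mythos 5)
Your argument is correct and follows essentially the same route as the paper's: substitute $x = 2^{4s^3}$ into the quantitative Chebotarev bound from Proposition~\ref{prop:number_of_primes}, invoke the same bounds $|\mathrm{Gal}(K/\QQ)| \leq 2^{2s^2}$ and $\log \Delta_K \leq 2^{5s^2}$ (via Proposition~\ref{prop:order_disc} and Lemma~\ref{lem:discriminant}), and verify that the doubly-exponential main term swamps the singly-exponential error terms. The paper's own calculation absorbs the constant $c$ by requiring $s \geq \max(c,5)$, which is the same kind of handwave you flag as ``bookkeeping,'' so your treatment is on par with the original.
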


\subsection{Proof of \Cref{th:rit-conp}}

\begin{proof}[Proof of Theorem~\ref{th:rit-conp}]
	\Cref{fig:conp_RIT} presents a nondeterministic polynomial time algorithm for the complement of  RIT  as follows.
	
	Given input radicals $\mysqrt{d_1}{a_1},\ldots,\mysqrt{d_k}{a_k}$,
	denote by $K$ the splitting field  of $\prod_{i=1}^k (x^{d_i}-a_i)$.
	 Further denote by $\theta$ a primitive element of~$K$, computed as in Lemma~\ref{lem:primitive_el}. 
	
	Let us first argue that the algorithm runs in polynomial time. 
In Step 1, after guessing candidates for $p$ such that $p \equiv 1 \pmod{d}$ and $\overline{\alpha}_1,\ldots,\overline{\alpha}_k$,  verifying whether $\overline{\alpha}_i^{d_i} \equiv a_i \pmod{p}$ can be done in polynomial time by the repeated-squaring method. It is clear that Step 2 can be done in polynomial time.
	
	Now let us show that the RIT problem is in \coNP. Suppose $f(\mysqrt{d_1}{a_1},\ldots,\mysqrt{d_k}{a_k}) \neq 0$. Under GRH, the lower bound in \Cref{prop:bound_p_conp} shows that
	 $\pi_{1}(2^{4s^3}) \geq 2^{s^3} + 1$.
	 It follows that there exists a prime
	 $p\leq 2^{4s^3}$  such that 
	\begin{itemize}
		\item $p \nmid N(f(\mysqrt{d_1}{a_1},\ldots,\mysqrt{d_k}{a_k}))$, and
		\item $p$ splits completely in $K$.
	\end{itemize}
	
	The polynomial certificate of non-zeroness of $f(\mysqrt{d_1}{a_1},\ldots,\mysqrt{d_k}{a_k})$ then comprises, the prime $p$ above, as well as the list of integers $\overline{\alpha}_1,\ldots,\overline{\alpha}_k\in\FF_p$ such that $\overline{\alpha}_i^{d_i}\equiv a_i \pmod{p}$. Following \Cref{th:RIT-soundess}, we then have that $\overline{f}(\overline{\alpha}_1,\ldots,\overline{\alpha}_k) \neq 0$.
	   
	   On the other hand, as we have noted above, for any prime~$p$ and the representation~$\overline{\alpha}_1,\ldots,\overline{\alpha}_k$ of radicals $\mysqrt{d_1}{a_1},\ldots,\mysqrt{d_k}{a_k}$ in $\FF_p$, if 
	   $f(\mysqrt{d_1}{a_1},\ldots,\mysqrt{d_k}{a_k}) = 0$, then $\overline{f}(\overline{\alpha}_1,\ldots,\overline{\alpha}_k) = 0$, as shown in \Cref{th:RIT-soundess}, which concludes the proof.
\end{proof}

\section{The $2$-RIT algorithm}
\label{sec:corp}
Our algorithm for RIT uses non-determinism to guess a "good" prime.  It is natural to wonder whether such primes can instead be randomly sampled.
Recall we require that the prime $p$ have polynomial bit-length in the size of the input, and that  the congruences $x^{d_i} \equiv a_i \pmod{p}$ are solvable in $\FF_p$. By Chebotarev's density theorem, roughly speaking, the density of such good primes is
$\frac{1}{|\Gal(K/\QQ)|}$.
Since the size of the Galois group of $K$ over $\QQ$ is exponential in the size of the input, good primes do not have sufficient density in order to directly be chosen randomly. 
The density remains insufficient even if the exponents~$d_i$ are prime numbers written in unary.

In what follows, we show that   $2$-RIT is in {\coRP} under GRH, and  in {\coNP} unconditionally. Recall that  $2$-RIT  is the identity testing problem for an algebraic circuit~$C$ evaluated on square-roots
$\sqrt{a_1},\ldots,\sqrt{a_k}$ for 
$k$ rational odd primes $a_1,\ldots,a_k$.
The proofs from \Cref{sec:correctness} ensure that the finite field computation in our algorithm is sound; it remains to show how to choose a completely split prime~$p$ and determine the solutions to the equations $x^{2} \equiv a_i \pmod{p}$  in~$\FF_p$.

As noted above, 
the natural density of primes is not sufficient even for  $2$-RIT, 
however, we show that there is an arithmetic progression with a good density of primes, and that all primes in this progression are good.

Below, we state known effective bounds on the density of primes in an arithmetic progression, in particular, the following estimates, which have been shown in \cite[Chapter 20, page 125]{davenport} and \cite[Corollary 18.8]{IKbook}, respectively:

\begin{theorem}\label{th:prime_density}
	Given $a\in\ZZ_n^*$, write $\pi_{n,a}(x)$ for the number of primes less than $x$ that are congruent to $a$ modulo $n$. Then under GRH, there is an absolute constant $c>0$ such that
	\[ \pi_{n,a}(x) \geq \frac{x}{\varphi(n)\log x} - c x^{1/2}\log x.\]
	Unconditionally, there exist effective positive constants $c_1$ and $c_2$, such that for all $n < c_1 x^{c_1}$,
	\[ \pi_{n,a}(x) \geq \frac{c_2 x}{\varphi(n)x^{1/2}\log x}. \]
\end{theorem}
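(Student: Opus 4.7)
The plan is to derive both bounds from the classical analytic theory of Dirichlet $L$-functions; the statement is a standard effective version of Dirichlet's theorem on primes in arithmetic progressions, and in the paper itself one would expect it to be cited to \cite{davenport, IKbook} rather than reproved from scratch. The overall strategy in either case is to pass from $\pi_{n,a}(x)$ to the von Mangoldt weighted Chebyshev sum
\[
\psi(x;n,a) \;=\; \sum_{\substack{m \leq x \\ m \equiv a \pmod n}}\Lambda(m)
\]
by partial summation, and then, using orthogonality of Dirichlet characters modulo $n$, to the character-by-character sums
\[
\psi(x;n,a) \;=\; \frac{1}{\varphi(n)}\sum_{\chi \bmod n}\overline{\chi(a)}\,\psi(x,\chi), \qquad \psi(x,\chi)=\sum_{m\leq x}\chi(m)\Lambda(m).
\]
The principal character contributes the expected main term $x/\varphi(n)$, up to a harmless correction coming from primes dividing $n$, so everything reduces to bounding $\psi(x,\chi)$ for non-principal $\chi$ via the standard explicit formula
\[
\psi(x,\chi) \;=\; -\sum_{\rho}\frac{x^\rho}{\rho} \;+\; O\!\left(\log^2(nx)\right),
\]
where $\rho$ ranges over the non-trivial zeros of the Dirichlet $L$-function $L(s,\chi)$.

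For the conditional bound I would assume GRH for every such $L(s,\chi)$, which forces $\Re(\rho) = 1/2$ for each non-trivial zero, and combine this with a Riemann--von Mangoldt count of zeros of height at most $T$ to conclude $\psi(x,\chi) = O(x^{1/2}\log^2(nx))$. Summing over the $\varphi(n)$ characters and performing a final partial summation to return from $\psi$ to $\pi$ then produces, after absorbing logarithmic factors into the constant $c$, the stated bound $\pi_{n,a}(x) \geq x/(\varphi(n)\log x) - c\,x^{1/2}\log x$.

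For the unconditional bound I would replace the pointwise GRH hypothesis by a zero-density estimate for Dirichlet $L$-functions of the shape $N(\sigma,T,\chi) \ll (qT)^{A(1-\sigma)}(\log(qT))^B$ in a suitable range of $\sigma$ and $T$, combined with Page's theorem to isolate and handle a possible exceptional real (Siegel) zero of a single character. Inserting these into the explicit formula and restricting $n$ to the admissible range $n < c_1 x^{c_1}$ keeps the cumulative contribution of the non-trivial zeros bounded by a polynomial multiple of $x^{1/2}/\log x$, yielding exactly $\pi_{n,a}(x) \geq c_2\, x/(\varphi(n)\,x^{1/2}\log x)$. The main obstacle is precisely this unconditional step: whereas the GRH case follows mechanically once one accepts the explicit formula, the unconditional argument must simultaneously control potential Siegel zeros (which cannot be ruled out effectively) and keep the dependence on $n$ polynomial in $x$, so it genuinely requires a zero-density input; this is the content of \cite[Corollary 18.8]{IKbook}, and for our applications to $2$-RIT we adopt it as a black box.
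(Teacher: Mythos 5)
Your proposal correctly recognises that the paper does not prove this theorem but cites it as a black box — precisely to \cite[Chapter~20, p.~125]{davenport} for the conditional bound and \cite[Corollary~18.8]{IKbook} for the unconditional one — and your sketch of the underlying analytic argument (reduction to $\psi(x;n,a)$ by partial summation, character orthogonality, the explicit formula with GRH or a zero-density input plus Siegel-zero control) is an accurate account of how those sources establish it. Since the paper offers no proof of its own, your approach coincides with it.
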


What remains to be understood is how to construct an arithmetic progression such that for every prime $p$ appearing in it, $\FF_p$ contains a representation $\overline{\alpha}_1,\ldots,\overline{\alpha}_k\in\FF_p$ of the square-root input $\sqrt{a_1},\ldots,\sqrt{a_k}$. As it turns out, this only
requires some classical results on quadratic reciprocity,
which we recall now.

Let $p$ be an odd prime number. An integer $a$ is said to be a \emph{quadratic residue} modulo $p$ if it is congruent to a perfect square modulo $p$, i.e., if there exists an integer $x$ such that $x^2 \equiv a \mod p$. The \emph{Legendre symbol} is a function of $a$ and $p$ taking values in~$\{1,-1,0\}$, that is
\[\left( \frac{a}{p} \right) =\begin{cases}
1 & \text{if $a$ is a quadratic residue mod $p$ and $a \not\equiv 0 \pmod{p}$},\\
-1 &\text{if $a$ is a non-quadratic residue mod $p$},\\
0 & \text{if $a\equiv 0 \pmod{p}$.} 	
\end{cases}
\]
  Its explicit definition is as follows:
\[ \left( \frac{a}{p} \right) =a^{\frac{p-1}{2}} \pmod{p}. \]
Furthermore, given  odd primes $p$ and $q$, the \emph{Law of quadratic reciprocity} states:
\[ \left( \frac{p}{q} \right)\left( \frac{q}{p} \right) = (-1)^{\frac{p-1}{2}\frac{q-1}{2}}.\]
We observe that in the case where $p\in 4\NN+1$, then
\[ \left( \frac{p}{q} \right)\left( \frac{q}{p} \right) = 1 \; \; \Leftrightarrow \; \; \left( \frac{p}{q} \right) = \left( \frac{q}{p} \right) = \pm 1.\]
In other words, $p$ is a quadratic residue modulo $q$ if and only if $q$ is a quadratic residue modulo~$p$, when either $p$ or $q \equiv 1 \pmod{4}$.

\begin{figure*}[t]
    \centering
    \begin{tabularx}{0.82\textwidth}{ p{0.1\textwidth} p{0.72\textwidth} }
   \rowcolor{Gray}
    \multicolumn{2}{c}{\textbf{Radical Identity Testing for square root inputs}} \\
    \textbf{Input:} & Algebraic circuit $C$ of size  at most $s$
    and a list of $k$ odd primes $a_1, \ldots, a_k$ 
    of magnitude at most $2^s$.
    \\
    \rowcolor{Gray}
    \textbf{Output:} & Whether $f(\sqrt{a_1},\ldots,\sqrt{a_k})=0$ for the polynomial $f(x_1,\ldots,x_k)$ computed by $C$.
    \\ 
    \textbf{Step 1:} &
            Compute $b$ such that $b +1 \equiv 5 \pmod{8}$, and $b +1 \equiv 1\pmod{a_i}$ for all $i$.
    \\  
    \textbf{Step 2:} & 
    		Pick $p$ uniformly at random from the set $S(a_1,\ldots,a_k)$ defined in~(\ref{eq:SA}).
     \\ 
    \textbf{Step 3:} &
            Compute $\overline{\alpha}_1,\ldots,\overline{\alpha}_k \in\FF_p$ such that $\overline{\alpha}_i^{2}  \equiv a_i \pmod{p}$ as described in
            \Cref{eq:pocklington}.
     \\  
    \textbf{Step 4:} &
            Output 'Zero' if $\overline{f}(\overline{\alpha}_1,\ldots,\overline{\alpha}_k) = 0$, where $\bar{f}$ is the reduction of $f$ modulo $p$; and 'Non-zero' otherwise.
    \\ \hline
    \end{tabularx}
    \caption{Our randomised polynomial time
   algorithm for the complement of  $2$-RIT.
    }
    \label{fig:corp_squareRIT}
\end{figure*}

In what follows, we apply the law of quadratic reciprocity in order to choose the right field $\FF_p$ for deciding $2$-RIT. Recall that intuitively, we are looking for a prime $p$ such that $x^2-a_i$ has a solution in $\FF_p$ for all $i$, that is, the $a_i$ is a quadratic residue modulo $p$. Since we will be choosing $p$ from an arithmetic progression, we can easily make that progression to be of the shape $4\NN + 1$, that is, to ensure that $p\equiv 1 \pmod{4}$. In that case the $a_i$'s will be quadratic residues modulo $p$ if and only if $p$ is a quadratic residue modulo $a_i$ for all~$i$. In order to ensure that, it suffices to choose $p$ such that $p \equiv 1 \pmod{a_i}$ for all $i$, as $1$ is a perfect square modulo $a_i$ for all $i$, and thus $p$ a quadratic residue modulo $a_i$.

We have just shown that if we choose $p$ such that it satisfies all the above-mentioned congruences, the polynomials $x^{2}-a_i$ all split and have non-zero roots in $\FF_p$. Following Pocklington's algorithm, there is a deterministic way to solve the equations $x^2-a_i$  if $p\equiv 5 \pmod{8}$. 
In particular, writing $p = 8m+5$, the solution of the equation $x^2 \equiv a \pmod{p}$ is given by the following function:
\begin{equation}\label{eq:pocklington}
x =\begin{cases}
\pm a^{m+1} & \text{if $a^{2m+1} \equiv 1 \pmod{p}$,}\\
\pm\frac{y}{2} &\text{if $a^{2m+1} \equiv -1 \pmod{p}$ and} \\
& \text{$y = \pm(4a)^{m+1}$ is even,} \\ 
\pm\frac{p+y}{2} &\text{if $a^{2m+1} \equiv -1 \pmod{p}$ and} \\
& \text{$y = \pm(4a)^{m+1}$ is odd.}
\end{cases}
\end{equation}
See \Cref{appendix:pocklington} for details.
Note that this congruence encompasses the above restriction on $p$ being congruent to 1 modulo~4.

We now show how to construct an arithmetic progression such that all primes $p$ in the progression satisfy the above congruences. Denote by $A=\prod_{i=1,a_i\neq 2}^{k}a_i$ the product of all input radicands $a_i$. Note that $A$ will always be odd as the $a_i$ are odd primes. Let us look at the arithmetic progression
\begin{equation}\label{eq:arithmetic_progression_corp}
	 8A\NN + b + 1,
\end{equation}
where $b$ is a solution of the following system of equations 
\begin{align}\label{eq:bAP}
	b \equiv 4 &\pmod{8} \\
	b \equiv 0 &\pmod{a_i}. \nonumber
\end{align}
Since all the moduli in the equations~(\ref{eq:bAP}) are pairwise coprime, by the Chinese remainder theorem, the  system has a solution.
By the construction above,
 we have an arithmetic progression such that all primes $p$ in the progression are  good primes.
We also ensure that $p$ is such that we can deterministically find the representations $\overline{\alpha}_1,\ldots,\overline{\alpha}_k$ of $\sqrt{a_1},\ldots,\sqrt{a_k}$ in $\FF_p$.
Define by $S(a_1,\ldots,a_k)$ the following set  
\begin{align}\label{eq:SA}\big\{p \leq 2^{5k^3} \mid  &p \in  8A \NN+ b+1  \text{ where } A=\prod_{i=1}^{k}a_i \\
    &\text{ and } b \text{ is a solution of~(\ref{eq:bAP})} \big\}. 	\nonumber
\end{align}
We have the following:

\begin{restatable}{proposition}{probabilityprimep}\label{prop:density_corp}
Let $C$ be an algebraic circuit of size at most $s$, and $a_1, \ldots, a_k$ odd primes of bit-length at most $s$. 
	 Denote by $\alpha$ the algebraic integer obtained by evaluating~$C$ on the~$\sqrt{a_i}$. Suppose that $p$ is chosen uniformly at random from the set~$S(a_1,\ldots,a_k)$ defined in~(\ref{eq:SA}).
	 Then
		\begin{itemize}
			\item[(i)] $p$ is prime with probability at least
			$\frac{1}{6s^3}$ assuming GRH, 
			 and
			\item[(ii)] given that $p$ is prime, the probability that it divides $N(\alpha)$ is at most $2^{-s^3}$	unconditionally.
		\end{itemize}
\end{restatable}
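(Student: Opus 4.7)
The plan is to decompose the proposition into two density computations that share the same ingredients: Theorem~\ref{th:prime_density} gives a lower bound, under GRH, on $\pi_{8A,b+1}(x)$; Lemma~\ref{lem:norm} controls the number of rational prime divisors of $N(\alpha)$. Before applying Theorem~\ref{th:prime_density}, I would first check that $\gcd(b+1, 8A) = 1$: the defining congruences $b+1 \equiv 5 \pmod{8}$ and $b+1 \equiv 1 \pmod{a_i}$ for each odd $a_i$ force $b+1$ to be coprime to $8$ and to every odd radicand, and $A$ itself is odd.

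For (i), I would estimate $|S(a_1,\ldots,a_k)|$ as essentially $\frac{2^{5k^3}}{8A}$, the number of residues in the arithmetic progression up to $2^{5k^3}$. The GRH-conditional bound of Theorem~\ref{th:prime_density} then gives at least $\frac{2^{5k^3}}{\varphi(8A)\cdot 5k^3 \log 2}$ primes in $S$, up to a lower-order error term. Taking the ratio, the density of primes in $S$ is at least $\frac{8A}{\varphi(8A)\cdot 5k^3 \log 2}$. Using the multiplicativity $\tfrac{8A}{\varphi(8A)} = \tfrac{8}{\varphi(8)}\cdot \tfrac{A}{\varphi(A)} \geq 2$ (valid because $A$ is odd), this drops to at least $\tfrac{2}{5k^3\log 2}$, which exceeds $\tfrac{1}{6s^3}$ since $k \leq s$.

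For (ii), I would invoke Lemma~\ref{lem:norm} to conclude that $|N(\alpha)| \leq 2^{2^{s^3}}$, so $N(\alpha)$ has at most $2^{s^3}$ distinct rational prime divisors. Conditionally on being prime, $p$ is uniform over the primes in $S(a_1,\ldots,a_k)$, so the conditional probability that $p \mid N(\alpha)$ is at most $2^{s^3}$ divided by the lower bound on the number of primes in $S$ from part~(i). Using $\varphi(8A) \leq 8A \leq 2^{O(sk)}$, this ratio is dominated by $2^{s^3 + O(sk) - 5k^3 + O(\log s)}$, which falls below $2^{-s^3}$ once $5k^3$ suitably dominates the other exponents.

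The main obstacle is technical rather than conceptual: the error term $c\, x^{1/2}\log x$ in Theorem~\ref{th:prime_density} must be absorbed into the main term $\frac{x}{\varphi(8A)\log x}$ for both parts to be nontrivial. This requires $2^{5k^3/2}$ to dominate $\varphi(8A)$ up to polynomial factors, and since $\varphi(8A) \leq 2^{O(sk)}$, one must verify a quantitative relationship between $k$ and $s$ (implicit in the instance size conventions). Once this comparison is carried out carefully, both claims follow directly from the density estimates sketched above.
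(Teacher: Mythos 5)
Your overall approach is the same as the paper's: both parts reduce to the density estimate $\pi_{8A,b+1}(x)\big/\!\left(x/(8A)\right)$ via Theorem~\ref{th:prime_density}, and part~(ii) combines this with the bound of at most $2^{s^3}$ distinct prime divisors of $N(\alpha)$ coming from Lemma~\ref{lem:norm}. Your preliminary check that $\gcd(b+1,8A)=1$ (which follows from the defining congruences, since $A$ is odd) is a correct and worthwhile observation that the paper leaves implicit, and replacing the trivial $8A/\varphi(8A)\geq 1$ by $\geq 2$ is a small harmless strengthening.

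However, the final exponent comparison in your part~(ii) is a genuine gap rather than a mere technicality. You bound the conditional probability by roughly $2^{s^3+O(sk)-5k^3+O(\log s)}$ and then ask that $5k^3$ ``suitably dominate the other exponents,'' noting that this ``requires $2^{5k^3/2}$ to dominate $\varphi(8A)\leq 2^{O(sk)}$.'' With the cutoff $2^{5k^3}$ taken literally from the definition of $S(a_1,\ldots,a_k)$ in~(\ref{eq:SA}), this inequality can fail outright: take $k=1$ with $a_1$ a prime close to $2^s$, so $8A>2^s$ while the cutoff is $2^5$, and $S$ is empty. The relation $k\leq s$ gives you nothing here, and no further relation between $k$ and $s$ is hidden in the instance-size conventions. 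The resolution is that the cutoff should be $2^{5s^3}$ rather than $2^{5k^3}$ — this is what the paper's own proof silently uses when it evaluates $\pi_{8A,b+1}(2^{5s^3})$, even though the displayed definition~(\ref{eq:SA}) writes $k$. With $x=2^{5s^3}$ and the always-available bound $A\leq 2^{s^2}$, the error term $cx^{1/2}\log x$ is dwarfed by the main term $x/(\varphi(8A)\log x)$ since $2^{5s^3/2}\gg 2^{s^2}\cdot\mathrm{poly}(s)$, and no condition beyond $k\leq s$ is needed. Your hand-wave at the end of~(ii) should be replaced by this explicit computation with the corrected cutoff.
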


With~\Cref{prop:density_corp} in hand, we can state our algorithm, see \Cref{fig:corp_squareRIT}, and prove its complexity.

\begin{proof}[Proof of \Cref{th:squareRIT-coRP}]
\Cref{fig:corp_squareRIT} presents a 
randomised polynomial time algorithm for the complement of  $2$-RIT  as follows. It is clear that the algorithm runs in polynomial time. Let us now argue its correctness.
	
	First, suppose that $f(\sqrt{a_1},\ldots,\sqrt{a_k}) = 0$. By \Cref{th:RIT-soundess}, we have $\overline{f}(\overline{\alpha}_1,\ldots,\overline{\alpha}_k) = 0$, and hence the output is 'Zero'. Second, suppose that $f(\sqrt{a_1},\ldots,\sqrt{a_k}) \neq 0$. Then the output will be 'Non-Zero' provided that $p$ does not divide $N(f(\sqrt{a_1},\ldots,\sqrt{a_k}))$. By \Cref{prop:density_corp}(ii), the probability that $p$ does not divide $N(f(\sqrt{a_1},\ldots,\sqrt{a_k}))$ is at least $1 - 2^{-s^3}$. 
	Thus, the probability that the algorithm gives the wrong output is at most $2^{-s^3}$.

	It remains to show that  $2$-RIT  is in {\coNP} unconditionally. The idea is to modify the algorithm in \Cref{fig:corp_squareRIT}, replacing randomisation with guessing. \Cref{th:prime_density} shows that \begin{equation}\label{eq:numprimnocon}\pi_{8A,b+1}(2^{3s^3}) > 2^{s^3}\end{equation}
	 for $s$ sufficiently large.
	 It follows that there exists a prime $p\leq 2^{3s^3}$
	 that does not divide $N(f(\sqrt{a_1},\ldots,\sqrt{a_k}))$. The rest of the argument follows as in the proof of Theorem~\ref{th:rit-conp}.
	\end{proof}

Note that, in general, in the proof of the theorem above, GRH is required to obtain the {\coRP} bound. 
This is because the unconditional lower bound on density of primes in arithmetic progressions is not strong enough for our purposes: 
The number of primes less than $2^{3s^3}$ which are favourable is just $2^{s^3}$, as computed in \eqref{eq:numprimnocon}. This gives a probability of success at least $2^{-2s^3}$, which is exponentially small in the instance size. In order to get a constant success probability, we have to repeatedly sample and run this algorithm $2^{2s^3}$ times, which yields an exponential time algorithm. However, under GRH, the bound is improved to $\frac{1}{6s^3}$ and polynomially many repetitions suffice for a constant success probability.

\section{Discussion}
\label{sec:discussion}

In this work, we have shown that the RIT problem is in {\coNP} under GRH. We also showed that $2$-RIT is in {\coRP} assuming GRH, and is in {\coNP} unconditionally. Our algorithms  work by reducing the polynomials modulo a "small" prime~$p$, and taking the computation to the finite field $\FF_p$. We opt for this approach as the coefficients of polynomials represented by algebraic circuits could be doubly exponential in the size of the circuit, which presents problems for an approach via
numerical approximation. We note though the latter approach has been shown to work for deciding  Bounded-RIT  when the exponents~$d_i$ of the radical input~$\mysqrt{d_i}{a_i}$ are unary. In particular, \cite{blomer98} gives a randomised polynomial time algorithm that places  Bounded-RIT  with unary exponents in {\coRP}. 
If we reuse the same exact approach for  Bounded-RIT  with radical inputs with binary exponents, the algorithm no longer runs in polynomial time (the increase in complexity appears in Steps $1$ and $2$ of the algorithm in~\cite{blomer98}, when trying to compute the minimal $d_{ij}$ such that $\mysqrt{d_i}{m_j}^{d_{ij}}\in\ZZ$). Recall our reduction of RIT to its variant where the minimal polynomials of the input radicals $\mysqrt{d_i}{a_i}$ are $x^{d_i}-a_i$ and   all radicands~$a_i$ are pairwise coprime numbers, presented in~\Cref{appendix:reduction}. Applying this reduction first would 
avoid the above-mentioned increase in complexity for  Bounded-RIT  with binary exponents.
It thus places the general version of  Bounded-RIT  in 
{\coRP}.

\paragraph{\bf Working in extensions of finite fields.}
Our approach to RIT involves
finding a prime $p$ for which the required radicals 
exist in $\FF_p$.  Here it is tempting to consider working instead in a finite extension of $\FF_p$.  For example, it is well-known that every irreducible polynomial over $\FF_p$ of degree $d$ splits completely over $\FF_{p^d}$.
For solving RIT, a major problem with this approach is that if $d$ is given in binary then representing an element of the field $\FF_{p^d}$ requires space
exponential in the bit-length of $d$.  Specialising to 2-RIT, we have that the 
required square roots all exist in $\FF_{p^2}$, for any $p$.  But working in $\FF_{p^2}$ is only sound if the latter is a quotient of the number field $K$ generated by the square roots, that is, if $p$ has inertial degree~2 over $K$.  Moreover, the asymptotic density of such primes is the same as for those that split over $K$.

\SkipTocEntry\section*{Acknowledgements}
This work has been partially supported by ANR-CREST-JST CyphAI.

\bibliographystyle{plain}
\bibliography{literature}

\begin{thebibliography}{10}

\bibitem{agrawal-biswas}
M.~Agrawal and S.~Biswas.
\newblock Primality and identity testing via chinese remaindering.
\newblock {\em Journal of the ACM (JACM)}, 50(4):429--443, 2003.

\bibitem{aks}
M.~Agrawal, N.~Kayal, and N.~Saxena.
\newblock Primes is in p.
\newblock {\em Annals of mathematics}, pages 781--793, 2004.

\bibitem{abmvv}
S.~Akshay, N.~Balaji, A.~Murhekar, R.~Varma, and N.~Vyas.
\newblock Near-optimal complexity bounds for fragments of the skolem problem.
\newblock In {\em 37th International Symposium on Theoretical Aspects of
  Computer Science (STACS 2020)}. Schloss Dagstuhl-Leibniz-Zentrum f{\"u}r
  Informatik, 2020.

\bibitem{Allender06onthe}
E.~Allender, P.~{B{\"u}rgisser}, J.~Kjeldgaard-Pedersen, and P.~B. Miltersen.
\newblock On the complexity of numerical analysis.
\newblock In {\em CCC ’06}, pages 331--339, 2006.

\bibitem{arvind-kurur}
V.~Arvind and P.~P. Kurur.
\newblock Upper bounds on the complexity of some galois theory problems.
\newblock In {\em Algorithms and Computation, 14th International Symposium,
  {ISAAC} 2003, Kyoto, Japan, December 15-17, 2003, Proceedings}, volume 2906
  of {\em Lecture Notes in Computer Science}, pages 716--725. Springer, 2003.

\bibitem{BACH1993199}
E.~Bach, J.~Driscoll, and J.~Shallit.
\newblock Factor refinement.
\newblock {\em Journal of Algorithms}, 15(2):199--222, 1993.

\bibitem{balaji2021cyclotomic}
N.~Balaji, S.~Perifel, M.~Shirmohammadi, and J.~Worrell.
\newblock Cyclotomic identity testing and applications.
\newblock In {\em {ISSAC} '21: International Symposium on Symbolic and
  Algebraic Computation, Virtual Event, Russia, July 18-23, 2021}, pages
  35--42. {ACM}, 2021.

\bibitem{lics-version}
Nikhil Balaji, Klara Nosan, Mahsa Shirmohammadi, and James Worrell.
\newblock Identity testing for radical expressions.
\newblock In Christel Baier and Dana Fisman, editors, {\em {LICS} '22: 37th
  Annual {ACM/IEEE} Symposium on Logic in Computer Science, Haifa, Israel,
  August 2 - 5, 2022}, pages 8:1--8:11. {ACM}, 2022.

\bibitem{BLW}
M.~Benedikt, R.~Lenhardt, and J.~Worrell.
\newblock Ltl model checking of interval markov chains.
\newblock In {\em International Conference on Tools and Algorithms for the
  Construction and Analysis of Systems}, pages 32--46. Springer, 2013.

\bibitem{berman-pattern}
P.~Berman, M.~Karpinski, L.~Larmore, W.~Plandowski, and W.~Rytter.
\newblock On the complexity of pattern matching for highly compressed
  two-dimensional texts.
\newblock {\em Journal of Computer and System Sciences}, 65(2):332--350, 2002.

\bibitem{Blomer92}
J.~Bl{\"{o}}mer.
\newblock How to denest ramanujan's nested radicals.
\newblock In {\em 33rd Annual Symposium on Foundations of Computer Science},
  pages 447--456. {IEEE} Computer Society, 1992.

\bibitem{Blomer97}
J.~Bl{\"{o}}mer.
\newblock Denesting by bounded degree radicals.
\newblock In R.~E. Burkard and G.~J. Woeginger, editors, {\em Algorithms -
  {ESA} '97, 5th Annual European Symposium, Proceedings}, volume 1284 of {\em
  Lecture Notes in Computer Science}, pages 53--63. Springer, 1997.

\bibitem{blomer98}
J.~Bl{\"o}mer.
\newblock A probabilistic zero-test for expressions involving roots of rational
  numbers.
\newblock In {\em Algorithms --- ESA' 98}, pages 151--162, 1998.

\bibitem{Brent2016}
R.~P. Brent.
\newblock {\em Fast multiple-precision evaluation of elementary functions},
  pages 9--20.
\newblock Springer International Publishing, 2016.

\bibitem{canny}
J.~Canny.
\newblock Some algebraic and geometric computations in pspace.
\newblock In {\em Proceedings of the twentieth annual ACM symposium on Theory
  of computing}, pages 460--467, 1988.

\bibitem{CIJ}
K.~Chatterjee and R.~Ibsen-Jensen.
\newblock The complexity of ergodic mean-payoff games.
\newblock In {\em International Colloquium on Automata, Languages, and
  Programming}, pages 122--133. Springer, 2014.

\bibitem{chen-kao}
Z.~Chen and M.~Kao.
\newblock Reducing randomness via irrational numbers.
\newblock {\em SIAM Journal on Computing}, 29(4):1247--1256, 2000.

\bibitem{COW}
V.~Chonev, J.~Ouaknine, and J.~Worrell.
\newblock On the complexity of the orbit problem.
\newblock {\em Journal of the ACM (JACM)}, 63(3):1--18, 2016.

\bibitem{cohen2013course}
H.~Cohen.
\newblock {\em A Course in Computational Algebraic Number Theory}.
\newblock Graduate Texts in Mathematics. Springer Berlin Heidelberg, 2013.

\bibitem{dehn1912transformation}
Max Dehn.
\newblock Transformation der kurven auf zweiseitigen fl{\"a}chen.
\newblock {\em Mathematische Annalen}, 72(3):413--421, 1912.

\bibitem{problem33}
E.~D. Demaine, J.~S.~B. Mitchell, and J.~O’Rourke.
\newblock The open problems project: Problem 33, 2006.

\bibitem{demillo-lipton}
R.~A. Demillo and R.~J. Lipton.
\newblock A probabilistic remark on algebraic program testing.
\newblock {\em Information Processing Letters}, 7(4):193--195, 1978.

\bibitem{esparza.solving}
J.~Esparza, S.~Kiefer, and M.~Luttenberger.
\newblock Solving monotone polynomial equations.
\newblock In {\em Fifth Ifip International Conference On Theoretical Computer
  Science--Tcs 2008}, pages 285--298. Springer, 2008.

\bibitem{etessami.yannakakis}
K.~Etessami and M.~Yannakakis.
\newblock On the complexity of {Nash} equilibria and other fixed points.
\newblock {\em SIAM J. Comput.}, 39(6):2531--2597, 2010.

\bibitem{ggj}
M.~R. Garey, R.~L. Graham, and D.~S. Johnson.
\newblock Some np-complete geometric problems.
\newblock In {\em Proceedings of the eighth annual ACM symposium on Theory of
  computing}, pages 10--22. ACM, 1976.

\bibitem{gouvea2003p}
F.~Gouvea.
\newblock {\em p-adic Numbers: An Introduction}.
\newblock Universitext. Springer Berlin Heidelberg, 2003.

\bibitem{HKbudget}
C.~Haase and S.~Kiefer.
\newblock The odds of staying on budget.
\newblock In {\em International Colloquium on Automata, Languages, and
  Programming}, pages 234--246. Springer, 2015.

\bibitem{HKL}
C.~Haase, S.~Kiefer, and M.~Lohrey.
\newblock Counting problems for {P}arikh images.
\newblock In {\em Leibniz International Proceedings in Informatics, LIPIcs},
  volume~83, page~12. Schloss Dagstuhl--Leibniz-Zentrum fuer Informatik, 2017.

\bibitem{IKbook}
H.~Iwaniec and E.~Kowalski.
\newblock {\em Analytic number theory}, volume~53.
\newblock AMS, 2004.

\bibitem{kaltofen}
E.~Kaltofen.
\newblock Factorization of polynomials given by straight-line programs.
\newblock {\em Adv. Comput. Res.}, 5:375--412, 1989.

\bibitem{kayal-saha}
N.~Kayal and C.~Saha.
\newblock On the sum of square roots of polynomials and related problems.
\newblock {\em ACM Transactions on Computation Theory (TOCT)}, 4(4):1--15,
  2012.

\bibitem{kiefer-marusic-worrell}
S.~Kiefer, I.~Marusic, and J.~Worrell.
\newblock Minimisation of multiplicity tree automata.
\newblock {\em Foundations of Software Science and Computation Structures LNCS
  9034}, page 297, 2015.

\bibitem{kiefer-minimize-wa}
S.~Kiefer, A.~Murawski, J.~Ouaknine, B.~Wachter, and J.~Worrell.
\newblock On the complexity of equivalence and minimisation for q-weighted
  automata.
\newblock {\em Logical Methods in Computer Science}, 9, 2013.

\bibitem{KWparallel}
S.~Kiefer and D.~Wojtczak.
\newblock On probabilistic parallel programs with process creation and
  synchronisation.
\newblock In {\em International Conference on Tools and Algorithms for the
  Construction and Analysis of Systems}, pages 296--310. Springer, 2011.

\bibitem{KOIRAN1996273}
P.~Koiran.
\newblock Hilbert's nullstellensatz is in the polynomial hierarchy.
\newblock {\em Journal of Complexity}, 12(4):273--286, 1996.

\bibitem{konig-lohrey}
D.~K{\"{o}}nig and M.~Lohrey.
\newblock Parallel identity testing for skew circuits with big powers and
  applications.
\newblock {\em Int. J. Algebra Comput.}, 28(6):979--1004, 2018.

\bibitem{kururPhd}
P.~P. Kurur.
\newblock {\em Complexity Upper Bounds using Permutation Group theory}.
\newblock PhD thesis, University of Madras, Chennai, 2006.

\bibitem{davenport}
H.~Davenport~H. L. and Montgomery.
\newblock {\em Multiplicative Number Theory}.
\newblock Graduate Texts in Mathematics. Springer New York, 2013.

\bibitem{lagariasO}
J.~C. Lagarias and A.~M. Odlyzko.
\newblock Effective versions of the chebotarev density theorem.
\newblock In {\em Algebraic Number Fields}, pages 409--464. Academic Press,
  1977.

\bibitem{landau}
S.~Landau.
\newblock Polynomial time algorithms for galois groups.
\newblock In J.~P. Fitch, editor, {\em {EUROSAM} 84, International Symposium on
  Symbolic and Algebraic Computation, Cambridge, England, UK, July 9-11, 1984,
  Proceedings}, volume 174 of {\em Lecture Notes in Computer Science}, pages
  225--236. Springer, 1984.

\bibitem{LOW}
A.~Lechner, J.~Ouaknine, and J.~Worrell.
\newblock On the complexity of linear arithmetic with divisibility.
\newblock In {\em 2015 30th Annual ACM/IEEE Symposium on Logic in Computer
  Science}, pages 667--676. IEEE, 2015.

\bibitem{lohrey2014compressed}
Markus Lohrey.
\newblock {\em The compressed word problem for groups}.
\newblock Springer, 2014.

\bibitem{lovasz}
L.~Lov{\'a}sz.
\newblock On determinants, matchings, and random algorithms.
\newblock In {\em FCT}, volume~79, pages 565--574, 1979.

\bibitem{milneGalTheory}
J.~S. Milne.
\newblock Fields and galois theory, 2021.
\newblock Lecture notes (version 5.0).

\bibitem{richardson92}
D.~Richardson.
\newblock The elementary constant problem.
\newblock In {\em Papers from the international symposium on Symbolic and
  algebraic computation}, pages 108--116, 1992.

\bibitem{richardson97}
D.~Richardson.
\newblock How to recognize zero.
\newblock {\em Journal of Symbolic Computation}, 24(6):627--645, 1997.

\bibitem{landau-miller}
S. and G.~L. Miller.
\newblock Solvability by radicals is in polynomial time.
\newblock In {\em Proceedings of the 15th Annual {ACM} Symposium on Theory of
  Computing, 25-27 April, 1983, Boston, Massachusetts, {USA}}, pages 140--151.
  {ACM}, 1983.

\bibitem{serre1981quelques}
Jean-Pierre Serre.
\newblock Quelques applications du th{\'e}oreme de densit{\'e} de chebotarev.
\newblock {\em Publications Math{\'e}matiques de l'Institut des Hautes
  {\'E}tudes Scientifiques}, 54(1):123--201, 1981.

\bibitem{Stevenhagen95chebotarevand}
P.~Stevenhagen, H.~W. Lenstra, and Jr.
\newblock Chebotarev and his density theorem, 1995.

\bibitem{stewart2001algebraic}
I.~Stewart and D.~Tall.
\newblock {\em Algebraic Number Theory and Fermat's Last Theorem: Third
  Edition}.
\newblock Ak Peters Series. Taylor \& Francis, 2001.

\bibitem{StewartBook}
I.~N. Stewart.
\newblock {\em Galois theory}.
\newblock Chapman and Hall, 1973.

\bibitem{UW}
M.~Ummels and D.~Wojtczak.
\newblock The complexity of {N}ash equilibria in limit-average games.
\newblock In {\em International Conference on Concurrency Theory}, pages
  482--496. Springer, 2011.

\end{thebibliography}

\appendix

\section{Reduction to the RIT problem with coprime radicands}
\label{appendix:reduction}

Given a set of integers $a_1,\ldots,a_k$, the \emph{factor-refinement} algorithm~\cite{BACH1993199} computes a set~$\{m_1,\ldots,m_{\ell}\}$ of (not necessarily prime) factors~$m_j$ of the $a_i$'s such that 
 $gcd(m_j,m_k)=1$ for all $1\leq j<k\leq \ell$, and each $a_i$ can be written as a product of these factors, i.e.,  $a_i = \prod_{j=1}^{l} m_j^{e_{ij}}$ with the $e_{ij}\in\NN$. If we denote by $a = \lcm(a_1,\ldots,a_k)$, the factor-refinement algorithm runs in time $\mathcal{O}(\log^2(a))$(see also~\cite[Lemma 3.1]{blomer98}), and the number~$\ell$ of factors is bounded by $\sum_{i=1}^{k} \log(|a_i|)$. 
  
 \paragraph{Reduction.}
 Given an algebraic circuit $C$ representing a polynomial $f(x_1,\ldots,x_k)$ together with $k$ input radicals $\mysqrt{d_1}{a_1},\ldots,\mysqrt{d_k}{a_k}$, we construct another algebraic circuit~$C'$ representing an $\ell$-variate polynomial $f'(y_1,\ldots,y_{\ell})$ and input radicals $\mysqrt{t_1}{n_1},\ldots,\mysqrt{t_\ell}{n_\ell}$, with the $n_j$ pairwise coprime and respective minimal polynomials $x^{t_j}-n_j$, such that 
$f(\mysqrt{d_1}{a_1},\ldots,\mysqrt{d_k}{a_k})=0$ if and only if $f'(\mysqrt{t_1}{n_1},\ldots,\mysqrt{t_\ell}{n_\ell})=0$.

We first compute the partial factorisation of each one of the $a_i$'s by going through all primes up to $\log a$, which can clearly be done in $poly(\log a)$ time, where $a = \lcm(a_1,\ldots,a_k)$. We denote by $m_1,\ldots,m_r$, the primes $p$ appearing in the factorisations of the $a_i$. We then apply the \emph{factor-refinement} algorithm to the unfactored parts of the $a_i$'s 
and compute a set of pairwise coprime factors $\{m_{r+1},\ldots,m_\ell\}$ such that $a_i=\prod_{j=1}^{l} m_j^{e_{ij}}$. Then  
\begin{align*}
	f(\mysqrt{d_1}{a_1},\ldots,\mysqrt{d_k}{a_k})=0 \, & \Longleftrightarrow \, f\big(\prod_{j=1}^{l} m_j^{\frac{e_{1j}}{d_1}},\ldots,\prod_{j=1}^{l} m_j^{\frac{e_{kj}}{d_k}}\big)=0
\end{align*}

To construct the new input radicals $\mysqrt{t_i}{n_i}$ with respective minimal polynomials $x^{t_i}-n_i$, we compute for each $\mysqrt{d_i}{m_j}$ the smallest $d_{ij}$ such that $\mysqrt{d_i}{m_j}^{d_{ij}}\in\ZZ$. Observe that in general $m_j= p_1^{f_{j1}} \ldots p_s^{f_js}$ with $p_1,\ldots,p_s$ rational primes, and we have
\[\mysqrt{d_i}{m_j}^{d_{ij}} =
  \mysqrt{d_i}{p_1^{f_{j1}} \cdots p_s^{f_{js}}}^{d_{ij}} =
  \left( p_1^{f_{j1}} \cdots p_s^{f_{js}} \right) ^ {\frac{d_{ij}}{d_i}}
\]
which will be an integer if and only if $d_{i} | \gcd(f_{j1}, \ldots, f_{js}) \cdot d_{ij}$. Furthermore, observe that $d_{ij}$ will be the smallest such power precisely when
\begin{equation}\label{eq:di_gcd_dij}
    d_{i} = \gcd(f_{j1}, \ldots, f_{js}) \cdot d_{ij}.
\end{equation}

Now for the first $r$ factors of the $a_i$'s which are all prime, we have that $m_j = p$ for some rational prime $p$, and $d_{ij} = d_i$.

For $m_j$, $r < j \leq l$, first note that all $m_j$'s are products of powers of primes larger than $\log a$. Thus the multiplicities of the primes appearing in the decompositions $m_j= p_1^{f_{j1}} \ldots p_s^{f_js}$, that is, all $f_{j}$'s, are small. In particular, $f_{j} < \log m_j$ for all $j$, and furthermore $\gcd(f_{j1},\ldots,f_{js}) < \log m_j$.  Keeping this observation in mind, we can now show how to compute the $d_{ij}$ in time polynomial in $\log m_j$.
    
 Following \eqref{eq:di_gcd_dij}, we go trough the candidates for the $\gcd(f_{j1},\ldots,f_{js})$, $g_{ij} = 1,\ldots,\log m_j-1$, computing $f = \frac{d_i}{g_{ij}}$, the candidate for our $d_{ij}$ (note that if $f\notin\ZZ$, we discard it and move on). We then approximate $\mysqrt{d_i}{m_j}^f = m_j^{\frac{f}{d_i}} = m_j ^ {\frac{1}{g_{ij}}}$ with absolute error less than $1/2$ to obtain the unique integer $m$ with $|\mysqrt{g_{ij}}{m_j} - m| < 1/2$. This can be done by doing $\log m_j < \log a$ iterations of the Newton iteration \cite[Lemma~3.1]{Brent2016}. 
 We conclude by checking whether $m ^{d_i} = m_j^f$. This can be efficiently computed by writing $d_i = fg$ and observing that we are checking whether $(m^{g_{ij}})^f = m_j ^ e$, which simplifies to $m^{g_{ij} }=m_j$, with $g_{ij}$ unary.

Finally, we construct the new input radicals by setting $n_j = m_j ^{\frac{1}{\lcm(d_{1j}\cdots d_{kj})}}$ and $t_j = \frac{d_1\cdots d_k}{\lcm(d_{1j} \cdots d_{kj})}$.
We complete the reduction by constructing the algebraic circuit $C'$ from $C$ by replacing the leaves $x_i$, $i\in \{1,\ldots,k\}$, with a small circuit that computes $\prod_{j=1}^{l} y_j^{\frac{e_{ij}d_1\cdots d_k}{d_i}}$;
see Figure~\ref{fig:reduction}.

\begin{figure}[t]
\begin{center}
\begin{tikzpicture} 

\node[isosceles triangle, isosceles triangle apex angle=80, rotate=90,
    color=red!60, fill=red!5, draw, minimum size =2cm,thick] (T)at (0,0){}; 
 \node   at (0,0) (C) {\textcolor{red}{circuit $C$}};
       
 \filldraw[red] (-1.1,-.75) circle (3pt);
  \node at (-1.1, -1.1)   (a1) {$x_1$};
  \node at (-.25, -1.1)   (dots) {$\ldots$};
 \filldraw[red] (.5,-.75) circle (3pt);
  \node at (.5, -1.1)   (ak) {$x_k$}; 
 \filldraw[red] (1.2,-.75) circle (3pt);
  \node at (1.2, -1.1)   (const) {$-1$};

 \node   at (3,0) (CC) {$\Longrightarrow$};
 
 \node[isosceles triangle, isosceles triangle apex angle=80, rotate=90,
    color=blue!60, fill=blue!5, draw, minimum size =3.2cm,thick] (T)at (6,-.6){};  
       
  \node[isosceles triangle, isosceles triangle apex angle=80, rotate=90,
    color=red!60, dashed, draw, minimum size =2cm,thick] (T)at (6,0){}; 
 \node   at (6,0) (CC) {\textcolor{blue}{circuit $C'$}};

 \node[draw] at (5, -.7)   (t1) {$\times$};
 \node[draw] at (7, -.7)   (t2) {$\times$};

\node at (4, -2.1)   (y1) {$y_1$};
\node at (4.8, -2.1)   (yl) {$y_{2}$}; 
\node at (6.4, -2.1)   (a) {$\ldots$};

\node at (7.2, -2.1)   (yl) {$y_{\ell}$}; 
\node at (8, -2.1)   (m) {$-1$};

\filldraw[blue] (4,-1.75) circle (3pt);
 \path[->,draw,blue] (t1) edge  (4.1,-1.7);
 \filldraw[blue] (5,-1.75) circle (3pt);
  \path[->,draw,blue] (t1) edge  [bend left] (5.1,-1.7);
    \path[->,draw,blue] (t1) edge [bend right] (4.9,-1.7);
 \path[->,draw,blue] (t1) edge  (7.1,-1.7);
 \filldraw[blue] (7.2,-1.75) circle (3pt);
 \filldraw[blue] (8,-1.75) circle (3pt);
  
	\end{tikzpicture}
	\end{center}
	\caption{The scheme of the reduction from  RIT  to its variant where the input radicands are pairwise coprime and the exponents are all equal. In this simple example, $a_1$ is factored to $m_1m_2^2m_{\ell}$. }\label{fig:reduction}
\end{figure}
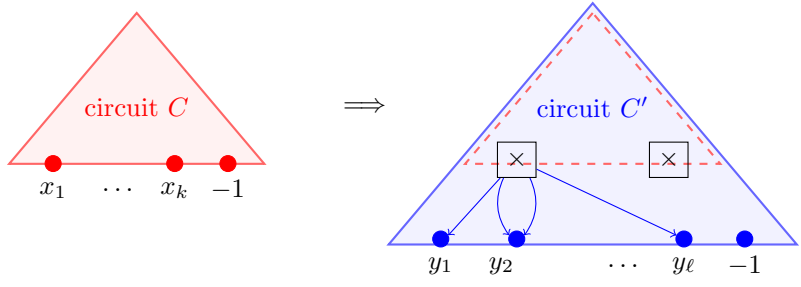

\section{Extended preliminaries}\label{appendix:preliminaries}
\subsection{Ring theory}

A \emph{ring} is a set $R$ equipped with two binary operations, addition ($+$) and multiplication ($\cdot$), satisfying the following three sets of axioms, called the ring axioms: $R$ is an abelian group under addition, $R$ is a monoid under multiplication, and multiplication is distributive with respect to addition. We further assume the addition and multiplication to be commutative and with unity. The most familiar example of a ring is the ring~$\ZZ$ of rational integers. 

Given a ring $R$, a subset $I$ of $R$ is said to be an \emph{ideal} if $I$ is an additive subgroup of the additive group of $R$ that absorbs multiplication by the elements of $R$. 
Given a rational prime~$p\in \ZZ$, the additive group $p\ZZ$ is an ideal of $\ZZ$.
Any ideal $I$ of $R$ that is not the whole of $R$ is said to be a \emph{proper ideal}, that is, the underlying set of $I$ is a proper subset of the underlying set of $R$. A proper ideal $I$ is called a \emph{prime ideal} if for any $a$ and $b$ in $R$, if $ab$ is in $I$, then at least one of $a$ and $b$ is in $I$.

An \emph{$R$-module} over a ring $R$ is a generalisation of the notion of vector space over a field, wherein scalars are elements of a given ring with identity and an operation of multiplication (on the left and/or on the right), called scalar multiplication,  defined between elements of the ring and elements of the module.

\subsection{Algebraic number theory}
A complex number $\alpha$ is \emph{algebraic} if it is a root of a univariate polynomial with integer coefficients. The defining polynomial of $\alpha$, denoted $f_\alpha$, is the unique (up to multiplication by $\pm 1$) integer polynomial of least degree, whose coefficients have no common factor, that has $\alpha$ as a root.
The \emph{degree} of an algebraic number $\alpha$ is the degree of its minimal polynomial $f_\alpha$.
If $f_\alpha$ is monic then we say that $\alpha$ is an \emph{algebraic integer}.
The sum, the difference, the product and the quotient of two algebraic numbers (except for division by zero) are algebraic numbers; this means that the set of all algebraic numbers is a \emph{field}, commonly denoted by $\Bar{\QQ}$. The sum, the difference, and the product of two algebraic integers is again an algebraic integer; given an algebraic field $K$, the algebraic integers of $K$, form a ring denoted $\OO_K$, called the ring of integers.

A field $K$ is said to be a \emph{field extension}, denoted $K/L$, of a field $L$, if $L$ is a subfield of $K$. Given a field extension $K/L$, the larger field $K$ is an $L$-vector space. The dimension of this vector space is called the \emph{degree} of the extension and is denoted by $[K : L]$.

An \emph{algebraic number field} (or simply number field) $K$ is a finite degree field extension of the field of rational numbers $\QQ$.
Thus $K$ is a field that contains $\QQ$ and has finite dimension when considered as a vector space over $\QQ$. 
It is well-known that each number field $K$ is a simple extension of $\QQ$, i.e., $K$ can be represented as $K = \QQ(\alpha)$, which is generated by the adjunction of a single element $\alpha \in K$, which is said to be the \emph{primitive element}.
 
The Gaussian rationals~$\QQ(i)$  are the first nontrivial example of an algebraic number field, obtained by adjoining $i:=\sqrt{-1}$ to~$\QQ$.  All elements of~$\QQ(i)$ can be written as expressions of the form $a+bi$ with $a,b\in \QQ$; hence $[\QQ(i) : \QQ]=2$. 
Furthermore, $\OO_{\QQ(i)}:=\ZZ[i]$.

An \emph{order}~$\OO$ in a number field $K$ is a free $\ZZ$-submodule of $\OO_K$ of rank $[K:\QQ]$. Since $\OO_K$ is also a free $\ZZ$-module of rank $[K:\QQ]$, it follows from the structure theorem for $\ZZ$-modules that the quotient $\OO_K/\OO$ is a finite abelian group. The order of this quotient, denoted $[\OO_K : \OO]$, is 
called the \emph{index} of $\OO$ in $\OO_K$.
It is known that $m\OO_K\subset \OO$ for $m=[\OO_K : \OO]$.
For example, $\ZZ[2i]=\ZZ+\ZZ2i$ is an order of the Gaussian integers of index $4$, and $4\ZZ[i] \subset \ZZ[2i]$.

Let $p(x) \in K[x]$ be a polynomial. The \emph{splitting field} of $p(x)$ over~$K$ is the smallest extension of~$K$ over which $p(x)$ can be decomposed into linear factors. The  splitting field of $x^2-1$ over~$\QQ$ is $\QQ(i)$.

A \emph{root of unity} is any complex number that yields 1 when raised to some positive integer power $n$, \emph{i.e.}, $\zeta$ such that $\zeta^n = 1$. If $\zeta_n$ is an $n$-th root of unity and for each $k < n$, $\zeta^k \neq 1$, then we call it a \emph{primitive $n$-th root of unity}. We can always choose a primitive $n$-th root of unity by setting $\zeta_n = e^{2i\pi\frac{k}{n}}$ for $k$ with $k\in \ZZ_n^*$.
The $n$-th \emph{cyclotomic polynomial}, for any positive integer~$n$, is the unique irreducible polynomial $\Phi(x) \in \QQ[x]$ with integer coefficients that is a divisor of~$x^n - 1$ and is not a divisor of $x^k - 1$ for any $k < n$. The $n$-th cyclotomic polynomial~$\Phi_n$ is the minimal polynomial of a primitive $n$-th root of unity, and its roots are all $n$-th primitive roots of unity.
 
 \subsection{Galois theory}

An algebraic field extension $K/L$ is \emph{normal} (in other words, $K$ is normal over $L$) if every irreducible polynomial over $L$ that has at least one root in $K$ splits completely over $K$. In other words, if $\alpha\in K$, then all conjugates of $\alpha$ over $L$ (i.e., all roots of the minimal polynomial of $\alpha$ over $L$) belong to $K$.  An algebraic field extension $K/L$ is said to be a \emph{separable extension} if for every $\alpha\in K$, the minimal polynomial of $\alpha$ over $L$ is a separable polynomial. That is, it has no repeated roots in any extension field. Every algebraic extension of a field of characteristic 0 is normal. A \emph{Galois extension} is an algebraic field extension that is normal and separable. The following holds for separable extensions:

\begin{theorem}[Primitive Element Theorem]
\label{th:primitive_element}
Let $K/L$ a separable extension of finite degree. Then $K = L(\alpha)$ for some $\alpha \in K$; that is, the extension is simple and $\alpha$ is a primitive element. 
\end{theorem}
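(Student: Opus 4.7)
The plan is to split the proof into two cases depending on whether the base field $L$ is finite or infinite. The finite case is quick: since $[K:L] < \infty$ and $L$ is finite, $K$ is itself a finite field. The multiplicative group $K^\times$ is cyclic, and any generator $\alpha$ of $K^\times$ already generates $K$ as a field over $L$, so $K = L(\alpha)$.

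For the infinite case, I would first reduce to a two-generator situation by induction on the number of generators: because $[K:L]$ is finite, we may write $K = L(\beta_1,\ldots,\beta_n)$ for finitely many separable elements, and it suffices to prove that for any two separable elements $\beta,\gamma$ over $L$ one has $L(\beta,\gamma) = L(\alpha)$ for a single $\alpha$. The natural ansatz is $\alpha = \beta + c\gamma$ for a well-chosen $c \in L$. Let $f$ and $g$ be the minimal polynomials of $\beta$ and $\gamma$ over $L$, with roots $\beta = \beta_1,\ldots,\beta_m$ and $\gamma = \gamma_1,\ldots,\gamma_n$ in a fixed algebraic closure of $L$. By separability of $g$ the $\gamma_j$ are pairwise distinct, so the set of ``bad'' values $(\beta_i - \beta)/(\gamma - \gamma_j)$, taken over all $i$ and all $j$ with $\gamma_j \neq \gamma$, is finite. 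Since $L$ is infinite, pick $c \in L$ outside this set.

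With this choice of $c$, consider the polynomial $h(x) := f(\alpha - cx)$, viewed as an element of $L(\alpha)[x]$. Then $h(\gamma) = f(\beta) = 0$, and for every $j \neq 1$ the value $h(\gamma_j) = f(\beta + c(\gamma - \gamma_j))$ is nonzero by the choice of $c$ (otherwise $\beta + c(\gamma - \gamma_j) = \beta_i$ for some $i$, contradicting the exclusion). Therefore in $\overline{L}[x]$ the gcd of $h(x)$ and $g(x)$ equals $x - \gamma$, a polynomial of degree one. Since the gcd of two polynomials can equivalently be computed over any field containing their coefficients, $x - \gamma \in L(\alpha)[x]$, so $\gamma \in L(\alpha)$ and hence $\beta = \alpha - c\gamma \in L(\alpha)$. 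This yields $L(\beta,\gamma) \subseteq L(\alpha) \subseteq L(\beta,\gamma)$, completing the inductive step.

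The substantive part of the argument is the two-generator step, and separability is what makes it work: it is needed both to force the $\gamma_j$ to be distinct (so that the exclusion set for $c$ is finite and nonempty complement exists in an infinite $L$) and to guarantee that $h$ and $g$ share exactly the single root $\gamma$, so that their gcd is linear. Without separability the gcd argument collapses. The inductive reduction from $n$ generators to $2$ and the finite-field case are both routine once the two-generator infinite case is settled.
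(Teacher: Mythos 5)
Your proof is correct and is the standard two-case argument for the Primitive Element Theorem. The paper states Theorem~\ref{th:primitive_element} as a background fact without supplying a proof, so there is no in-paper argument to compare against; what you have written is precisely the textbook route (finite base field via cyclicity of $K^\times$; infinite base field via the ansatz $\alpha = \beta + c\gamma$, finiteness of the excluded set of ratios $(\beta_i-\beta)/(\gamma-\gamma_j)$, and the gcd argument pinning down $x-\gamma$ over $L(\alpha)$). One small observation worth recording: your two-generator step only invokes separability of $\gamma$, namely to make the $\gamma_j$ pairwise distinct so the excluded set is finite and the gcd of $h$ and $g$ is exactly linear; separability of $\beta$ is never used. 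You have therefore actually proved the slightly stronger and commonly quoted statement that $L(\beta,\gamma)$ is a simple extension of $L$ as soon as at least one of the two generators is separable over $L$, which subsumes the hypothesis in the theorem as stated.
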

 
Given a Galois extension $K/L$, the \emph{Galois group} of $K/L$, denoted by $Gal(K/L)$ is the group of automorphisms of $K$ that fix $L$. That is, the group of all isomorphisms $\sigma: K \to K$ such that $\sigma(x) = x $ for all $x \in L$.
If $K$ is a field with subfield $L\subset K$, the \emph{Galois closure} of $K$ over $L$ is the field generated by images of embeddings $K \to K$ that are the identity map on $L$.

Fix $\alpha$ an algebraic number $\alpha$ over a Galois extension $K/L$.
The image of $\alpha$ under an automorphism $\sigma\in \Gal(K/L)$ is called a \emph{Galois conjugate} of $\alpha$. The Galois conjugates of $\alpha$  are precisely the roots of the minimal polynomial $f_\alpha$ of $\alpha$. 
The Galois conjugates of a root of unity~$\zeta_n$ are its powers~$\zeta^k_n$ such that $k\in \ZZ^{*}_n$; and $Gal(\QQ(\zeta_n)/\QQ)$ includes all automorphisms $\sigma$ defined by $\sigma(\zeta_n) = \zeta_n^k$ for  $k\in \ZZ^*_n$.

The \emph{norm} of $\alpha$ is defined by
\[ N_{K/L}(\alpha) = \prod_{\sigma \in \Gal(K/L)} \sigma(\alpha)\]
For short, we may drop the subscript $K/L$ if the underlying field is understood from the
context. 
For $\alpha = a + bi \in\ZZ[i]$ the only Galois conjugate is $a - bi$, and thus  its norm is the product
$N(\alpha) = (a + bi)(a - bi) = a^2 + b^2$. Note that the norms of all Galois conjugate are equal, and the norm of an algebraic integer is always a rational integer itself.

The \emph{trace} of $\alpha \in K/L$ is defined by:
\[ \Tr_{K/L}(\alpha) = \sum_{\sigma \in \Gal(K/L)} \sigma(\alpha) \]
Again, we drop the subscript $K/L$ if the underlying field can be understood from the context.

The ring of integers of $K$, $\OO_K$, is a free abelian group of rank $n$, and hence admits $\ZZ$-basis $\{\alpha_1, \ldots, \alpha_n\}$. Given such a basis, we denote with $\Delta_K$ the \emph{discriminant}, and define it by
\[ \Delta_K = \det(\Tr_{K/L}(\alpha_i\alpha_j))_{1\leq i,j \leq n}\]
Note that $\Delta_K$ is always a non-zero rational integer.

\subsection{Ramification theory}
Let $K$ be a number field and let $\mf{p}$ be a prime ideal of $\OO_K$. Then $\mf{p} \cap \ZZ$ is a prime ideal of $\ZZ$, hence there must exist a rational prime  $p$ such that $\mf{p} \cap \ZZ = p\ZZ$. We say that $\mf{p}$ \emph{is above} $p$. We have:
\begin{align*} 
\mf{p} \ \subset \  & \OO_K \  \subset \ K \\
 &~\mid \\
 p \ \subset &\ \ZZ \ \subset \ \QQ. 
\end{align*}

Given a number field $K$ with ring of integers $\OO_K$, any ideal $I\subseteq\OO_K$ admits a unique factorisation into prime ideals in $\OO_K$. Let $p\in\ZZ$ be a rational prime. The ideal~$p\OO_K$ may not be prime in $\OO_K$, but does factorise into prime ideals as follows:
\begin{equation}\label{eq:ram} p\OO_K = \mf{p}_1^{e_1} \cdots \mf{p}_g^{e_g}. 
\end{equation}

Using the vocabulary introduced above, we can observe that the prime ideals $\mf{p}_i$ are all above $p$. 
Note that in the ring of integers of a number field, all prime ideals are maximal, hence all $\mf{p}_i$ are also maximal ideals of $\OO_K$.
In general, given a commutative ring $R$ and a maximal ideal $\mf{m}$ of $R$, the \emph{residue field} is the quotient $k = R/\mf{m}$. Intuitively, $k$ can be thought of as the field of possible remainders. Now, given a maximal ideal $\mf{p}$ of $\OO_K$, $\OO_K/\mf{p}$ is a $\FF_p$-vector space of finite dimension. The \emph{residual class degree} (\emph{inertial degree}), denoted $f_\mf{p}$, is the dimension of the $\FF_p$-vector space $\OO_K/\mf{p}$, that is:
\[ f_\mf{p} = \dim_{\FF_p}(\OO_K/\mf{p}).\]
Looking at the factorisation \eqref{eq:ram}, we can define the residue class degree of each one of the $\mf{p}_i$'s as follows:
    \[ f_i = [ \OO_K/\mf{p}_i : \ZZ/p\ZZ ]. \]

We say that $p$ is \emph{ramified} if the \emph{ramification index} $e_i > 1$ for some $\mf{p}_i$. A prime $p$ is said to be \emph{totally ramified} if $e=n$, $g=1$, and $f=1$. That is $p\OO_K = \mf{p}^{e}$ for some $\mf{p}$.
 Conversely, $p$ is \emph{non-ramified} if $p\OO_K = \mf{p}_1 \cdots \mf{p}_g$ where the $\mf{p}_i$ are distinct. We further say that a prime $p\in\ZZ$ is \emph{inert} if the ideal $p\OO_K$ is prime, in which case we have $p\OO_K = \mf{p}$, that is $g=1$, $e=1$, and $f=n$.
 Finally, a prime is said to be \emph{split} if $e_1 = \ldots = e_g = 1$.
 
For the Gaussian integers, the ideals $2\ZZ[i]$ and $5\ZZ[i]$ are not  prime ideals and have respective factorisations $2\ZZ[i]=\mf{p}^2$ and $5\ZZ[i]=\mf{p}_1\mf{p}_2$ where $\mf{p}=(1+i)\ZZ[i]$, $\mf{p}_1=(2+i)\ZZ[i]$, and $\mf{p}_2=(2-i)\ZZ[i]$ are prime ideals. The prime $2$ is the unique ramified prime in the Gaussian integers.

\subsection{The $p$-adic field $\QQ_p$}

Here, we give a brief preliminary on the field of $p$-adic numbers $\QQ_p$; for more details see, e.g., \cite{gouvea2003p}. 
The field $\QQ_p$ is the completion of $\mathbb{Q}$ with respect to the $p$-adic absolute value $|\cdot|_p$, given by 
$|a/b|_p = p^{v_p(b)-v_p(a)}$ for $a,b\in \mathbb{Z}\setminus \{0\}$,
where $v_p(x)$ denotes the order to which $p$ divides $x \in \mathbb{Z}$. We denote by $\mathbb{Z}_p$ the valuation ring 
$\{ x \in \mathbb{Q}_p : |x|_p \leq 1\}$.
This is the local ring with unique maximal ideal generated by $p$.
A basic result about $\mathbb{Q}_p$ is Hensel's Lemma:

\begin{lemma}
\label{lem:hensel}[Hensel's lemma]
	Given $f(X) \in \mathbb{Z}[X]$, if there exists $\alpha \in \mathbb{F}_p$ such that 
	\[f(\alpha)=0 \text{ and } f'(\alpha)\neq 0\]
then there exists $x \in \mathbb{Z}_p$ with $f(x)=0$ and $x\equiv \alpha \bmod p$.
\end{lemma}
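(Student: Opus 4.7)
The plan is to prove Hensel's lemma by constructing the desired $p$-adic root as the limit of a Newton-style iteration in $\mathbb{Z}_p$. Concretely, I will build a sequence $(x_n)_{n \geq 0}$ of elements of $\mathbb{Z}_p$ such that, for every $n \geq 0$, we have $x_n \equiv \alpha \pmod p$, $f(x_n) \equiv 0 \pmod{p^{n+1}}$, and $x_{n+1} \equiv x_n \pmod{p^{n+1}}$. Once such a sequence exists, the congruence $x_{n+1} \equiv x_n \pmod{p^{n+1}}$ shows that $(x_n)$ is Cauchy with respect to $|\cdot|_p$, and since $\mathbb{Z}_p$ is complete it converges to some $x \in \mathbb{Z}_p$. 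Continuity of $f$ forces $f(x) = 0$, while reduction modulo~$p$ gives $x \equiv \alpha \pmod p$.

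For the base case, I would let $x_0 \in \mathbb{Z}$ be any integer lift of $\alpha \in \mathbb{F}_p$. For the inductive step, assume $x_n$ has been constructed with $f(x_n) \equiv 0 \pmod{p^{n+1}}$. Since $x_n \equiv \alpha \pmod p$, we have $f'(x_n) \equiv f'(\alpha) \pmod p$, and by hypothesis the latter is nonzero in $\mathbb{F}_p$; hence $f'(x_n)$ is a unit in $\mathbb{Z}_p$, and in particular invertible. I would then set
\[
  x_{n+1} \;=\; x_n - \frac{f(x_n)}{f'(x_n)}.
\]
Because $f(x_n)$ has $p$-adic valuation at least $n+1$ and $f'(x_n)$ is a unit, the correction term $x_{n+1}-x_n$ lies in $p^{n+1}\mathbb{Z}_p$; this immediately yields $x_{n+1} \equiv x_n \pmod{p^{n+1}}$, and hence also $x_{n+1} \equiv \alpha \pmod p$.

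The key computation is then to verify that $f(x_{n+1}) \equiv 0 \pmod{p^{n+2}}$ (in fact $\pmod{p^{2(n+1)}}$, which is the source of the well-known quadratic convergence). Writing the Taylor expansion of $f$ around $x_n$ in $\mathbb{Z}_p[Y]$,
\[
  f(x_n + Y) \;=\; f(x_n) + f'(x_n)\,Y + Y^2 \cdot g(Y)
\]
for some $g(Y) \in \mathbb{Z}_p[Y]$ (this identity holds with integer coefficients since $f \in \mathbb{Z}[X]$), and substituting $Y = -f(x_n)/f'(x_n)$, the first two terms cancel and the remainder has $p$-adic valuation at least $2(n+1) \geq n+2$.

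I expect the most delicate point to be justifying the Taylor-expansion step cleanly, namely that the remainder term is genuinely a polynomial in $Y$ with coefficients in $\mathbb{Z}_p$ (so that the valuation estimate goes through without any denominators spoiling things). This is standard but deserves an explicit mention of the identity $f(X+Y) - f(X) - f'(X)Y \in Y^2 \mathbb{Z}[X,Y]$, valid for any $f \in \mathbb{Z}[X]$. Once this is in place the rest of the proof is mechanical: the iteration produces a Cauchy sequence whose limit in the complete ring $\mathbb{Z}_p$ is the desired root.
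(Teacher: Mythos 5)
The paper states Hensel's Lemma as a background fact in the preliminaries (Appendix B) and does not prove it, so there is no paper proof to compare against. Your Newton-iteration argument is the standard, correct proof: the inductive construction is sound, the Taylor identity $f(X+Y) - f(X) - f'(X)Y \in Y^2\,\mathbb{Z}[X,Y]$ is exactly the right lemma to make the valuation estimate go through without denominators, and the passage to the limit via completeness of $\mathbb{Z}_p$ is routine. No gaps.
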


Given a number field $K/\QQ$, let $p$ be a rational prime and $\mathfrak{p}$ a prime ideal of $\OO_K$ lying above $p$. Then the $p$-adic absolute value $|\cdot|_p$ corresponding to $p$ extends uniquely to an absolute value $|\cdot|_\mathfrak{p}$ corresponding to $\mathfrak{p}$ such that the restriction of $|\cdot|_\mathfrak{p}$  to $\QQ$ coincides with $|\cdot|_p$.
This, in turn, corresponds to a field extension $K_{\mathfrak{p}}/\QQ_p$. This extension can be analysed using the ramification of $p$ in $K$. In particular, if $p$ completely splits in $K$, then $[K_{\mathfrak{p}} : \QQ_p] = 1$, that is, the extension is trivial and we have $K_{\mathfrak{p}} = \QQ_p$. If $p$ is inert, then the degree of the extension $K_{\mathfrak{p}}$ over $\QQ_p$ is equal to the inertial degree of $p$ in $K$. Finally, if $p$ is totally ramified, then the degree of the extension $K_{\mathfrak{p}}$ over $\QQ_p$ is equal to the degree of $K$ over $\QQ$, i.e., $[K_{\mathfrak{p}} : \QQ_p] = [K : \QQ]$.

Given a prime ideal $\mathfrak{p}$ of $\OO_K$, we define the decomposition group $D_{\mathfrak{p}}$ of to be the set of all automorphisms $\Gal(K/\QQ)$ fixing $\mathfrak{p}$, that is $ D_{\mathfrak{p}} = \{ \sigma \in \Gal(K/\QQ) \mid \sigma(\mathfrak{p}) = \mathfrak{p} \}$.
If the field $K$ is Galois over $\QQ$, the following isomorphism holds:
\[ D_{\mathfrak{p}} \cong \Gal(K_{\mathfrak{p}}/\QQ_p). \]
This entails that the prime $p$ completely splits in $K$ if and only if the decomposition group $D_{\mathfrak{p}}$ is trivial for all prime factors $\mathfrak{p}$ of $p$ in $\OO_K$.

\section{Missing proofs from Section~\ref{sec:conp}}
\label{appendix:conp}

\subsection{Bound on the norm}\label{appendix:norm}
In this section, we prove a bound on the norm of the algebraic integer computed by an algebraic circuit $C$ on a radical input $\mysqrt{d_1}{a_1},\ldots,\mysqrt{d_k}{a_k}$, with the $a_i$ pairwise coprime, and the $d_i$ and $a_i$ of magnitude at most $2^s$, that is an instance of the RIT problem.
Let $d=\lcm(d_1,\ldots,d_k)$, and denote by $\zeta_d$ a primitive $d$-th root of unity. Let  $K = \QQ(\mysqrt{d_1}{a_1},\ldots,\mysqrt{d_k}{a_k}, \zeta_d)$. We claim:

\boundnorm*

\begin{proof}
Recall that $s$ is an upper bound on the number $k$ of input radicands, the size of the circuit, and that the magnitude of $a_i$ and $d_i$ is at most $2^s$. 

Write $\alpha = \sum_{i} b_i x_1^{e_{i_1}} \cdots x_k^{e_{i_k}}$ where
$e_{i_1} + \ldots + e_{i_k} \leq 2^s $, $b_i \in \ZZ$ with $b_i\leq 2^{2^s}$,
$i$ ranges over all monomials of the shape $x_i^{e_{i_1}}\cdots x_i^{e_{i_k}}$. Let us denote by $M$ the number of all such monomials, and count how many of them we can construct. Denote by $D = \max (d_1,\ldots,d_k)$, then:
\[ M = \binom{k+D}{D} = \binom{k + D}{k} \leq  \binom{s + 2^s}{s}
\leq (s + 2^s)^s \leq 2^{s^2}  \]

Denote by $G = \Gal(K/\QQ)$, and
note that given the size bounds on our input $|G| \leq 2^{2s^2}$.

 Observe that the action of all $\sigma\in G$ is determined by their action on $\zeta_d$, that is:
\[ \sigma(\mysqrt{d_i}{a_i}) = \mysqrt{d_i}{a_i}\sigma(\zeta_d).\]
Then
\begin{align*}
    N(\alpha)
    &= N\left(\sum_{i = 1}^M b_i x_1^{e_{i_1}} \cdots x_k^{e_{i_k}} \right) \\
    &= \prod_{\sigma\in G} \sigma\left(\sum_{i = 1}^M b_i x_1^{e_{i_1}} \cdots x_k^{e_{i_k}} \right) \\
    &= \prod_{\sigma\in G} \sum_{i=1}^M b_i \left(\max_{j=1}^k x_j \right)^{\#e_i}\sigma \left(\zeta_d \right)^{\#e_i}
    \text{ where } \#e_i = e_{i_1} + \ldots e_{i_k} = \sum_{j=1}^k e_{i_j} \\
    &= \prod_{l\in \ZZ_{|G|^*}} \sum_{i=1}^M b_i \left(\max_{j=1}^k x_j\right)^{\#e_i} \left(\zeta_d^l\right)^{\#e_i}
\end{align*}

Finally, putting together all of our bounds
yields
\begin{align*}
    |N(\alpha)|
    &\leq \prod_{l=1}^{2^{2s^2} } \left( \sum_{l=1}^{2^{s^2}} 2^{2^s} \cdot (2^s)^{2^s} \right) \\
    &\leq \prod_{l=1}^{2^{2s^2} } \left( \sum_{l=1}^{2^{s^2}} 2^{2^s (s+1)}\right) \\
    &\leq \prod_{l=1}^{2^{2s^2} } \left( 2^{s^2} \cdot 2^{2^s (s+1)} \right) \\
    &\leq \left( 2^{s^2} \cdot 2^{2^s (s+1)} \right) ^ {2^{2s^2}} \\
    &\leq 2^{2^{2s^2} (s2^s + 2^s + s^3)} \\
    &\leq 2^{2^{s^3}} \text{ for } s \geq 4.
\end{align*}
\end{proof}

\subsection{On the primitive element of a radical field extension}\label{appendix:primitive_element}
The aim of this section is to prove the bounds on the degree of the primitive element $\theta$ of our number field, and the size of the constants used in the linear combination of the field generators that we use to construct it.

Let us first recall the setting for our number field. Given $k$ radicals $\mysqrt{d_1}{a_1},\ldots,\mysqrt{d_k}{a_k}$, with the $a_i$ pairwise coprime, and the $d_i$ and $a_i$ of magnitude at most $2^s$. Let $d=\lcm(d_1,\ldots,d_k)$, and denote by $\zeta_d$ a primitive $d$-th root of unity. We would like to construct the primitive element $\theta$ for the number field $K = \QQ(\mysqrt{d_1}{a_1},\ldots,\mysqrt{d_k}{a_k}, \zeta_d)$. Note that besides adjoining the radicals to $\QQ$, we also make sure to add $\zeta_d$, which ensures our field extension $K$ is Galois.

The primitive element for number fields is as follows:

\begin{theorem}\label{th:primitive_el_theorem_number_fields}

Let $K = L(\alpha_1,\ldots,\alpha_k)$ be a finite extension of $L$, and assume that $\alpha_2,\ldots,\alpha_k$ are separable over $L$. Then there is an element $\theta\in K$ such that $K=L(\theta)$.
\end{theorem}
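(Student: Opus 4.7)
The statement to be proved is the classical Primitive Element Theorem, and the plan is to follow the standard proof by reducing to two generators and then using a linear combination trick. I would proceed as follows.

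First, I would induct on $k$. The base case $k=1$ is trivial since $K = L(\alpha_1)$ already. For the inductive step, I would apply the inductive hypothesis to the extension $L(\alpha_2, \ldots, \alpha_k)/L$ to obtain an element $\beta \in L(\alpha_2, \ldots, \alpha_k)$ with $L(\alpha_2, \ldots, \alpha_k) = L(\beta)$. Crucially, since $\alpha_2, \ldots, \alpha_k$ are all separable over $L$, the extension $L(\alpha_2, \ldots, \alpha_k)/L$ is separable (as a compositum of separable extensions), so $\beta$ is separable over $L$. This reduces the problem to showing that $L(\alpha, \beta) = L(\theta)$ for a single $\theta$, where $\alpha = \alpha_1$ and $\beta$ is separable over $L$. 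Note that $L \supseteq \QQ$ is infinite, which is essential to the construction below.

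Next, let $f, g \in L[x]$ denote the minimal polynomials of $\alpha$ and $\beta$ respectively, and let $\alpha = \alpha_1, \ldots, \alpha_m$ be the roots of $f$ and $\beta = \beta_1, \ldots, \beta_n$ the roots of $g$ in an algebraic closure of $L$. Separability of $\beta$ ensures that the $\beta_j$ are pairwise distinct. Since $L$ is infinite, I can choose $c \in L$ avoiding the finitely many bad values
\[
c \neq \frac{\alpha_i - \alpha}{\beta - \beta_j} \quad \text{for } i = 1, \ldots, m \text{ and } j = 2, \ldots, n,
\]
and set $\theta := \alpha + c\beta$. Clearly $\theta \in L(\alpha, \beta)$, so $L(\theta) \subseteq L(\alpha, \beta)$.

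For the reverse inclusion, I would consider the auxiliary polynomial $h(x) := f(\theta - cx) \in L(\theta)[x]$. By construction, $h(\beta) = f(\alpha) = 0$, so $\beta$ is a common root of $h$ and $g$ in $L(\theta)[x]$. The choice of $c$ guarantees that for any $j \neq 1$ we have $\theta - c\beta_j \neq \alpha_i$ for all $i$, hence $h(\beta_j) \neq 0$. Thus $\beta$ is the \emph{only} common root of $h$ and $g$, and because $\beta$ is separable, $(x - \beta)$ appears with multiplicity one in $g$. Consequently the gcd of $h$ and $g$ computed in $L(\theta)[x]$ equals $x - \beta$ (up to scaling). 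Since gcd computations commute with field extensions, this gcd has coefficients in $L(\theta)$, which forces $\beta \in L(\theta)$. Then $\alpha = \theta - c\beta \in L(\theta)$ as well, yielding $L(\alpha,\beta) \subseteq L(\theta)$ and completing the induction.

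The main conceptual obstacle is the gcd argument, which relies on separability of $\beta$ to ensure $\gcd(h,g)$ is linear rather than having higher multiplicity, and on the infinitude of $L$ to avoid the bad coincidences between shifted roots. Both ingredients are available in our setting: the ambient field contains $\QQ$, and in the paper's application $\alpha_2, \ldots, \alpha_k$ will be real radicals and a root of unity, all of which are separable over $\QQ$ (equivalently over any intermediate subfield $L$ of characteristic zero).
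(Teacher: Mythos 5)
Your proof is correct and is the standard textbook argument. The paper does not actually give a proof of this theorem: it cites \cite[Theorem~4.1.8]{cohen2013course} and \cite[Theorem 5.1]{milneGalTheory} and only \emph{describes} the construction as inductive, taking $\theta$ to be an integer linear combination of the generators. Your argument is consistent with that description, so it is essentially the same approach, with full details supplied. One small structural difference worth noting: the paper's description builds forward, constructing $\theta_2$ generating $L(\alpha_1,\alpha_2)$, then $\theta_3 = \theta_2 + c_3\alpha_3$ generating $L(\alpha_1,\alpha_2,\alpha_3)$, and so on — this order matters for the paper's later quantitative Lemma~\ref{lem:primitive_el}, where bounds on the constants $c_i$ are derived step by step via Proposition~\ref{prop:effective_primitive_el_th}. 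You instead apply the inductive hypothesis once to $L(\alpha_2,\ldots,\alpha_k)$, obtaining a separable generator $\beta$, and then reduce to the two-generator case $L(\alpha_1,\beta)$. Both are valid; yours has the advantage that the element carrying separability, $\beta$, is manifestly separable as a generator of a separable extension, whereas in the forward build one has to note at each step that it is the newly adjoined $\alpha_{j+1}$, not the accumulated $\theta_j$, that plays the role of the separable element in the two-generator lemma. Your use of the gcd argument (that $\gcd(h,g)$ is linear since $g$ is separable, and gcds are stable under field extension) is exactly right, and you correctly observe that the infinitude of $L$ — automatic here since $L\supseteq\QQ$ — is needed to dodge the finitely many bad values of $c$.
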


The proof of the above theorem is constructive (see, e.g., \cite[Theorem~4.1.8]{cohen2013course} or \cite[Theorem 5.1]{milneGalTheory}), and computes the primitive element $\theta$ as a linear combination of the generators $\alpha_1,\ldots,\alpha_k$, that is $\theta = \sum_{i=1}^k c_i \alpha_i$. The computation of $\theta$ is to be done inductively, constructing first a primitive element $\theta_2$ for $L(\alpha_1,\alpha_2)$, then $\theta_3$ for $L(\alpha_1, \alpha_2,\alpha_3)$, and so on until $\theta_k$. Furthermore, it is shown that only finitely many combinations of the constants $c_i$ fail to generate a primitive element for the field extension $K$. This gives rise to an effective version of the primitive element theorem for number fields that induces a bound on the degree of the algebraic number $\theta$, as well as on the size of the constants $c_i$. In particular, for a field generated by two algebraic numbers, the bounds are as follow (see \cite[Proposition 6.6]{kururPhd}):

\begin{proposition}\label{prop:effective_primitive_el_th}
    Let $\alpha$ and $\beta$ be algebraic numbers of degree $m$ and $n$ respectively. There exists an integer $c\in\{1,\ldots,m^2n^2+1\}$ such that $\alpha+c\beta$ is a primitive element of $\QQ(\alpha,\beta)$.
\end{proposition}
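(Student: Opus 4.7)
The plan is to adapt the classical constructive proof of the primitive element theorem and bound the number of ``bad'' choices of $c$ for which $\alpha+c\beta$ fails to generate $\QQ(\alpha,\beta)$. First I would let $f$ and $g$ denote the minimal polynomials of $\alpha$ and $\beta$, of degrees $m$ and $n$ respectively, and write $\alpha = \alpha_1, \ldots, \alpha_m$ and $\beta = \beta_1, \ldots, \beta_n$ for their distinct roots in $\overline{\QQ}$; distinctness follows from separability in characteristic zero. Setting $\theta_c := \alpha + c\beta$, I would introduce the auxiliary polynomial $h_c(x) := f(\theta_c - cx) \in \QQ(\theta_c)[x]$. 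Since $h_c(\beta) = f(\alpha) = 0$, both $g$ and $h_c$ have coefficients in $\QQ(\theta_c)$ and share $\beta$ as a common root in $\overline{\QQ}$.

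The main step is to show that, for suitably chosen $c$, $\beta$ is the \emph{only} common root of $g$ and $h_c$ in $\overline{\QQ}$. The roots of $h_c$ in $\overline{\QQ}$ are precisely $(\theta_c - \alpha_i)/c$ for $i = 1, \ldots, m$, so for $j \neq 1$, $\beta_j$ is a common root of $g$ and $h_c$ exactly when $c = (\alpha - \alpha_i)/(\beta_j - \beta)$ for some $i \in \{1,\ldots,m\}$ and some $j \in \{2,\ldots,n\}$ (note that $\beta_j - \beta \neq 0$ by distinctness of the conjugates). If $c$ avoids this finite set, then $\gcd(g, h_c) = x - \beta$ over $\overline{\QQ}[x]$, and because both polynomials lie in $\QQ(\theta_c)[x]$ the gcd computed there also equals $x - \beta$ up to a unit. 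This forces $\beta \in \QQ(\theta_c)$, and then $\alpha = \theta_c - c\beta \in \QQ(\theta_c)$ as well, giving $\QQ(\theta_c) = \QQ(\alpha,\beta)$.

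Finally I would count: the bad set has cardinality at most $m(n-1)$ in $\overline{\QQ}$, hence at most $m(n-1) < m^2n^2 + 1$ integers in $\{1,\ldots,m^2n^2+1\}$ are bad. By pigeonhole at least one integer $c$ in the stated range avoids the bad set, and any such $c$ witnesses the proposition. I do not anticipate a substantive obstacle; the only minor technical point is verifying that the gcd of $g$ and $h_c$ in $\QQ(\theta_c)[x]$ agrees up to a unit with the one computed in $\overline{\QQ}[x]$, which is standard since gcds in $k[x]$ are invariant under field extension.
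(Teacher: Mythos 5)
The paper does not prove this statement itself; it quotes it as Proposition~6.6 of \cite{kururPhd}, so there is no in-paper proof to compare against. Your argument is a correct instance of the standard constructive proof of the primitive element theorem: set $\theta_c=\alpha+c\beta$, form $h_c(x)=f(\theta_c-cx)\in\QQ(\theta_c)[x]$, note that the roots of $h_c$ are $(\theta_c-\alpha_i)/c$, and observe that for $c$ outside the finite set $\{(\alpha-\alpha_i)/(\beta_j-\beta): 1\le i\le m,\ 2\le j\le n\}$ the only common root of $g$ and $h_c$ in $\overline{\QQ}$ is $\beta$, each with multiplicity one (for $h_c$ because $h_c'(\beta)=-cf'(\alpha)\neq0$). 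Since $\gcd$ is invariant under field extension, $\gcd(g,h_c)=x-\beta$ already in $\QQ(\theta_c)[x]$, forcing $\beta$ and then $\alpha=\theta_c-c\beta$ into $\QQ(\theta_c)$. The pigeonhole count is also sound: there are at most $m(n-1)$ bad values (and those with $i=1$ give $c=0$, outside the range, so in fact at most $(m-1)(n-1)$ bad integers), which is strictly less than $m^2n^2+1$. Indeed your argument delivers the sharper bound $c\in\{1,\ldots,m(n-1)+1\}$; the $m^2n^2+1$ stated in the cited source is just a more generous estimate, which is all the paper needs for its downstream discriminant bound.
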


Note that in general, we could choose $c_i\in\QQ$ with only finitely many combinations of the $c_i$'s not giving us a primitive element. Thus the primitive element need not be an algebraic integer in general. However, we make the choice to choose $c_i\in\ZZ$, therefore the minimal polynomial $f_\theta$ of $\theta$ is monic and $\theta$ an algebraic integer.

Now let us prove our claim on the bounds on our primitive element $\theta$:
\boundprimitiveelement*

\begin{proof}
Note that given $d_1,\ldots,d_k \leq 2^s$, their least common multiple is at most of size $2^{s^2}$, hence we have:
\[ \deg \zeta_d \leq  2^{s^2}.  \]

In the spirit of the proof of the primitive element theorem, use \Cref{prop:effective_primitive_el_th} inductively as follows:
\begin{align*}
    \theta_2 = \mysqrt{d_1}{a_1} + c_2\mysqrt{d_2}{a_2} &\qquad c_2 \leq (2^s)^2(2^s)^2 + 1 \textcolor{blue}{\leq 2^{4s}} \\
    \textcolor{blue}{\deg \theta_2 \leq 2^{2s}} \\
    \theta_3 = \mysqrt{d_1}{a_1} + c_2\mysqrt{d_2}{a_2} + c_3\mysqrt{d_3}{a_3}  &\qquad c_3 \leq (2^{2s})^2(2^s)^2 + 1 \textcolor{blue}{\leq 2^{6s}} \\
    \textcolor{blue}{\deg \theta_3 \leq 2^{3s}} \\
    \vdots & \\
    \theta_k = \mysqrt{d_1}{a_1} + c_2\mysqrt{d_2}{a_2} + \ldots + c_k\mysqrt{d_k}{a_k} &\qquad c_k \leq (2^{(k-1)s})^2(2^s)^2 + 1 \textcolor{blue}{\leq 2^{2s^2}}\\
    \textcolor{blue}{\deg \theta_k \leq 2^{s^2}} \\
    \theta = \theta_k + c_0 \zeta_d &\qquad c_0 \leq (2^{ks})^2(2^{s^2})^2 + 1 \textcolor{blue}{\leq 2^{4s^2}}\\
    \textcolor{blue}{\deg \theta_2 \leq 2^{2s^2}} \\
\end{align*}

The claimed bounds on the degree of $\theta$ and the size of the constants $c_i$ follow.
\end{proof}

\subsection{Bound on the discriminant}\label{appendix:discriminant}
The aim of this section is to prove a bound on the discriminant of the minimal polynomial of the primitive element of our number field $K$. Recall we denote by $K = \QQ(\mysqrt{d_1}{a_1},\ldots,\mysqrt{d_k}{a_k}, \zeta_d)$, where $d=\lcm(d_1,\ldots,d_k)$, and $\zeta_d$ is a primitive $d$-th root of unity, and compute $\theta$ as a linear combination of $\mysqrt{d_1}{a_1},\ldots,\mysqrt{d_k}{a_k}, \zeta_d$. Note also that we assume the magnitude of the $d_i$'s and $a_i$'s to be at most $2^s$.

Recall that given a polynomial $f(x) = a_nx^n + \ldots + a_1x + a_0$ with roots $r_1,\ldots,r_n$, its discriminant can be computed as
\begin{equation}\label{eq:disc_formula}
    \Delta_{f} = a_n^{2n-2} \prod_{i<j} (r_i-r_j)^2 = (-1)^{\frac{n(n-1)}{2}} a_n^{2n-2} \prod_{i\neq j}(r_i-r_j)
\end{equation}

Now let us see how we use it to prove our claim:
\bounddiscriminant*

\begin{proof}
Denote by $G = \Gal(K/\QQ)$ the Galois group of $K$. Recall we construct the primitive element $\theta$ of our radical number field as a linear combination of the radicals $\mysqrt{d_1}{a_1},\ldots,\mysqrt{d_k}{a_k}$ and a primitive $d$-th root of unity $\zeta_d$ as follows:
\[ \theta = c_o \zeta_d + \sum_{i=1}^k c_i \mysqrt{d_i}{a_i}. \]
We choose the $c_i\in\ZZ$, hence $\theta$ is an algebraic integer. The minimal polynomial $f_\theta$ of the primitive element $\theta$ has roots $\theta = \theta_1, \ldots, \theta_{|G|}$. Note that the roots of $f_\theta$ are given by the elements of $G$, that is, $\theta_i = \sigma_i(\theta)$ for some $\sigma_i \in G$. Recall also that the elements of the Galois group $G$ act on conjugates of a given element of $K$ by permuting the $d$th roots of unity, that is, given $\alpha \in K$, $\sigma_i(\alpha) = \alpha \zeta_d^i$ for some and $\sigma_i \in G$.

Note that given $d_1,\ldots,d_k \leq 2^s$,
their least common multiple is at most of size $2^{s^2}$,
hence we have:
\[ \deg \zeta_d \leq  2^{s^2}   \]
As stated in \Cref{lem:primitive_el}, the constants $c_i$ in the computation of the primitive element can be bound by:
\[ c_i \leq 2^{4s^2}. \]

Now given $\sigma_j\in G$, write
\begin{align*}
    \sigma_j(\theta) &= \sigma_j\left(c_0 \zeta_d + \sum_{i=1}^k c_i \mysqrt{d_i}{a_i}\right) \\
                     &= c_0 \sigma_j(\zeta_d) + \sum_{i=1}^k c_i \sigma_j(\mysqrt{d_i}{a_i})  \\
                     &= c_0 \zeta_d^{j+1} + \sum_{i=1}^k c_i \mysqrt{d_i}{a_i} \zeta_d^{ij}
\end{align*}

Then for $1\leq j, l\leq |G|$, $j \neq l$
\begin{align*}
    \sigma_j(\theta) - \sigma_l(\theta)
    &= \left(c_0 \zeta_d^{j+1} + \sum_{i=1}^k c_i \mysqrt{d_i}{a_i} \zeta_d^{ij}\right) - \left(c_0 \zeta_d^{l+1} + \sum_{i=1}^k c_i \mysqrt{d_i}{a_i} \zeta_d^{il}\right) \\
    &= c_0\left(\zeta_d^{j+1} - \zeta_d^{l+1}\right) + \left(\sum_{i=1}^k c_i \mysqrt{d_i}{a_i}\right)\left(\zeta_d^{ij} - \zeta_d^{il}\right)
\end{align*}
Note that for any two $d$th roots of unity $\zeta_d^j, \zeta_d^l$, we always have $\zeta_d^j - \zeta_d^l \leq 2$, hence
\begin{align*}
    \sigma_j(\theta) - \sigma_l(\theta)
    &\leq 2c_0 + 2\left(\sum_{i=1}^k c_i \mysqrt{d_i}{a_i}\right) \\
    &\leq 2\cdot 2^{4s^2} + 2 \left(\sum_{i=1}^s 2^{4s^2} \cdot 2^s\right) \\
    &\leq 2^{4s^2 + 1} + 2^{4s^2+s+1}s \\
    &\leq 2^{2s ^ 3} \text{ for } s\geq 4.
\end{align*}
Thus
\begin{align*}
    |\Delta_{f_\theta}| &= |\prod_{j\neq l} (\sigma_j(\theta) - \sigma_l(\theta))| \\
    &\leq \left(2^{2s^3}\right)^{|G|^2} \\
    &\leq \left(2^{2s^3}\right)^{\left(2^{2s^2}\right)^2} \\
    &\leq \left(2^{2s^3}\right)^{2^{4s^2}} \\
    &\leq 2^{2s^3 \cdot 2^{4s^2}} \\
    &\leq 2^{2^{5s^2}} \text{ for } s\geq 4.
\end{align*}
\end{proof}

\subsection{Bound on the number of split primes}\label{appendix:bound_number_of_primes}
The aim of this section is to prove the following proposition:

\boundnprimes*

\begin{proof}
Recall the bound on $\pi_{1}$ given in \Cref{prop:number_of_primes}:
\[ \pi_{1}(x) \geq \frac{1}{|\Gal(K/\QQ)|} \left[ \pi(x) - \log \Delta_{K} - c x^{1/2} \log (\Delta_{K} x^{|\Gal(K/\QQ)|}) \right] \]

Recall also that $\pi(x)$, that is, the number of primes $\leq x$, can be bound as
\[ \pi(x) \geq \frac{x}{\log x} \]

Finally, 
note that $|\Gal(K/\QQ)| \leq 2^{2s^2}$.

Now compute
\begin{align*}
	\pi_{1}(2^{4s^3})
	&\geq \frac{2^{4s^3}}{2^{2s^2}\cdot 4s^3} - 2^{3s^2} - c\cdot 2^{2s^3} \cdot 2^{3 s^2} - c \cdot 2^{2s^3} \cdot 4 s^3\\
	&\geq \frac{2^{4s^3}}{2^{s^3}} - 2^{2s^3} - c\cdot 2^{2s^3} \cdot 2^{3 s^2} - c \cdot 2^{2s^3} \cdot 4 s^3 \quad \quad   \text{ for } s\geq 4 \\
	&\geq 2^{3s^3} - 2^{2s^3}(1 + c\cdot 2^{3s^2} + c \cdot 4s^3) \\
	&\geq 2^{2s^3}(2^{s^3} - c\cdot 2^{3s^2} - c \cdot 4s^3)\\
	&\geq 2^{2s^3} \text{ for a fixed constant } c \text { and } s \geq \max(c,5) \\
	&\geq 2^{s^3} + 1 .
\end{align*}

\end{proof}

\section{Missing proofs from Section~\ref{sec:corp}}
\label{appendix:corp}

\subsection{A note on Pocklington's algorithm}\label{appendix:pocklington}
Pocklington's algorithm is a technique for solving congruences of the form $x^2 \equiv a \mod p$, where $x$ and $a$ are integers and $a$ is a quadratic residue modulo $p$. Given an integer $a$ and odd prime $p$ as input, the algorithm separates three cases for $p$, and then computes $x$ accordingly. We are interested in the case where  $p = 8m + 5$ for some $m\in\NN$, as we note that $x$ can be computed deterministically as follows:

\begin{figure}[H]
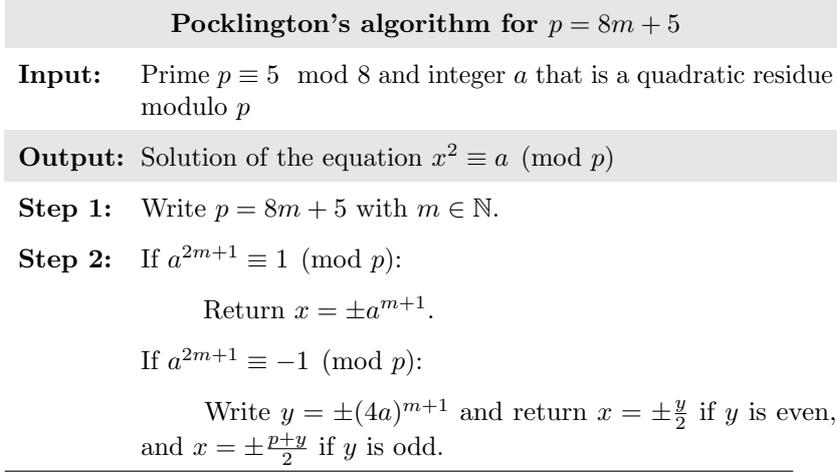

    \centering
    \begin{tabularx}{0.82\textwidth}{ p{0.1\textwidth} p{0.72\textwidth} }
   \rowcolor{Gray}
    \multicolumn{2}{c}{\textbf{Pocklington's algorithm for $p=8m+5$}}
    \\
    \textbf{Input:} & Prime $p \equiv 5 \mod 8$ and integer $a$ that is a quadratic residue modulo $p$ \\ 
    \rowcolor{Gray}
    \textbf{Output:} & Solution of the equation $x^2 \equiv a \pmod{p}$
    \\ 
    \textbf{Step 1:} &
            Write $p = 8m + 5$ with $m\in\NN$.
  \\
    \textbf{Step 2:} &
       If $a^{2m+1} \equiv 1 \pmod{p}$:
    \\ & $\qquad$ Return $x = \pm a^{m+1}$.
    \\ & If $a^{2m+1} \equiv -1 \pmod{p}$:
    \\ & $\qquad$ Write $y = \pm(4a)^{m+1}$ and return $x=\pm\frac{y}{2}$ if $y$ is even, and
    $x=\pm\frac{p+y}{2}$ if $y$ is odd.\\
   \hline
       \end{tabularx}
    \caption{Procedure to solve the congruence $x^2\equiv a \mod p$ for prime $p \equiv 5 \mod 8$.}
    \label{fig:pocklington}
\end{figure}

Let us briefly elaborate on the correctness of the above procedure. Given $p = 8m + 5$ with $m\in\NN$, following Fermat's little theorem, we have $x^{8m+4} \equiv 1 \mod p$. Since $x^2\equiv a \mod p$, we can rewrite it as $a^{4m+2} \equiv 1 \mod p$. Now, let us look at two separate cases for $a$:
\begin{enumerate}
    \item $a^{2m+1} \equiv 1 \mod p$ \\
    Then $a^{2m+2} \equiv a \mod p$, that is $(a^{m+1})^2 \equiv a \mod p$, hence $x = \pm a^{m+1}$.
    \item $a^{2m+1} \equiv -1 \mod p$ \\
    Note that 2 is a quadratic non-residue, so $4^{2m+1}\equiv -1 \mod p$, hence $4^{2m+1}a^{2m+1}\equiv 1 \mod p$. That is $(4a)^{2m+1}\equiv 1 \mod p$ and following the reasoning from the previous case $y = \pm(4a)^{m+1}$ is a solution of $y^2=4a$, hence $x=\pm\frac{y}{2}$, or if $y$ is odd, $x=\pm\frac{p+y}{2}$.
\end{enumerate}

\subsection{The probability of choosing a good prime $p$}\label{appendix:probability_good_prime}
The aim of this section is to prove a bound on the probability of randomly choosing a good prime $p$ in the randomised polynomial time algorithm for the complement of $2$-RIT.

\probabilityprimep*

\begin{proof}

We follow the proof of \cite[Proposition 9]{balaji2021cyclotomic}. Recall that we set  $a_i\leq 2^s$,
which implies $A \leq 2^{s^2}$.

For (i), we note that by \Cref{th:prime_density}, the probability that $p$ is prime is at most
\begin{align*}
	\frac{\pi_{8A,b+1}(2^{5s^3})}{{2^{5s^3}}/{8A}}
	&\geq \frac{8A}{\varphi(8A)\log 2^{5s^3} } - \frac{c \log 2^{5s^3} 8A}{(2^{5s^3})^{1/2}} \\
	&\geq \frac{1}{5s^3} - \frac{c {5s^3} 2^{s^2 + 3}}{2^{2s^3}} \\
	&\geq \frac{1}{5s^3} - \frac{c {5s^3} 2^{s^3}}{2^{2s^3}}
\end{align*}
where $c$ is the absolute constant mentioned in the theorem. For $k$ sufficiently large, the above is
$\frac{1}{6s^3}$, which proves the claim.
	
For (ii), by \Cref{lem:norm} the norm of $\alpha$ has absolute value at most
$2^{2^{s^3}}$, and hence $N(\alpha)$ has at most $2^{s^3}$ 
distinct prime factors. Then, for $s$ sufficiently large, the probability that $p$ divides $N(\alpha)$ given that $p$ is prime is at most
\[ \frac{6s^3 \cdot 8A \cdot 2^{s^3}}{2^{5s^3}} \leq 2^{-s^{3}}\]	
\end{proof}

\end{document}